%% file: example_paper.tex
\documentclass{article}

\usepackage{amsmath}

\input{chapters/pkg}

\conftitlerunning{Budgeted Active Experimentation for Treatment Effect Estimation from Observational and Randomized Data}

\begin{document}

\twocolumn[
  \conftitle{Budgeted Active Experimentation for Treatment Effect Estimation from Observational and Randomized Data}

\input{chapters/authors} 
  

  \confkeywords{Causal Inference, Active Learning, Treatment Effect Estimation}

  \vskip 0.3in
]

\printAffiliationsAndNotice{\confEqualContribution}

\begin{abstract}

Estimating heterogeneous treatment effects is central to data-driven decision-making, yet industrial applications often face a fundamental tension between limited randomized controlled trial (RCT) budgets and abundant but biased observational data collected under historical targeting policies. Although observational logs offer the advantage of scale, they inherently suffer from severe policy-induced imbalance and overlap violations, rendering standalone estimation unreliable. We propose a \textit{budgeted active experimentation} framework that iteratively enhances model training for causal effect estimation via active sampling. By leveraging observational priors, we develop an acquisition function targeting uplift estimation uncertainty, overlap deficits, and domain discrepancy to select the most informative units for randomized experiments. We establish finite-sample deviation bounds, asymptotic normality via martingale Central Limit Theorems (CLTs), and minimax lower bounds to prove information-theoretic optimality. Extensive experiments on industrial datasets demonstrate that our approach significantly outperforms standard randomized baselines in cost-constrained settings.
\end{abstract}

\input{chapters/intro}
\input{chapters/prelim}

\input{chapters/math}

\input{chapters/theory}
\input{chapters/exp}

\input{chapters/conclu}

\clearpage

\bibliography{example_paper}
\bibliographystyle{conf2026}
\input{chapters/appendix}

\end{document}

%% file: chapters/pkg.tex
\usepackage{microtype}
\usepackage{graphicx}
\usepackage{subcaption}
\usepackage{booktabs} 

\usepackage{hyperref}

\usepackage[preprint]{conf2026}

\usepackage{amsmath}
\usepackage{amssymb}
\usepackage{mathtools}
\usepackage{amsthm}

\usepackage[capitalize,noabbrev]{cleveref}

\theoremstyle{plain}
\newtheorem{theorem}{Theorem}[section]
\newtheorem{proposition}[theorem]{Proposition}
\newtheorem{lemma}[theorem]{Lemma}
\newtheorem{corollary}[theorem]{Corollary}
\theoremstyle{definition}

\newtheorem{assumption}[theorem]{Assumption}
\theoremstyle{remark}

\usepackage[textsize=tiny]{todonotes}

\usepackage[utf8]{inputenc}
\usepackage{booktabs}  
\usepackage{tabularx}             
\usepackage{pifont}   
\newcommand{\cmark}{\ding{51}}
\newcommand{\xmark}{\ding{55}}

\usepackage{makecell}
\usepackage{multirow}

%% file: chapters/authors.tex
\newcommand{\confsetaffiliationnumber}[2]{%
  \ifcsname the@affil#1\endcsname\else
    \newcounter{@affil#1}%
  \fi
  \setcounter{@affil#1}{#2}%
  \ifnum\value{@affiliationcounter}<#2\relax
    \setcounter{@affiliationcounter}{#2}%
  \fi
}
 \confsetsymbol{equal}{*}          
\confsetaffiliationnumber{fi}{1} 
\confsetaffiliationnumber{se}{2} 

  \begin{confauthorlist}
    \confauthor{Jiacan Gao}{equal,se}
    \confauthor{Xinyan Su}{equal,fi}
    \confauthor{Mingyuan Ma}{comp}
    \confauthor{Yiyan Huang}{dwq}
    \confauthor{Xiao Xu}{fi}
    \confauthor{Xinrui Wan}{fi}
    \confauthor{Tianqi Gu}{fi}
    \confauthor{Enyun Yu}{fi}
    \confauthor{Jiecheng Guo}{fi}
    \confauthor{Zhiheng Zhang}{shangcai1,shangcai2}
  \end{confauthorlist}

\confaffiliation{fi}{Didi Chuxing, Beijing, China}
\confaffiliation{se}{School of Statistics, East China Normal University, Shanghai, China}

  \confaffiliation{comp}{School of Mathematics and Statistics, Beijing Jiaotong University, Beijing, China}
  \confaffiliation{dwq}{School of Computing and Information Technology, Great Bay University‌, Guangdong, China}
  \confaffiliation{shangcai1}{School of Statistics and Data Science, Shanghai University
of Finance and Economics, Shanghai 200433, P.R. China}
  \confaffiliation{shangcai2}{Institute of Data Science and Statistics, Shanghai University of Finance and Economics,
Shanghai 200433, P.R. China}

  \confcorrespondingauthor{Zhiheng Zhang}{zhangzhiheng@mail.shufe.edu.cn}

%% file: chapters/intro.tex
\section{Introduction}

Estimating heterogeneous treatment effects (HTE), such as the conditional average treatment effect (CATE),
is a central problem in causal inference and data-driven decision making.
Accurate HTE estimation underpins personalized marketing, recommendation systems,
clinical decision support, and public policy design,
where decisions must be tailored to individual characteristics rather than population averages.
In large-scale real-world systems, however, HTE estimation is constrained by a fundamental asymmetry in data sources:
randomized controlled trials (RCTs) provide unbiased and reliable causal signals,
but are expensive, slow to deploy, and severely limited in sample size;
observational data (OBS), in contrast, are abundant and high-dimensional,
but are generated under historical targeting policies that induce selection bias,
policy-driven imbalance, and violations of overlap \citep{hatt2022combining, colnet2024causal}.

This asymmetry has motivated a growing literature on learning treatment effects from multiple data sources.
Classical work focuses on \emph{causal identification} from observational data under strong assumptions
such as ignorability and positivity,
while more recent approaches seek to \emph{combine} OBS and RCT data through reweighting or doubly robust estimators
\citep{cheng2021adaptive}.
In industrial settings, however, these approaches face a critical limitation:
historical policies are often near-deterministic,
creating regions of the covariate space where observational data provide essentially no counterfactual information.
In such regimes, directly using OBS outcomes for identification is potentially unreliable,
while running large-scale uniform RCTs is prohibitively costly.

This tension naturally shifts the focus from estimation to \emph{experiment design}.
Rather than asking how to extract causal effects from biased observational logs,
a more operational question is:
\emph{given abundant observational data and a strict budget for randomized experiments,
where should one run RCTs to most efficiently learn heterogeneous treatment effects?}
This question lies at the intersection of causal inference and active learning,
but differs fundamentally from classical active learning:
here, querying a unit does not merely reveal a label,
but requires actively assigning a treatment and observing a causal outcome.

Existing active learning approaches have explored this problem from two largely separate perspectives.
One line of work studies active or budgeted sampling \emph{within experimental settings}~\citep{zhangactive,kato2024active,ghadiri2023finite},
where treatments can be freely assigned but the RCT sample size is limited.
Another line focuses on active learning \emph{within observational studies}~\citep{wen2025enhancingtreatmenteffectestimation,gao2025causalepigpredictionorientedactivelearning},
where treatment assignments are fixed and the budget corresponds to labeling or outcome acquisition.
In contrast, the practically relevant regime where \emph{observational data and experimental design coexist}—
abundant biased logs alongside a small, adaptively collected RCT—
has received comparatively little theoretical and methodological attention.
In particular, it remains unclear how observational data should guide adaptive experiment design
while retaining finite-sample validity and statistical inference guarantees.

In this paper, we propose a \emph{budgeted active experimentation framework}
that addresses this gap by cleanly separating the roles of observational and randomized data.
The core idea is that \emph{observational data inform experiment design,
while randomized experiments provide causal estimation}.
Specifically, observational logs are used to learn a shared representation,
to diagnose overlap deficits induced by historical policies,
and to identify covariate regions under-represented in the current experimental sample.
These signals are combined into a multi-criteria acquisition function
that actively selects which units to enroll into RCTs.
The final estimator, however, is identified purely by randomized experiments,
ensuring robustness to arbitrary confounding in the observational source.

This design induces a nontrivial statistical regime:
the experimental data are adaptively collected, non-i.i.d., and depend on past outcomes.
We show that, despite adaptivity,
randomization induces a martingale structure that protects unbiasedness,
while active sampling shapes the information matrix and governs statistical efficiency.
Perhaps counterintuitively, we prove that even the most aggressive active experimentation strategy
cannot beat a $\sqrt{d/B}$ rate in general,
where $d$ is the effective dimension of heterogeneity and $B$ is the RCT budget.
Active learning does not create information out of thin air,
but it can substantially improve constants by repairing overlap and stabilizing the design. Our main contributions are summarized as follows:

\textbf{(i)} We propose a principled framework for budgeted active experimentation
that integrates observational and randomized data by separating experiment design from causal identification.

\textbf{(ii)} We provide a comprehensive theoretical analysis for adaptively collected RCT data,
including finite-sample deviation bounds, asymptotic normality via martingale CLTs,
and minimax lower bounds establishing near-optimality.

\textbf{(iii)} We clarify of the precise role of observational data in causal learning:
OBS improves \emph{where} to randomize, not \emph{what} is identified,
yielding a robust and interpretable path to cost-efficient HTE estimation.

The remainder of the paper is organized as follows.
Section~\ref{sec:related} reviews related work on active learning and treatment effect estimation.
Section~\ref{sec:problem} formalizes the two-source causal learning problem.
Section~\ref{sec:method} presents the proposed active experimentation algorithm.
Section~\ref{sec:theory} develops the theoretical guarantees under adaptive, non-i.i.d.\ experimentation.
Section~\ref{sec:experiment} reports empirical results on large-scale real-world data.
Section~\ref{sec:conclusion} concludes with implications and future directions.

%% file: chapters/prelim.tex
\begin{figure}[t]
    \centering
    \includegraphics[width=0.98\linewidth]{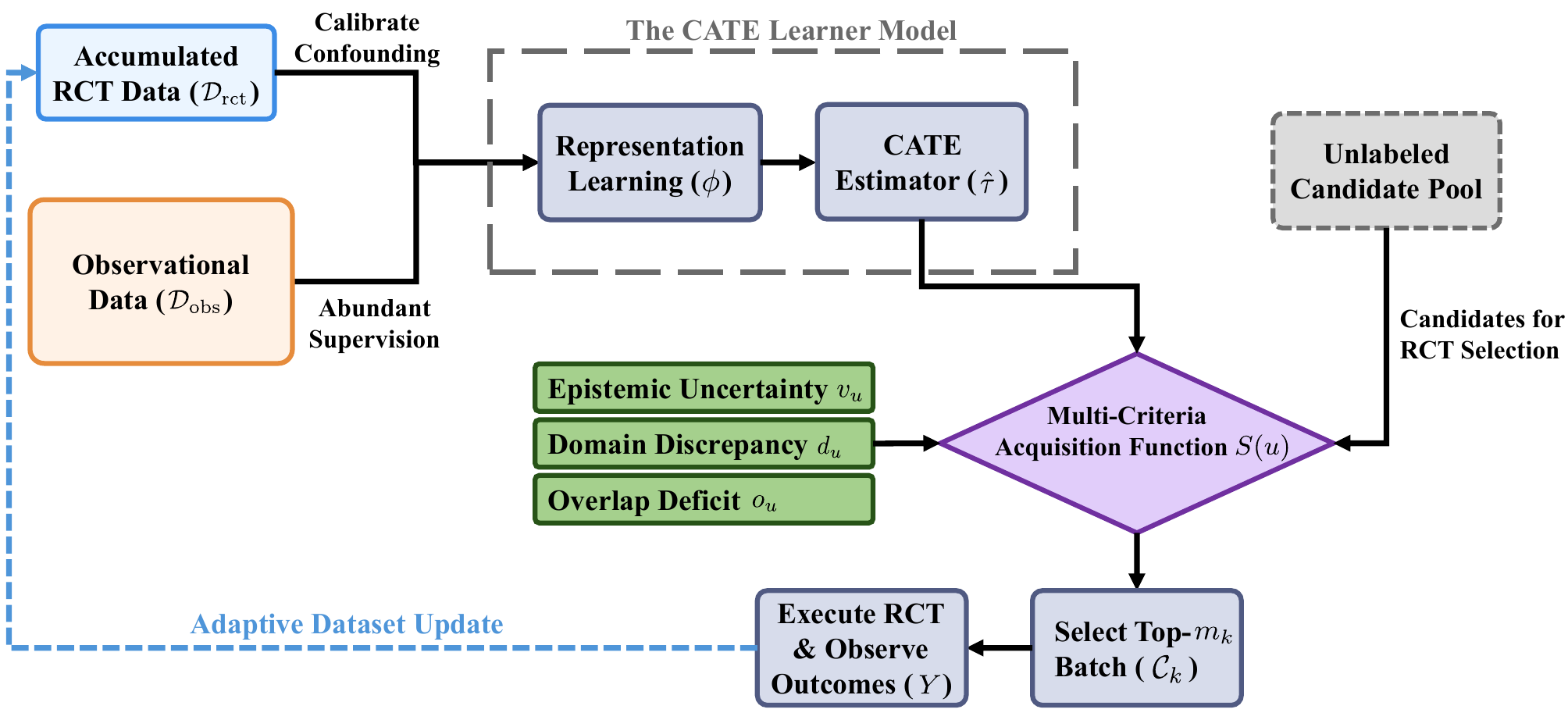}
    \caption{\textbf{Overview of the proposed Budgeted Active Experimentation framework for OBS-RCT fusion.}  We first train a CATE learner using abundant observational logs, then iteratively select a small batch of units from an unlabeled candidate pool $\mathcal{D}_{\mathrm{pool}}$ for randomized experiments under a fixed budget. At each iteration, a multi-criteria acquisition function $S(u)$ scores and ranks candidates by balancing three signals: \textbf{epistemic uncertainty} ($v_u$), \textbf{domain discrepancy} ($d_u$), and \textbf{overlap deficit} ($o_u$). We select the top-ranked units to run RCTs, add the new outcomes to the labeled set, and update the learner to reduce CATE estimation error.
}
    \label{fig:framework}
\end{figure}

\section{Problem Formulation}\label{sec:problem}
\textbf{Notation.}
Uppercase letters denote random variables and lowercase denote their realizations.
Let $X\in\mathcal{X}\subseteq\mathbb{R}^d$ be covariates, $T\in\{0,1\}$ a binary treatment,
and $Y\in\{0,1\}$ a binary outcome (the formulation extends to any bounded $Y\in[0,1]$).
For any distribution $\mathcal{Q}$, we write $\mathbb{E}_{\mathcal{Q}}[\cdot]$ and $\mathbb{P}_{\mathcal{Q}}(\cdot)$
for expectation and probability under $\mathcal{Q}$.
We adopt the Neyman--Rubin potential outcomes framework~\citep{rubin2005causal}.
Each unit has potential outcomes $Y(1)$ and $Y(0)$ under treatment and control, respectively.
The observed outcome follows $Y = Y(T)$.

\begin{assumption}{(SUTVA).}\label{assump:sutva}
(i) \emph{No interference}: a unit's potential outcomes are unaffected by other units' assignments.
(ii) \emph{Consistency}: $Y = T\,Y(1) + (1-T)\,Y(0)$.
\end{assumption}

For the \emph{target estimand and evaluation metric}, our estimand is the \textbf{conditional average treatment effect (CATE)} \citep{shalit2017estimating, zhong2022descn, gao2025causalepigpredictionorientedactivelearning}:
\begin{equation}
	\label{eq:2}
	\tau(x) \;=\; \mathbb{E}\!\left[\, Y(1) - Y(0) \,\middle|\, X=x \right], \quad x\in\mathcal{X}.
\end{equation}
Define $\mu_t(x)\triangleq \mathbb{E}[Y(t)\mid X=x]$ for $t\in\{0,1\}$, so that $\tau(x)=\mu_1(x)-\mu_0(x)$.
We evaluate generalization over a \emph{target population} whose covariate marginal distribution is $\mathbb P_X$.
Given an estimator $\hat{\tau}$, we use the PEHE-style risk \citep{Hill01012011, wen2025enhancingtreatmenteffectestimation, gao2025causalepigpredictionorientedactivelearning}:
\begin{equation}
	\label{eq:targetrisk}
	\mathcal{R}(\hat{\tau}) \triangleq 
	\sqrt{\mathbb{E}_{X\sim \mathbb{P}_X}\!\left[(\hat{\tau}(X)-\tau(X))^2\right]}.
\end{equation}

We have access to two-source data: \ \
(i) \textbf{Observational log (OBS).}
	An observational dataset
	$\mathcal{D}_{\mathrm{obs}}=\{(X_i^{\mathrm{obs}},T_i^{\mathrm{obs}},Y_i^{\mathrm{obs}})\}_{i=1}^{n_{\mathrm{obs}}}
	\overset{\mathrm{i.i.d.}}{\sim}\mathbb{P}_{\mathrm{obs}}$,
	collected under a historical (possibly non-random) assignment policy.
	Let the induced observational propensity be
	$e_{\mathrm{obs}}(x)\triangleq \mathbb{P}_{\mathrm{obs}}(T^{\mathrm{obs}}=1\mid X^{\mathrm{obs}}=x),$
	which may be highly imbalanced and even (nearly) deterministic in some regions. We write $\mathbb P_{\mathrm{obs}}$ as shorthand for $\mathbb P_{\mathbb P_{\mathrm{obs}}}$.\\
	(ii) \textbf{Unlabeled candidate pool (RCT-POOL).}
	An unlabeled pool $\mathcal{D}_{\mathrm{pool}}=\{X_j^{\mathrm{pool}}\}_{j=1}^{n_{\mathrm{pool}}}
	\overset{\mathrm{i.i.d.}}{\sim}\mathbb P_X$,
	from which we may \emph{query} a small subset of units to run randomized experiments.
	Units in $\mathcal{D}_{\mathrm{pool}}$ receive no treatment unless queried. 

Because OBS treatments follow a historical policy, global positivity in $\mathcal{D}_{\mathrm{obs}}$
(i.e., $0<e_{\mathrm{obs}}(x)<1$ for all $x$) may fail.
We therefore \emph{do not} assume global overlap for OBS; instead, we will use randomized experiments on queried units
to obtain counterfactual information in weak(non)-overlap regions.

Moreover, we view $\mathcal D_{\mathrm{pool}}=\{X^{\mathrm{pool}}_j\}_{j=1}^{n_{\mathrm{pool}}}$ as an unlabeled sample from the \emph{target} covariate marginal $\mathbb P_X$, whereas $\mathcal D_{\mathrm{obs}}=\{(X^{\mathrm{obs}}_i,T^{\mathrm{obs}}_i,Y^{\mathrm{obs}}_i)\}_{i=1}^{n_{\mathrm{obs}}}$ is drawn from a different joint law $\mathbb P_{\mathrm{obs}}$ induced by a historical (possibly near-deterministic) targeting policy. Accordingly, the covariate marginals need not match: in general $\mathbb P_{X^{\mathrm{obs}}}\neq \mathbb P_X$ and $\mathrm{supp}(\mathbb P_{X^{\mathrm{obs}}})\subseteq \mathrm{supp}(\mathbb P_X)$, reflecting selection bias and potential overlap violations in the observational log. Our design therefore treats $\mathcal D_{\mathrm{pool}}$ as the population whose risk is evaluated, while using $\mathcal D_{\mathrm{obs}}$ only to guide where to randomize (and not for causal identification in weak-overlap regions).

\textbf{Active Experimentation Protocol (Adaptive RCT).}
We sequentially run experiments for $K$ rounds (or until the budget is exhausted).
Let $\mathcal{D}_{\mathrm{rct}}^{(0)}=\emptyset$ initially.
At round $k$, an adaptive strategy selects an index set $\mathcal{C}_k\subseteq[n_{\mathrm{pool}}]$ from the remaining pool.
For each selected unit with covariate $X$, we assign treatment
$T^{\mathrm{rct}} \sim \mathrm{Bern}\!\big(p_k(X)\big), 
	\qquad p_k:\mathcal{X}\to[f_{\min},f_{\max}],$
where the randomization probability $p_k(\cdot)$ is \emph{known} and satisfies $0<f_{\min}\le f_{\max}<1$.
We then observe $Y^{\mathrm{rct}}$ and record the quadruple $(X,T^{\mathrm{rct}},Y^{\mathrm{rct}},p_k(X))$.
The experimental dataset is updated as
\begin{equation}\label{update_procedure}
	\mathcal{D}_{\mathrm{rct}}^{(k)} \leftarrow \mathcal{D}_{\mathrm{rct}}^{(k-1)} \cup 
	\{(X_j^{\mathrm{pool}},T_j^{\mathrm{rct}},Y_j^{\mathrm{rct}},p_k(X_j^{\mathrm{pool}})):\, j\in \mathcal{C}_k\}.
\end{equation}
Since $\mathcal{C}_k$ and $p_k(\cdot)$ may depend on the history, $\mathcal{D}_{\mathrm{rct}}^{(k)}$ is adaptively collected and thus non-i.i.d.

\begin{assumption}{(Randomization and positivity in RCT).}\label{assump:rct-rand}
Conditioned on the history up to round $k$ and covariates $X$, the RCT assignment is randomized:
$T^{\mathrm{rct}} \perp (Y(0),Y(1)) \mid (X,\text{history})$ and
$\mathbb{P}(T^{\mathrm{rct}}=1\mid X,\text{history}) = p_k(X)\in[f_{\min},f_{\max}]$.
\end{assumption}

\begin{assumption}{(Outcome invariance across sources).}\label{assump:invariance}
The conditional potential outcome distributions are shared across sources:
for each $t\in\{0,1\}$ and $x\in\mathcal{X}$,
\[
\mathbb P_{\mathrm{obs}}\big(Y(t)\mid X=x\big)={\mathbb P_X}\big(Y(t)\mid X=x\big).
\]
Equivalently, the target CATE $\tau(x)$ is common to both sources.
\end{assumption}
\begin{assumption}{(Ignorability for OBS; used when leveraging OBS causally).}\label{assump:obs-ign}
When we use OBS for causal identification (e.g., via propensity-based corrections),
we assume \emph{conditional ignorability}:
$(Y(0),Y(1)) \perp T^{\mathrm{obs}} \mid X^{\mathrm{obs}}$ under $\mathbb{P}_{\mathrm{obs}}$.
\end{assumption}

\textbf{Objective.}
Let $\pi$ denote an \emph{adaptive experiment design} that generates the sequence $\{(\mathcal{C}_k,p_k)\}_{k\ge1}$
based on the observed history.
Given a total experimental budget $B$ (the number of queried units), we aim to design $\pi$ and a corresponding estimator
$\hat{\tau}_{\pi}$ trained using both $\mathcal{D}_{\mathrm{obs}}$ and $\mathcal{D}_{\mathrm{rct}}^{(K)}$ such that the target risk~\eqref{eq:targetrisk} is minimized:
\begin{equation}
	\label{obj}	
	\min_{\pi}\; \mathcal{R}\big(\hat{\tau}_{\pi}\big)
	\quad\text{s.t.}\quad \sum_{k=1}^{K} |\mathcal{C}_k| \le B.
\end{equation}
Equivalently, one may consider the sample complexity $N_{\mathrm{RCT}}(\varepsilon)$:
the minimum budget required to achieve $\mathcal{R}(\hat{\tau}_{\pi})\le \varepsilon$ with high probability.

Assumptions~\ref{assump:sutva}--\ref{assump:obs-ign} are standard in causal learning but play distinct roles here:
(i) SUTVA ensures a well-defined unit-level causal model;
(ii) Assumption~\ref{assump:rct-rand} guarantees unbiased identification from queried RCT samples and prevents extreme-variance estimators via $[f_{\min},f_{\max}]$;
(iii) Assumption~\ref{assump:invariance} enables pooling information across sources (diagnosable by comparing covariate distributions and outcome model residuals across domains);
(iv) Assumption~\ref{assump:obs-ign} is only needed if OBS is used for causal correction---if it is questionable, OBS can be used more conservatively (e.g., for representation learning / warm-starting),
while identification and calibration rely primarily on the randomized samples.

%% file: chapters/math.tex
\section{Methodology}\label{sec:method}

\begin{algorithm}[tb]
   \caption{Active Sampling for OBS-RCT Fusion}
   \label{alg:active_fusion}
\begin{algorithmic}[1]
    \REQUIRE Observational Data $\mathcal{D}_{\mathrm{obs}}$, Unlabeled Pool $\mathcal{D}_{\mathrm{pool}}$, Max Query Batch Size $M$, Budget $B$, Max Rounds $K$.
    \ENSURE Final CATE Estimator $\hat{\tau}_{\pi}$.
    \STATE {\bfseries Initialize:} $\mathcal{D}_{\mathrm{rct}} \leftarrow \emptyset$, $k \leftarrow 1$.
    \STATE {\bfseries Representation Learning:} Train feature encoder $\phi: \mathcal{X} \to \mathcal{H}$ on $\mathcal{D}_{\mathrm{obs}} \cup \mathcal{D}_{\mathrm{pool}}$.
   \WHILE{$|\mathcal{D}_{\mathrm{rct}}| < B$ and $k \leq K$}
       \STATE $m_k \leftarrow \min(M, B - |\mathcal{D}_{\mathrm{rct}}|)$ 
       
       \STATE \textbf{1. Update Scoring Components:}
        \STATE \quad Train domain classifier $g_\xi(\phi(x))$ to discriminate $\mathcal{D}_{\mathrm{pool}}$ (label 1) from $\mathcal{D}_{\mathrm{obs}} \cup \mathcal{D}_{\mathrm{rct}}$ (label 0).
        \STATE \quad Update CATE predictions $\mathcal{M}=\{\hat{\tau}_j(x)\}_{j=1}^E$. \hfill \COMMENT{$E$: \# of stochastic passes}
        
        \STATE \quad Estimate propensity $\hat{e}_{\mathrm{obs}}(\phi(x)) \approx \mathbb{P}_{\mathrm{obs}}(T^{\mathrm{obs}}=1 \mid \phi(x))$
        
        \STATE \textbf{2. Acquisition Scoring:}
       \FOR{candidate $u \in \mathcal{D}_{\mathrm{pool}}$}
           \STATE $v_u \leftarrow \mathrm{Var}\left( \{ \hat{\tau}_j(\phi(u)) \}_{j=1}^E \right)$ \hfill 
           \STATE $d_u \leftarrow \sigma (g_\xi(\phi(u)))$ \hfill 
           \COMMENT{$\sigma(\cdot)$: Sigmoid Function}
           \STATE $o_u \leftarrow 2 \cdot|\hat{e}_{\mathrm{obs}}(\phi(u))-0.5|$ \hfill 
           \STATE Compute Score: \hfill \COMMENT{$\eta(\cdot)$: Rank Function}
           \STATE $S(u) \leftarrow \alpha \cdot \eta(v_u) + \beta \cdot \eta(d_u) + \gamma \cdot \eta(o_u)$ \hfill
       \ENDFOR

       \STATE \textbf{3. Selection \& Experimentation:}
       \STATE Select candidates $\mathcal{C}_k \leftarrow \text{Top-}m_k(S(\cdot), \mathcal{D}_{\mathrm{pool}})$.
       \FOR{$X_i \in \mathcal{C}_k$}
           \STATE Perform RCT: Assign $T_i^{\mathrm{rct}} \sim \mathrm{Bern}(p_k(X_i))$, observe $Y_i^{\mathrm{rct}}$.
            \STATE $\mathcal{D}_{\mathrm{rct}} \leftarrow \mathcal{D}_{\mathrm{rct}} \cup \{(X_i, T_i^{\mathrm{rct}}, Y_i^{\mathrm{rct}}, p_k(X_i))\}$.
       \ENDFOR

       \STATE \textbf{4. Update:} $\mathcal{D}_{\mathrm{pool}} \leftarrow \mathcal{D}_{\mathrm{pool}} \setminus \mathcal{C}_k$; \quad $k \leftarrow k + 1$.
   \ENDWHILE

   \STATE \textbf{Return} $\hat{\tau}_{\pi}$ trained on $\mathcal{D}_{\mathrm{obs}} \cup \mathcal{D}_{\mathrm{rct}}$.
\end{algorithmic}
\end{algorithm}

We study a two-source setting where (i) a large observational log $\mathcal{D}_{\mathrm{obs}}$ provides abundant but potentially biased and weak-overlap supervision, and
(ii) an unlabeled pool $\mathcal{D}_{\mathrm{pool}}$ represents the target population from which we may enroll a small number of units into randomized experiments.
Our objective is to leverage a limited experimental budget $B$ to acquire an adaptively designed RCT dataset $\mathcal{D}_{\mathrm{rct}}$, and then learn a final CATE estimator
$\hat{\tau}_\pi$ from the augmented data $\mathcal{D}_{\mathrm{obs}}\cup\mathcal{D}_{\mathrm{rct}}$.

The key difficulty is \emph{where} to run RCTs. Running experiments uniformly at random wastes budget on regions where OBS already provides reliable information,
while failing to repair regions where OBS is systematically uninformative (e.g., extreme propensities) or where the learned model extrapolates.
We therefore cast experiment design as a \emph{pool-based active learning} problem for causal effect estimation:
each queried unit provides \emph{both} a randomized treatment assignment and an outcome label, which can directly reduce CATE error.

Algorithm~\ref{alg:active_fusion} summarizes our procedure. Below we explain the intuition behind each module, focusing on three canonical
sources of CATE estimation error that are particularly severe in OBS--RCT fusion:\\
    \textbf{(i) Epistemic uncertainty} of the CATE model due to finite data and high-dimensional covariates;\\
    \textbf{(ii) Domain discrepancy} between the biased OBS log and the actively collected RCT sample (and thus the target pool);\\
    \textbf{(iii) Overlap deficit} in OBS, where near-deterministic historical targeting makes counterfactual information essentially absent.\\

Each error source is addressed by an explicit design choice: covariate shift is handled by querying from the unlabeled pool $\mathcal{D}_{\mathrm{pool}}$ to restore coverage of the target marginal $\mathbb{P}_X$; weak or missing overlap is mitigated by targeted randomization rather than reweighting observational outcomes; and adaptivity bias is controlled by design-based randomization guarantees. Here, \emph{uncertainty} refers not only to estimation variance conditional on a fixed design, but also to design-induced uncertainty arising from which regions of the covariate space receive randomized evidence. 
Our active strategy reduces this uncertainty by allocating experimental budget to covariate regions with the largest marginal contribution to CATE risk, yielding tighter error bounds for a fixed RCT cost. As a result, uncertainty is actively shaped—and provably reduced—by the experimental design, rather than passively inherited from OBS–RCT fusion. Our acquisition score explicitly combines proxies for these three error sources, so that each RCT query yields maximal marginal value toward reducing the target PEHE risk.


To mitigate instability in high-dimensional industrial settings, we employ a shared Multilayer Perceptron (MLP) encoder, defined as $\phi: \mathcal{X} \to \mathcal{H}$, to map raw covariates into a dense latent representation. This shared backbone serves as a common foundation for (i) propensity estimation, (ii) CATE modeling, and (iii) domain discrimination, so that all components operate in a unified lower-dimensional space.

\textbf{$v_u$: Backbone CATE learner and ensemble for uncertainty.}
Since single causal models typically underestimate epistemic uncertainty~\citep{NEURIPS2022_675e371e}, we rely on ensemble disagreement to quantify the lack of knowledge. While such disagreement can be captured via Deep Ensembles~\citep{lakshminarayanan2017simple}, we employ Monte Carlo (MC) Dropout~\citep{gal2016dropout} for computational efficiency. We perform $E$ stochastic forward passes to obtain a set of CATE predictions $\{\hat{\tau}_j\}_{j=1}^E$. The uncertainty score is defined as the variance of these estimates:
\begin{equation}
    v_u \triangleq \text{Var}\left( \{ \hat{\tau}_j(\phi(u)) \}_{j=1}^E \right)
\end{equation}
A large $v_u$ implies significant disagreement among the stochastic sub-models, indicating that querying $u$ yields high information gain by reducing epistemic uncertainty.

\textbf{$d_u$: Domain discrimination for representativeness and distributional alignment.} To prevent the active set from drifting solely towards decision boundaries and to mitigate the selection bias inherent in $\mathcal{D}_{\text{obs}}$, we explicitly enforce distributional alignment. We employ a domain classifier $g_\xi$ operating on the representation space, trained to distinguish the target pool $\mathcal{D}_{\text{pool}}$ from the current training set $\mathcal{D}_{\text{current}}=\mathcal{D}_{\text{obs}} \cup \mathcal{D}_{\text{rct}} $. The score is defined as the predicted probability of belonging to the target pool (label 1):
\begin{equation} 
    d_u \triangleq \mathbb{P}(\text{domain}=1 \mid \phi(u); \xi) = \sigma(g_\xi(\phi(u)))
\end{equation}
where $g_\xi(\cdot)$ denotes the logit output and $\sigma(\cdot)$ is the sigmoid function. Theoretically, $g_\xi$ acts as a density-ratio estimator where the logit approximates $\log(p_{\text{pool}}(\phi) / p_{\text{current}}(\phi))$. Consequently, prioritizing high $d_u$ targets samples in regions under-represented by the current source data, minimizing covariate shift and ensuring robust generalization.

\textbf{$o_u$: Propensity-based overlap deficit from OBS.}
To mitigate estimator instability caused by extreme observational propensities (positivity violations), we explicitly target regions lacking common support. We utilize a propensity model $\hat{e}_{\mathrm{obs}}(\phi(u))\;\approx\;\mathbb{P}_{\mathrm{obs}}(T^{\mathrm{obs}}=1\mid \phi(u))$, pre-trained on $\mathcal{D}_{\text{obs}}$, to define the deficit score:
$
    o_u \triangleq 2 \cdot |\hat{e}_{\mathrm{obs}}(\phi(u)) - 0.5|.
$
Samples with high $o_u$ correspond to regions where the historical policy was deterministic. Prioritizing these instances for RCT labeling effectively ``repairs'' the overlap deficit by injecting counterfactual information exactly where the observational signal is weakest. 


\textbf{Rank-Normalized Multi-Criteria Acquisition Function. } The three signals $(v_u, d_u, o_u)$ can have very different scales and may change over rounds.
Instead of brittle scale-dependent normalization, we use a simple \emph{rank-based} map $\eta(\cdot)$,
computed over the current pool:
$
    \eta(a_u)
    \triangleq
    \frac{1}{|\mathcal{D}_{\mathrm{pool}}|}\sum_{u'\in\mathcal{D}_{\mathrm{pool}}}
    \mathbb{I}\!\left(\psi(a_{u'}) \le \psi(a_u)\right)
    \in[0,1],
$
where $\psi(\cdot)$ is a monotone transform used for ranking.
Since the constructed signals $v_u$, $d_u$, and $o_u$ are all non-negative and positively correlated with the informativeness of a sample, we simply use identity transform $\psi(a)=a$ for ranking.
For each pool candidate $u$, we compute acquisition score:
\begin{equation}
    \label{eq:acq_score}
    S(u)
    \;\triangleq\;
    \alpha\cdot \eta(v_u)
    \;+\;
    \beta\cdot \eta(d_u)
    \;+\;
    \gamma\cdot \eta(o_u),
\end{equation}
and select the top-$m_k$ candidates. $\alpha, \beta$, and $\gamma$ balance the trade-off between uncertainty , discrepancy, and counterfactual coverage. To ensure scale consistency across these heterogeneous metrics, we apply rank normalization prior to weighted aggregation.

\subsection{Adaptive RCT execution and dataset update}\label{sec:method:rct}
After selecting $\mathcal{C}_k$, we conduct randomized experiments for each $X_i\in \mathcal{C}_k$.
We allow a covariate-dependent randomization policy $f_k(\cdot)$ (e.g., to incorporate operational constraints),
and enforce positivity by clipping: $p_i \;\triangleq\; \mathrm{Clip}(f_k(X_i), f_{\min}, f_{\max}),$
where $\mathrm{Clip}(z,a,b)\triangleq \min\{\max\{z,a\},b\}$. We then draw $T_i\sim\mathrm{Bern}(p_i)$, observe $Y_i$, and update as in~\eqref{update_procedure}. Storing $p_i$ is essential for downstream learning/inference when randomization probabilities are not constant.
Finally, we remove $\mathcal{C}_k$ from $\mathcal{D}_{\mathrm{pool}}$ and repeat until the budget is exhausted. We reduce the selection of optimal $p$ in Proposition~\ref {prop:optimal_p}.

%% file: chapters/theory.tex
\section{Theoretical Analysis}\label{sec:theory}

This section provides theoretical guarantees for the \emph{estimation component} of Algorithm~\ref{alg:active_fusion}
under the adaptively collected RCT data.
Our goal is to make three claims precise: (i) \textbf{Finite-sample validity}: an (near-) unbiased effect estimator and a non-asymptotic error bound that remains valid under \emph{adaptive, non-i.i.d.} RCT sampling. (ii) \textbf{Asymptotic inference}: a central limit theorem (CLT) showing asymptotic normality enabling confidence intervals. (iii) \textbf{Fundamental limits}: a minimax lower bound showing that the achieved rate is information-theoretically optimal (up to logs). A key theme is that {randomization protects unbiasedness, while active selection shapes the information matrix and hence the error rate.}
We first formalize an analyzable estimator that matches the RCT protocol in Algorithm~\ref{alg:active_fusion} and
is standard in modern causal ML: regress an \emph{orthogonalized pseudo-outcome} onto a representation.

\textbf{A tractable estimator for adaptively collected RCT samples} Let $B$ denote the total RCT budget and index queried units in chronological order $t=1,\dots,B$.
Write the $t$-th queried sample as $(X_t,T_t,Y_t,p_t)$ where $p_t\in[f_{\min},f_{\max}]$ is the (known) assignment probability
used when unit $t$ was experimented. Let $\{\mathcal{F}_{t}\}_{t\ge 0}$ be the filtration where $\mathcal{F}_{t}$ contains
$\mathcal{D}_{\mathrm{obs}}$, the full covariate pool $\mathcal{D}_{\mathrm{pool}}$, and all RCT data up to time $t$.
The adaptive querying policy (via $\mathcal{C}_k$, $f_k$) implies that $X_t$ and $p_t$ may depend on $\mathcal{F}_{t-1}$. We define the RCT pseudo-outcome
$\widetilde{Y}_t
    \;\triangleq\;
    \frac{T_t\,Y_t}{p_t}
    \;-\;
    \frac{(1-T_t)\,Y_t}{1-p_t}.$
Intuitively, $\widetilde{Y}_t$ converts each single randomized trial into a noisy but unbiased ``label'' of the treatment effect at $X_t$.

Let $\phi(x)\in\mathbb{R}^d$ be a fixed feature map (e.g., the learned embedding $\phi$ in Algorithm~\ref{alg:active_fusion}).
For theory, we analyze a realizable linear CATE model in this representation:
$\tau(x) = \langle \theta_\star, \phi(x)\rangle,~\theta_\star\in\mathbb{R}^d.$
This assumption is a standard and useful ``microscope'': it isolates the statistical difficulty caused by
\emph{adaptive designs + small budgets}, while keeping the analysis transparent.
(When $\tau(\cdot)$ is not exactly linear, the bounds below become bounds on the best linear approximation plus an approximation error term.) Given $\{(X_t,T_t,Y_t,p_t)\}_{t=1}^B$, we estimate $\theta_\star$ by (optionally regularized) least squares on pseudo-outcomes:
\begin{equation}
    \label{eq:ridge}
    \widehat{\theta}_\lambda
    \;\triangleq\;
    \arg\min_{\theta\in\mathbb{R}^d}\;
    \sum_{t=1}^{B}\Big(\widetilde{Y}_t-\langle \theta,\phi(X_t)\rangle\Big)^2
    \;+\;\lambda\|\theta\|_2^2,
\end{equation}
and output $\widehat{\tau}_\lambda(x)\triangleq \langle \widehat{\theta}_\lambda,\phi(x)\rangle$.

\begin{algorithm}[t]
\caption{Orthogonalized Linear CATE Estimation on Adaptive RCT Data}
\label{alg:theory_est}
\begin{algorithmic}[1]
\STATE \textbf{Input:} Adaptive RCT data $\{(X_t,T_t,Y_t,p_t)\}_{t=1}^B$, feature map $\phi(\cdot)$, ridge $\lambda\ge 0$.
\STATE \textbf{Output:} CATE estimator $\widehat{\tau}_\lambda(\cdot)$.
\FOR{$t=1,\dots,B$}
    \STATE Compute pseudo-outcome $\widetilde{Y}_t \leftarrow \frac{T_t Y_t}{p_t}-\frac{(1-T_t)Y_t}{1-p_t}$.
\ENDFOR
\STATE Form $V_\lambda \leftarrow \lambda I_d + \sum_{t=1}^{B}\phi(X_t)\phi(X_t)^\top$ and $b\leftarrow \sum_{t=1}^{B}\phi(X_t)\widetilde{Y}_t$.
\STATE Solve $\widehat{\theta}_\lambda \leftarrow V_\lambda^{-1}b$.
\STATE Return $\widehat{\tau}_\lambda(x)\leftarrow \langle \widehat{\theta}_\lambda,\phi(x)\rangle$.
\end{algorithmic}
\end{algorithm}

Algorithm~\ref{alg:active_fusion} specifies \emph{how the $X_t$'s are chosen} (active sampling) and
\emph{how $p_t$ is assigned} (bounded randomization).
Algorithm~\ref{alg:theory_est} specifies a transparent estimator for analyzing the statistical consequences of that adaptive design.
In practice, one may replace the linear regressor by a neural CATE learner; the key objects in the proofs below are
(i) the unbiased pseudo-outcome, and (ii) an information matrix that quantifies how well the queried points cover the representation space.

\paragraph{Role of observational data in the estimator.}
It is important to clarify that the causal identification of $\tau(x)$ is carried exclusively by randomized experiments.
Observational data are \emph{not} directly used in the estimating equations.
Instead, $\mathcal D_{\mathrm{obs}}$ enters the procedure in two structural ways:
(i) it defines the representation $\phi(\cdot)$ in which the CATE is approximated,
and (ii) it shapes the adaptive experiment design that determines the queried covariates $\{X_t\}_{t=1}^B$.
All theoretical guarantees are therefore \emph{design-adaptive but identification-robust}:
they remain valid even when the observational assignment mechanism is arbitrarily confounded.

\subsection{Unbiasedness and finite-sample error bounds}
\label{sec:theory:finite}

We now state the first main result: despite \emph{adaptive} (non-i.i.d.) selection of $X_t$,
the RCT pseudo-outcome remains unbiased, which implies unbiasedness of the linear estimator (for $\lambda=0$)
and a sharp finite-sample deviation bound (for any $\lambda\ge 0$).

\begin{assumption}{(Predictable adaptive design).}
\label{assump:predictable}
For each $t$, the queried covariate $X_t$ and assignment probability $p_t$ are $\mathcal{F}_{t-1}$-measurable.
\end{assumption}

\begin{assumption}{(RCT randomization and boundedness).}
\label{assump:rct_bound}
Conditioned on $(X_t,p_t,\mathcal{F}_{t-1})$, treatment is randomized as
$T_t \sim \mathrm{Bern}(p_t)$ and $T_t \perp (Y_t(0),Y_t(1)) \mid (X_t,p_t,\mathcal{F}_{t-1})$.
Outcomes are bounded: $Y_t(0),Y_t(1)\in[0,1]$ almost surely, and $p_t\in[f_{\min},f_{\max}]$ with $0<f_{\min}\le f_{\max}<1$.
\end{assumption}

\begin{assumption}{(Linear realizability in representation space).}
\label{assump:linear}
There exists $\theta_\star\in\mathbb{R}^d$ such that $\tau(x)=\langle \theta_\star,\phi(x)\rangle$ for all $x$.
Moreover, $\|\theta_\star\|_2\le S$ and $\|\phi(x)\|_2\le L$ for all $x$.
\end{assumption}

\begin{lemma}[Unbiased pseudo-outcome under adaptive sampling]
\label{lem:unbiased_pseudo}
Under Assumptions~\ref{assump:predictable}--\ref{assump:rct_bound},
\begin{equation}
    \label{eq:unbiased_pseudo}
    \mathbb{E}\!\left[\widetilde{Y}_t \mid X_t,p_t,\mathcal{F}_{t-1}\right] \;=\; \tau(X_t),
    \qquad t=1,\dots,B.
\end{equation}
$   |\widetilde{Y}_t|
    \;\le\;
    \max\!\left\{\frac{1}{f_{\min}},\,\frac{1}{1-f_{\max}}\right\}
    \;\triangleq\; L_p.$
\end{lemma}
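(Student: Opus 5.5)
The plan is a direct conditional computation on the pseudo-outcome, done in two parts: unbiasedness first, then the almost-sure bound.

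\emph{Step 1: rewrite the pseudo-outcome.} By consistency (Assumption~\ref{assump:sutva}(ii)) we have $Y_t=T_tY_t(1)+(1-T_t)Y_t(0)$. Since $T_t\in\{0,1\}$, we get $T_tY_t=T_tY_t(1)$ and $(1-T_t)Y_t=(1-T_t)Y_t(0)$. Hence
\[
\widetilde Y_t=\frac{T_t}{p_t}Y_t(1)-\frac{1-T_t}{1-p_t}Y_t(0).
\]
Let $\mathcal{G}_t\triangleq\sigma(X_t,p_t,\mathcal{F}_{t-1})$. By Assumption~\ref{assump:predictable}, $p_t$ is $\mathcal{G}_t$-measurable (indeed $\mathcal{F}_{t-1}$-measurable). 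This lets us pull $1/p_t$ and $1/(1-p_t)$ outside conditional expectations given $\mathcal{G}_t$. These factors are finite because $p_t\in[f_{\min},f_{\max}]\subset(0,1)$.

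\emph{Step 2: factor the conditional expectation.} By the conditional independence in Assumption~\ref{assump:rct_bound},
\[
\mathbb{E}[T_tY_t(1)\mid\mathcal{G}_t]=\mathbb{E}[T_t\mid\mathcal{G}_t]\,\mathbb{E}[Y_t(1)\mid\mathcal{G}_t]=p_t\,\mathbb{E}[Y_t(1)\mid\mathcal{G}_t].
\]
Symmetrically, $\mathbb{E}[(1-T_t)Y_t(0)\mid\mathcal{G}_t]=(1-p_t)\,\mathbb{E}[Y_t(0)\mid\mathcal{G}_t]$. The $p_t$ factors cancel, leaving
\[
\mathbb{E}[\widetilde Y_t\mid\mathcal{G}_t]=\mathbb{E}[Y_t(1)-Y_t(0)\mid\mathcal{G}_t].
\]

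\emph{Step 3 (main obstacle): reduce to $\tau(X_t)$.} We must show $\mathbb{E}[Y_t(1)-Y_t(0)\mid X_t,p_t,\mathcal{F}_{t-1}]=\tau(X_t)$. This needs the potential outcomes of the queried unit to be independent of the history and of the selection event, given its covariate. I would justify it as follows. Units are drawn i.i.d.\ from $\mathbb{P}_X$ with potential outcomes generated independently across units, by SUTVA and the no-interference part of Assumption~\ref{assump:sutva}(i). The selection rule and $p_t$ depend only on $\mathcal{F}_{t-1}$, which contains the covariate pool, $\mathcal{D}_{\mathrm{obs}}$ and past RCT outcomes. Past outcomes are functions of other units' potential outcomes and assignments only. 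Therefore, conditional on the unit's covariate, its potential outcomes are independent of $\mathcal{F}_{t-1}$ and of the fact that it was selected. The conditional mean is then $\mathbb{E}[Y(1)-Y(0)\mid X=X_t]=\tau(X_t)$; Assumption~\ref{assump:invariance} ensures this is the common target $\tau$. I expect to state this explicitly as the modeling content implicit in ``randomization conditioned on history'' (Assumption~\ref{assump:rct-rand}), since it is the only non-mechanical point.

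\emph{Step 4: the bound.} At most one of the two terms of $\widetilde Y_t$ is nonzero, since $T_t\in\{0,1\}$, and $Y_t\in[0,1]$. If $T_t=1$, then $|\widetilde Y_t|=Y_t/p_t\le 1/f_{\min}$. If $T_t=0$, then $|\widetilde Y_t|=Y_t/(1-p_t)\le 1/(1-f_{\max})$. Taking the maximum of the two cases gives $|\widetilde Y_t|\le L_p$ almost surely.
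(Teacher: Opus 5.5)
Your proposal is correct and follows essentially the same route as the paper. Both arguments rewrite $\widetilde Y_t$ via consistency, factor $\mathbb{E}[T_tY_t(1)\mid X_t,p_t,\mathcal{F}_{t-1}]$ using the conditional independence and $\mathbb{E}[T_t\mid X_t,p_t,\mathcal{F}_{t-1}]=p_t$, reduce to $\tau(X_t)$, and bound $|\widetilde Y_t|$ by splitting on $T_t$.

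The only difference is your Step 3. The paper attributes $\mathbb{E}[Y_t(1)-Y_t(0)\mid X_t,p_t,\mathcal{F}_{t-1}]=\mathbb{E}[Y_t(1)-Y_t(0)\mid X_t]$ to the ignorability in Assumption~\ref{assump:rct_bound}, which by itself does not imply it. Your justification is the accurate one: units are sampled independently, so neither the history nor the selection depends on the queried unit's own potential outcomes given its covariate. One correction: this cross-unit independence comes from the i.i.d.\ sampling model, not from SUTVA's no-interference clause.
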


Equation~\eqref{eq:unbiased_pseudo} says: \emph{even if we cherry-pick who enters the experiment},
as long as we randomize treatment for the selected unit, each queried unit yields an unbiased, single-sample estimate of its own CATE.
A concrete example is coupon delivery:
even if we actively choose ``hard'' users (e.g., high-value users with extreme historical targeting),
a randomized coupon/no-coupon decision still produces an unbiased causal contrast at that user. This lemma is \emph{not} claiming that the observational estimator is unbiased, nor that any neural training procedure is unbiased.
It only uses RCT randomization and boundedness.
The adaptive selection affects \emph{where} we learn (which $X_t$'s), not \emph{whether} a queried unit is causally valid.

A potential {skeptic is that} ``But the learner chooses $X_t$ based on past outcomes; isn't this a form of selection bias that breaks unbiasedness?''
\emph{Response:} the selection changes the distribution of $X_t$, but unbiasedness in~\eqref{eq:unbiased_pseudo} is \emph{conditional} on $(X_t,p_t,\mathcal{F}_{t-1})$.
The key is that $T_t$ is randomized \emph{after} selection and is conditionally independent of potential outcomes.
Technically, $\widetilde{Y}_t-\tau(X_t)$ forms a martingale difference sequence. Under this preparation, we introduce the finite-sample bound as follows:

\begin{theorem}[Finite-sample unbiasedness and deviation bound]
\label{thm:finite}
Assume~\ref{assump:predictable}--\ref{assump:linear}.
Let $V_\lambda \triangleq \lambda I_d + \sum_{t=1}^{B}\phi(X_t)\phi(X_t)^\top$ and let $\widehat{\theta}_\lambda$ be defined in~\eqref{eq:ridge}.
Then: (i) \textbf{(Conditional unbiasedness for OLS).}
    If $\lambda=0$ and $V_0$ is invertible, then
    $
        \mathbb{E}\!\left[\widehat{\theta}_0 \mid \{X_t,p_t\}_{t=1}^B\right]=\theta_\star,
        \qquad
        \mathbb{E}\!\left[\widehat{\tau}_0(x)\mid \{X_t,p_t\}_{t=1}^B\right]=\tau(x),\ \forall x.
    $
     \textbf{(High-probability self-normalized bound).}
    Fix $\delta\in(0,1)$ and any $\lambda\ge 0$.
    There exists a constant $\sigma\le 2L_p$ (depending only on $f_{\min},f_{\max}$ and the boundedness of $Y$)
    such that with probability at least $1-\delta$,
    \begin{equation}
        \label{eq:self_norm}
        \big\|\widehat{\theta}_\lambda-\theta_\star\big\|_{V_\lambda}
        \;\le\;
        \underbrace{\sigma\sqrt{2\log\frac{\det(V_\lambda)^{1/2}}{\det(\lambda I_d)^{1/2}\,\delta}}}_{\text{stochastic term}}
    +\underbrace{\sqrt{\lambda}\,S}_{\text{regularization bias}}.
    \end{equation}
    Consequently, for every $x\in\mathcal{X}$,
    \begin{equation}
        \label{eq:pointwise}
        \big|\widehat{\tau}_\lambda(x)-\tau(x)\big|
        \;\le\;
        \beta_{B}(\delta)\cdot
        \sqrt{\phi(x)^\top V_\lambda^{-1}\phi(x)},
    \end{equation}
where $\beta_{B}(\delta)\triangleq
        \sigma\sqrt{2\log\frac{\det(V_\lambda)^{1/2}}{\det(\lambda I_d)^{1/2}\,\delta}}
        +\sqrt{\lambda}S.$
\end{theorem}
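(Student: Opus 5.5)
The argument reduces both parts to the martingale structure from Lemma~\ref{lem:unbiased_pseudo}. Write $\phi_t\triangleq\phi(X_t)$, $\varepsilon_t\triangleq\widetilde{Y}_t-\tau(X_t)=\widetilde{Y}_t-\langle\theta_\star,\phi_t\rangle$, and $\Phi\triangleq[\phi_1,\dots,\phi_B]^\top$. The normal equations of~\eqref{eq:ridge} give $\widehat{\theta}_\lambda=V_\lambda^{-1}\sum_t\phi_t\widetilde{Y}_t$. Substituting $\widetilde{Y}_t=\langle\theta_\star,\phi_t\rangle+\varepsilon_t$ yields the decomposition
\[
\widehat{\theta}_\lambda-\theta_\star \;=\; V_\lambda^{-1}\xi_B \;-\; \lambda V_\lambda^{-1}\theta_\star,\qquad \xi_B\triangleq\sum_{t=1}^B\phi_t\varepsilon_t .
\]

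For part (i), set $\lambda=0$, so $\widehat{\theta}_0-\theta_\star=V_0^{-1}\xi_B$. We need $\mathbb{E}[\varepsilon_t\mid\{X_s,p_s\}_{s=1}^B]=0$. The difficulty is that conditioning on future design points $X_s,p_s$ ($s>t$) may carry information about $\varepsilon_t$, because the adaptive policy looks at past outcomes. I would handle this by proving the weaker, correct statement first. Since $V_0$ and all $\phi_t$ are functions of the design, $\mathbb{E}[\widehat{\theta}_0]=\theta_\star$ holds whenever the design is conditionally independent of the RCT outcome noise. For genuinely outcome-dependent designs, I would instead interpret the conditioning as conditioning on $\mathcal{F}_{t-1}$ term-by-term: $\mathbb{E}[\phi_t\varepsilon_t\mid\mathcal{F}_{t-1},X_t,p_t]=\phi_t\cdot 0$ by Lemma~\ref{lem:unbiased_pseudo}. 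This gives $\mathbb{E}[\xi_B]=0$. When $V_0$ is predictable or fixed by the design, it can be pulled outside the expectation. I expect to state explicitly that this step uses the design being (conditionally) fixed; that is the main subtlety of (i). Linearity then transfers unbiasedness to $\widehat{\tau}_0(x)=\langle\widehat{\theta}_0,\phi(x)\rangle$.

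For part (ii), the triangle inequality in the $V_\lambda$-norm gives
\[
\|\widehat{\theta}_\lambda-\theta_\star\|_{V_\lambda}\le\|\xi_B\|_{V_\lambda^{-1}}+\lambda\|\theta_\star\|_{V_\lambda^{-1}}.
\]
Since $V_\lambda\succeq\lambda I$, the second term satisfies $\lambda\|\theta_\star\|_{V_\lambda^{-1}}\le\sqrt{\lambda}\,\|\theta_\star\|_2\le\sqrt{\lambda}S$. For the first term I would apply the self-normalized martingale inequality of Abbasi-Yadkori et al. Its hypotheses are: $\phi_t$ is $\mathcal{F}_{t-1}$-measurable (Assumption~\ref{assump:predictable}), and $\varepsilon_t$ is conditionally $\sigma$-sub-Gaussian given $\mathcal{F}_{t-1}$. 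The sub-Gaussianity follows from Lemma~\ref{lem:unbiased_pseudo}: $\varepsilon_t$ has zero conditional mean, $|\widetilde{Y}_t|\le L_p$, and $|\tau|\le1$, so $\varepsilon_t$ lies in an interval of length at most $2L_p$. Hoeffding's lemma then gives sub-Gaussian parameter $L_p\le 2L_p$, so we may take $\sigma\le2L_p$. The inequality yields, with probability $1-\delta$,
\[
\|\xi_B\|_{V_\lambda^{-1}}^2\le2\sigma^2\log\!\big(\det(V_\lambda)^{1/2}\det(\lambda I)^{-1/2}/\delta\big),
\]
which is~\eqref{eq:self_norm}. The case $\lambda=0$ makes the bound vacuous (infinite), so it holds trivially; the inequality is meaningful only for $\lambda>0$.

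For the pointwise bound~\eqref{eq:pointwise}, Cauchy--Schwarz in the $V_\lambda$ geometry gives
\[
|\langle\widehat{\theta}_\lambda-\theta_\star,\phi(x)\rangle|\le\|\widehat{\theta}_\lambda-\theta_\star\|_{V_\lambda}\|\phi(x)\|_{V_\lambda^{-1}},
\]
and substituting~\eqref{eq:self_norm} finishes the argument on the same event. The event is uniform in $x$ because it does not depend on $x$. The main obstacle is correctly invoking the self-normalized bound under adaptivity, which requires verifying predictability and conditional sub-Gaussianity with respect to the filtration. This is exactly where Lemma~\ref{lem:unbiased_pseudo} is used.
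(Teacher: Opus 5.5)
Your part (ii) and the pointwise bound follow the paper's argument. You use the same decomposition $\widehat\theta_\lambda-\theta_\star=V_\lambda^{-1}\xi_B-\lambda V_\lambda^{-1}\theta_\star$, the same bias bound via $V_\lambda\succeq\lambda I_d$, the same self-normalized inequality (the paper proves it from scratch by the Gaussian-mixture supermartingale argument rather than citing Abbasi-Yadkori et al.), and the same Cauchy--Schwarz step. Treating $\lambda=0$ as vacuous matches the appendix, which restricts to $\lambda>0$. Your conditional-range use of Hoeffding's lemma is valid: with $\tau(X_t)$ fixed under the conditioning, $\varepsilon_t$ lies in an interval of length at most $2L_p$. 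This gives $\sigma=L_p$, sharper than the paper's $2L_p$, which comes from the cruder interval $[-2L_p,2L_p]$.

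On part (i) you depart from the paper, and you are right to. The paper's proof passes ``in particular'' from $\mathbb{E}[\varepsilon_t\mid X_t,p_t,\mathcal{F}_{t-1}]=0$ to $\mathbb{E}[\varepsilon\mid\{X_t,p_t\}_{t=1}^B]=0$. That is exactly the step you refuse, and under outcome-adaptive designs (i) is false. Here is a counterexample:
\begin{itemize}
\item Take $d=1$ and two covariate types $a,b$ with $\phi(a)=1$, $\phi(b)=2$, $\theta_\star=0$.
\item Let $Y(1),Y(0)$ be independent $\mathrm{Bern}(1/2)$, $p_t\equiv 1/2$, and $B=2$.
\item Query $X_1=a$, and set $X_2=a$ if $T_1Y_1=1$, else $X_2=b$.
\end{itemize}
All assumptions hold. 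Yet on $\{X_1=X_2=a\}=\{T_1=1,Y_1=1\}$ we have $\widetilde Y_1=2$ and $\mathbb{E}[\widetilde Y_2\mid\mathcal{F}_1]=0$. Hence $\mathbb{E}[\widehat\theta_0\mid X_1=X_2=a]=1\neq\theta_\star$. Averaging over $(T_1,Y_1)$ even gives $\mathbb{E}[\widehat\theta_0]=3/20$.

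Your restricted version is the correct repair, provided the hypothesis is that the entire design $\{X_t,p_t\}_{t=1}^B$ is generated independently of the RCT assignments \emph{and} outcomes. Dependence on past $T_s$ alone already breaks it, since $\mathbb{E}[\widetilde Y_s\mid T_s=1,X_s,p_s]=\mu_1(X_s)/p_s\neq\tau(X_s)$ in general. Under that hypothesis the paper's one-line OLS computation goes through.

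Your fallback for adaptive designs needs tightening:
\begin{itemize}
\item $\mathbb{E}[\xi_B]=0$ is true, but a merely predictable $V_0$ cannot be pulled out of $\mathbb{E}[\phi_t\varepsilon_t]$, because $V_0$ is only $\mathcal{F}_{B-1}$-measurable.
\item Even a nonrandom $V_0$ gives only unconditional unbiasedness. With $\phi(b)=1$ in the example, $V_0\equiv 2$, yet the conditional mean on $\{X_2=a\}$ is still $1$.
\end{itemize}
What survives adaptivity is the unbiased estimating equation $\mathbb{E}[V_0(\widehat\theta_0-\theta_\star)]=\mathbb{E}[\xi_B]=0$, not unbiasedness of $\widehat\theta_0$ itself.
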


Theorem~\ref{thm:finite} quantifies a clean ``budget $\Rightarrow$ error'' tradeoff under \emph{adaptive} experimentation:
your estimation error is controlled by an \emph{information matrix} $V_\lambda$ built from the queried covariates.
The only price paid for adaptivity is that the $X_t$'s are not i.i.d.; the deviation bound still holds because the noise is a martingale difference. It is easy to confuse~\eqref{eq:pointwise} with classical i.i.d. regression bounds.
The crucial difference is: \emph{the $X_t$'s can be adversarially chosen by the learner itself, based on past outcomes.}
Theorem~\ref{thm:finite} remains valid in that setting; it does not require i.i.d. sampling of $X_t$.

Noteworthy, in very small $B$, $V_0$ can be ill-conditioned.
This is exactly why we stated~\eqref{eq:self_norm} for ridge $\lambda>0$ as well:
the bound decomposes into a \emph{stochastic term} plus an explicit \emph{regularization bias}.
This makes the tradeoff transparent and diagnosable in practice: if $V_0$ is poorly conditioned, increase $\lambda$ to stabilize,
and the theory tells you exactly how much bias you introduce.

\begin{corollary}[Finite-sample PEHE bound]
\label{cor:pehe}
Under the conditions of Theorem~\ref{thm:finite}, with probability at least $1-\delta$,
\begin{equation}
    \label{eq:pehe_bound}
    \mathcal{R}(\widehat{\tau}_\lambda)
    \;\le\;
    \beta_{B}(\delta)\cdot
    \sqrt{\mathbb{E}_{X\sim \mathbb{P}_X}\!\left[\phi(X)^\top V_\lambda^{-1}\phi(X)\right]}.
\end{equation}
In particular, if the queried design is \emph{well-conditioned} in the sense that
$V_0 \succeq \kappa B\, I_d$ for some $\kappa>0$, and $\Sigma_X\triangleq \mathbb{E}_{\mathbb{P}_X}[\phi(X)\phi(X)^\top]\preceq L^2 I_d$,
then (taking $\lambda=0$ for simplicity),
\begin{equation}
    \label{eq:rate_simple}
    \mathcal{R}(\widehat{\tau}_0)
    \le
    \beta_{B}(\delta)\cdot \sqrt{\frac{\mathrm{Tr}(\Sigma_X)}{\kappa B}}
    \lesssim
    \frac{L\,\sigma}{\sqrt{\kappa}}\sqrt{\frac{d\log(B/\delta)}{B}}.
\end{equation}
\end{corollary}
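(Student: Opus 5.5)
The plan is to square the pointwise bound \eqref{eq:pointwise} of Theorem~\ref{thm:finite}, integrate it over the target distribution, and then control the resulting quadratic form in $V_\lambda^{-1}$ via the well-conditioning assumption. The key structural fact is that \eqref{eq:pointwise} holds \emph{simultaneously for all} $x\in\mathcal{X}$ on a single event of probability at least $1-\delta$. This is because it is derived from the self-normalized bound \eqref{eq:self_norm} on $\widehat{\theta}_\lambda-\theta_\star$ through Cauchy--Schwarz:
\[
|\langle \widehat{\theta}_\lambda-\theta_\star,\phi(x)\rangle|\le \|\widehat{\theta}_\lambda-\theta_\star\|_{V_\lambda}\,\|\phi(x)\|_{V_\lambda^{-1}}.
\]
Hence on that event I can square both sides and take $\mathbb{E}_{X\sim\mathbb{P}_X}$. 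The quantity $\beta_B(\delta)$ depends only on the RCT data, not on the fresh evaluation point $X$, so it factors out of the expectation. Taking square roots then gives \eqref{eq:pehe_bound}. It is worth remarking that $\mathbb{P}_X$ is the target law, independent of how the $X_t$ were chosen, so no union bound over $x$ is needed.

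For the rate \eqref{eq:rate_simple}, I set $\lambda=0$ and write
\[
\mathbb{E}[\phi(X)^\top V_0^{-1}\phi(X)]=\mathrm{Tr}(V_0^{-1}\Sigma_X).
\]
Since $V_0\succeq \kappa B I_d$, we have $V_0^{-1}\preceq (\kappa B)^{-1}I_d$. Because $\Sigma_X\succeq 0$, this gives $\mathrm{Tr}(V_0^{-1}\Sigma_X)\le \mathrm{Tr}(\Sigma_X)/(\kappa B)$, which is the first inequality. Next, $\Sigma_X\preceq L^2 I_d$ (or directly $\|\phi\|\le L$) yields $\mathrm{Tr}(\Sigma_X)\le dL^2$.

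The remaining task is to bound $\beta_B(\delta)$ by $\sigma\sqrt{\log(B/\delta)}$ up to a $\sqrt d$ factor. Two remarks are needed here:
\begin{itemize}
\item The log-determinant ratio degenerates at $\lambda=0$. I would handle this by interpreting the $\lambda=0$ statement as the limit of a small ridge, e.g.\ $\lambda$ of constant order (say $\lambda=L^2$). Then $\sqrt{\lambda}S$ is a lower-order term absorbed into $\lesssim$, and the design bound $V_\lambda\succeq\kappa B I$ still holds.
\item Alternatively, one can state the rate for the OLS-specific self-normalized bound.
\end{itemize}
By AM--GM on eigenvalues, $\det V_\lambda\le(\lambda+BL^2/d)^d$, so
\[
\log\frac{\det V_\lambda}{\lambda^d}\le d\log(1+BL^2/(d\lambda))\lesssim d\log B.
\]
This gives $\beta_B(\delta)\lesssim\sigma\sqrt{d\log(B/\delta)}$.

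Combining the pieces naively gives $\beta_B\sqrt{dL^2/(\kappa B)}$, which is $\sigma L d\sqrt{\log(B/\delta)/(\kappa B)}$, i.e.\ a factor $d$ rather than $\sqrt d$. This bookkeeping of dimension factors is the main obstacle. My first attempt would be to track it honestly and get $\sqrt{d}$ by avoiding the double count. One option is to use the cruder bound $\mathrm{Tr}(V_0^{-1}\Sigma_X)\le \|\Sigma_X\|_{\mathrm{op}}\,\mathrm{Tr}(V_0^{-1})$ in a normalization where $\mathrm{Tr}(\Sigma_X)\le L^2$ (since $\|\phi\|\le L$ gives exactly $\mathrm{Tr}(\Sigma_X)=\mathbb{E}\|\phi(X)\|^2\le L^2$, not $dL^2$). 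With $\mathrm{Tr}(\Sigma_X)\le L^2$, the product becomes $\sigma\sqrt{d\log(B/\delta)}\cdot L/\sqrt{\kappa B}$, which matches \eqref{eq:rate_simple} exactly. So the key step is to use $\|\phi(x)\|_2\le L$ from Assumption~\ref{assump:linear} to bound the trace by $L^2$, while the $d$ enters only through the log-determinant.
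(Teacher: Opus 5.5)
Your first two steps coincide with the paper's proof: Cauchy--Schwarz in the $V_\lambda$-geometry for each $x$, integration over $\mathbb{P}_X$ on the single event $\{\|\widehat{\theta}_\lambda-\theta_\star\|_{V_\lambda}\le\beta_B(\delta)\}$, and $\mathrm{Tr}(V_0^{-1}\Sigma_X)\le \mathrm{Tr}(\Sigma_X)/(\kappa B)$. You part ways at the final $\lesssim$ in \eqref{eq:rate_simple}. The paper uses $\Sigma_X\preceq L^2 I_d$ to get $\mathrm{Tr}(\Sigma_X)\le dL^2$. It then treats $\beta_B(\delta)$ as contributing only a logarithmic factor. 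That silently drops the $d$ inside $\log(\det V_\lambda/\lambda^d)$, and it ignores that $\beta_B(\delta)=+\infty$ at $\lambda=0$. Carried out honestly, the paper's route yields a factor $d$ rather than $\sqrt{d}$. Your accounting is the consistent way to reach the displayed rate, and it is more careful than the paper's: you use $\mathrm{Tr}(\Sigma_X)=\mathbb{E}\|\phi(X)\|_2^2\le L^2$ from Assumption~\ref{assump:linear}, so that $\sqrt{d}$ enters only through $\log(\det V_\lambda/\lambda^d)\le d\log(1+BL^2/(d\lambda))$. One slip: the ``cruder'' split $\|\Sigma_X\|_{\mathrm{op}}\,\mathrm{Tr}(V_0^{-1})$ you mention would reintroduce the factor $d$. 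What does the work is $\lambda_{\max}(V_0^{-1})\,\mathrm{Tr}(\Sigma_X)$, which you already used in the previous step.

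On $\lambda=0$, your primary fix does not prove the statement as written. Letting $\lambda\downarrow 0$ does not help, since the log-determinant blows up like $d\log(1/\lambda)$. Fixing $\lambda=L^2$ proves a bound for $\widehat{\tau}_{L^2}$, not for $\widehat{\tau}_0$, and it carries an extra $LS$ term. Your ``alternative'' handles OLS directly and should be the main route. For a fixed $\kappa$, $V_0\succeq\kappa B I_d$ gives $2V_0\succeq V_{\kappa B}$, hence $\|\widehat{\theta}_0-\theta_\star\|_{V_0}=\|M_B\|_{V_0^{-1}}\le\sqrt{2}\,\|M_B\|_{V_{\kappa B}^{-1}}$. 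Apply Lemma~\ref{lem:selfnorm} with the deterministic choice $\lambda=\kappa B$, together with $\det(I_d+V_0/(\kappa B))\le(1+L^2/(d\kappa))^d$. This bounds the right-hand side by $\sqrt{2}\,\sigma\sqrt{d\log(1+L^2/(d\kappa))+2\log(1/\delta)}$. Combined with $\mathrm{Tr}(\Sigma_X)\le L^2$, this gives the stated rate for $\widehat{\tau}_0$ itself, with $\log(1+L^2/(d\kappa))$ in place of $\log B$ and no bias term.
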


Corollary~\ref{cor:pehe} shows that the PEHE risk is governed by an \emph{integrated leverage} term
$\mathbb{E}_{\mathbb{P}_X}[\phi(X)^\top V_\lambda^{-1}\phi(X)]$.
Active sampling is useful precisely because it can shape $V_\lambda$:
selecting points that increase eigenvalues of $V_\lambda$ reduces this term.
This provides a direct theoretical rationale for using an ``uncertainty'' proxy (Algorithm~\ref{alg:active_fusion}, $v_u$):
in linear models, predictive variance is proportional to $\phi(u)^\top V_\lambda^{-1}\phi(u)$,
and ensemble disagreement is a practical surrogate for that quantity.

\subsection{Asymptotic normality under adaptive sampling}
\label{sec:theory:clt}

Finite-sample concentration ensures ``small error with high probability.''
For statistical inference (e.g., confidence intervals), we further need a CLT.
The non-i.i.d. nature of active sampling prevents a direct appeal to the classical i.i.d. CLT,
but a martingale CLT applies.

\begin{assumption}{(Stabilizing design and moments).}
\label{assump:stabilize}
As $B\to\infty$, the normalized information matrix converges in probability:
$    \frac{1}{B}\sum_{t=1}^{B}\phi(X_t)\phi(X_t)^\top \;\xrightarrow[]{p}\; \Sigma_\pi,
    \qquad \Sigma_\pi \succ 0.
$
Moreover, the conditional variances stabilize:
$
    \frac{1}{B}\sum_{t=1}^{B}
    \mathbb{E}\!\left[\big(\widetilde{Y}_t-\tau(X_t)\big)^2\phi(X_t)\phi(X_t)^\top \,\middle|\,\mathcal{F}_{t-1}\right]
    \;\xrightarrow[]{p}\; \Omega_\pi,
$
and a Lindeberg condition holds for the martingale differences $\{\widetilde{Y}_t-\tau(X_t)\}_{t\ge 1}$.
\end{assumption}

\begin{theorem}[Asymptotic normality (martingale CLT)]
\label{thm:clt}
Suppose Assumptions~\ref{assump:predictable}--\ref{assump:stabilize} hold and $\lambda=0$.
Then
$   \sqrt{B}\big(\widehat{\theta}_0-\theta_\star\big)
    \;\xRightarrow[]{d}\;
    \mathcal{N}\!\left(0,\;\Sigma_\pi^{-1}\Omega_\pi\Sigma_\pi^{-1}\right).$
Consequently, for any fixed $x\in\mathcal{X}$,
$
    \sqrt{B}\big(\widehat{\tau}_0(x)-\tau(x)\big)
    \;\xRightarrow[]{d}\;
    \mathcal{N}\!\left(0,\;\phi(x)^\top\Sigma_\pi^{-1}\Omega_\pi\Sigma_\pi^{-1}\phi(x)\right).
$
\end{theorem}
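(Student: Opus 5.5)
The plan is the standard OLS decomposition followed by a martingale CLT applied to the score, with Slutsky's lemma to handle the random design matrix. With $\lambda=0$, write $\varepsilon_t\triangleq \widetilde{Y}_t-\tau(X_t)=\widetilde{Y}_t-\langle\theta_\star,\phi(X_t)\rangle$ (Assumption~\ref{assump:linear}). Plugging into $\widehat{\theta}_0=V_0^{-1}\sum_t\phi(X_t)\widetilde{Y}_t$ gives the identity
\begin{equation*}
\sqrt{B}\big(\widehat{\theta}_0-\theta_\star\big)=\Big(\tfrac{1}{B}V_0\Big)^{-1}\cdot \tfrac{1}{\sqrt{B}}\sum_{t=1}^{B}\phi(X_t)\varepsilon_t .
\end{equation*}
This identity holds on the event that $V_0$ is invertible. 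That event has probability tending to one, because $\frac1B V_0\to\Sigma_\pi\succ0$ in probability by Assumption~\ref{assump:stabilize}, and eigenvalues are continuous.

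\textbf{Step 1 (martingale difference structure).} Set $\xi_{B,t}\triangleq B^{-1/2}\phi(X_t)\varepsilon_t$. I will show it is a martingale difference array with respect to $\mathcal{F}_t$. By Assumption~\ref{assump:predictable}, $X_t$ and $p_t$ are $\mathcal{F}_{t-1}$-measurable. Lemma~\ref{lem:unbiased_pseudo} gives $\mathbb{E}[\varepsilon_t\mid X_t,p_t,\mathcal{F}_{t-1}]=0$, so $\mathbb{E}[\xi_{B,t}\mid\mathcal{F}_{t-1}]=0$. Lemma~\ref{lem:unbiased_pseudo} and Assumption~\ref{assump:linear} also bound the increments: $\|\xi_{B,t}\|\le B^{-1/2}L(L_p+LS)$. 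The conditional variance sum $\sum_t\mathbb{E}[\xi_{B,t}\xi_{B,t}^\top\mid\mathcal{F}_{t-1}]$ converges in probability to $\Omega_\pi$ by Assumption~\ref{assump:stabilize}. The conditional Lindeberg condition is assumed there, and in any case it is automatic from the uniform bound: for fixed $\epsilon$, $\mathbb{I}(\|\xi_{B,t}\|>\epsilon)=0$ once $B$ is large.

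\textbf{Step 2 (multivariate martingale CLT).} I will apply the Cramér--Wold device. For fixed $a\in\mathbb{R}^d$, the scalar array $a^\top\xi_{B,t}$ satisfies the conditional-variance and Lindeberg conditions with limit variance $a^\top\Omega_\pi a$. The martingale CLT (e.g.\ Hall--Heyde, Cor.~3.1) then gives $\sum_t a^\top\xi_{B,t}\Rightarrow\mathcal{N}(0,a^\top\Omega_\pi a)$, hence $\sum_t\xi_{B,t}\Rightarrow\mathcal{N}(0,\Omega_\pi)$.

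\textbf{This is the main obstacle.} Hall--Heyde requires either a deterministic limit variance or a nesting condition on the filtrations. Here $\Omega_\pi$ is a deterministic matrix, since it is the probability limit in Assumption~\ref{assump:stabilize}. That is exactly what lets the limit be an unmixed Gaussian rather than a variance mixture, and I would verify that the adaptive dependence on past outcomes does not obstruct this beyond what the stabilization assumption already absorbs.

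\textbf{Step 3 (Slutsky and delta method).} Since $(\frac1BV_0)^{-1}\to\Sigma_\pi^{-1}$ in probability (continuous mapping, using $\Sigma_\pi\succ0$), Slutsky's lemma applied on the high-probability invertibility event yields $\sqrt{B}(\widehat{\theta}_0-\theta_\star)\Rightarrow\mathcal{N}(0,\Sigma_\pi^{-1}\Omega_\pi\Sigma_\pi^{-1})$. For fixed $x$, $\widehat{\tau}_0(x)-\tau(x)=\phi(x)^\top(\widehat{\theta}_0-\theta_\star)$ is a fixed linear functional, so the continuous mapping theorem gives the stated scalar limit with variance $\phi(x)^\top\Sigma_\pi^{-1}\Omega_\pi\Sigma_\pi^{-1}\phi(x)$.
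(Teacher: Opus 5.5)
Your proposal is correct and follows essentially the same route as the paper: the exact identity $\sqrt{B}(\widehat{\theta}_0-\theta_\star)=(B^{-1}V_0)^{-1}B^{-1/2}M_B$, then Cram\'er--Wold with the Hall--Heyde martingale CLT for $B^{-1/2}M_B$, then Slutsky and continuous mapping for $\phi(x)^\top(\cdot)$. Your additional care fills in details the paper leaves implicit: you restrict to the event where $V_0$ is invertible, you note that the Lindeberg condition is automatic from the bounded increments, and you observe that a deterministic $\Omega_\pi$ removes the need for Hall--Heyde's nesting condition.
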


Theorem~\ref{thm:clt} says: \emph{even though the RCT data are collected adaptively and are non-i.i.d.,}
the final estimator still admits classical $\sqrt{B}$-asymptotics.
In practice, this supports uncertainty quantification:
one can estimate the sandwich variance $\Sigma_\pi^{-1}\Omega_\pi\Sigma_\pi^{-1}$ from data and form approximate confidence intervals
for $\tau(x)$ at business-critical segments (e.g., ``new users'' or ``high-value users'').

Notably, the technical challenge for constructing CLT beyond i.i.d is the martingale structure created by randomization:
$\widetilde{Y}_t-\tau(X_t)$ is conditionally mean-zero given the past.
Assumption~\ref{assump:stabilize} requires that the design does not degenerate (the information matrix stabilizes),
which is exactly the failure mode active learning must avoid.
This also clarifies why our acquisition score includes representativeness/shift terms ($d_u$):
they discourage pathological designs where $\Sigma_\pi$ becomes nearly singular.

\subsection{Minimax lower bound and (near) optimality}
\label{sec:theory:minimax}

Upper bounds alone do not tell us whether a method is ``the best possible.''
We therefore complement them with a minimax lower bound.
The message is intentionally sharp:
\emph{even with perfect adaptivity and access to unlimited observational logs and unlabeled pools, one cannot beat the $\sqrt{d/B}$ rate in general.} Consider the linear class
$\mathcal{F}_{\mathrm{lin}}(S)
    \;\triangleq\;
    \left\{\tau_\theta(x)=\langle \theta,\phi(x)\rangle:\ \theta\in\mathbb{R}^d,\ \|\theta\|_2\le S\right\}.$
Assume $\mathbb{P}_X$ puts equal mass on $d$ covariate ``types'' $\{x^{(1)},\dots,x^{(d)}\}$ such that $\phi(x^{(j)})=e_j$ (the $j$-th standard basis).
This captures a realistic situation where the population is a mixture of $d$ distinct segments and CATE differs by segment.

\begin{theorem}[Minimax lower bound for active RCT under bounded randomization]
\label{thm:minimax}
Under the hard instance described above, there exists a universal constant $c>0$ such that for any
adaptive querying policy $\pi$ (possibly using $\mathcal{D}_{\mathrm{obs}}$ and $\mathcal{D}_{\mathrm{pool}}$) and any estimator $\widehat{\tau}$ based on $B$ RCT samples,$
    \inf_{\pi}\ \inf_{\widehat{\tau}}\ \sup_{\tau\in\mathcal{F}_{\mathrm{lin}}(S)}
    \mathbb{E}\!\left[\mathcal{R}(\widehat{\tau})\right]
    \ge
    c\cdot \frac{1}{f_{\min}(1-f_{\max})}\sqrt{\frac{d}{B}}.
$
\end{theorem}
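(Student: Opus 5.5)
The plan is Assouad's lemma on the $d$-segment hard instance. Each coordinate $j$ is a separate two-point testing problem on segment $x^{(j)}$. What we can learn about $\theta_j$ is limited by how many RCT samples the policy spends on segment $j$ and by how informative a single randomized trial is under $p_t\in[f_{\min},f_{\max}]$. The prefactor $1/(f_{\min}(1-f_{\max}))$ must come from that per-trial information. This is the step I expect to be the main obstacle, and I flag below a genuine difficulty with it.

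\textbf{Construction.} For $\omega\in\{-1,+1\}^d$ set $\theta_\omega=\varepsilon\,\omega$, so that $\tau_\omega(x^{(j)})=\varepsilon\omega_j$ and $\|\theta_\omega\|_2=\varepsilon\sqrt d\le S$ once $B$ is large. Fix a baseline $\mu_0(x^{(j)})\equiv m$ and let $\mu_1(x^{(j)})=m+\varepsilon\omega_j$, with Bernoulli outcomes. Choose $m$ and the outcome law so that the treated and control arms at $p_t$ near the endpoints of $[f_{\min},f_{\max}]$ carry as little information as possible. Under the hard instance $\mathcal{R}(\widehat\tau)^2=\frac1d\sum_j(\widehat\tau(x^{(j)})-\varepsilon\omega_j)^2$. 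By Jensen's inequality it then suffices to lower bound $\mathbb{E}\,\mathcal{R}^2$, which reduces the problem to Hamming loss over $\omega$.

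\textbf{Information accounting.} Let $N_j=\sum_t\mathbb{I}(X_t=x^{(j)})$ with $\sum_j N_j=B$. Assumption~\ref{assump:predictable} makes the selection and the $p_t$ predictable, so the chain rule for KL along the filtration $\{\mathcal{F}_t\}$ gives
\[
\mathrm{KL}(\mathbb P_\omega\|\mathbb P_{\omega^{(j)}})=\mathbb{E}_\omega\Big[\sum_{t:X_t=x^{(j)}}\mathrm{KL}_t\Big],
\]
where $\omega^{(j)}$ is $\omega$ with its $j$-th coordinate flipped. Each trial term is bounded by the $\chi^2$ divergence between the two observation laws of $(T_t,Y_t)$, which is at most $C\,\varepsilon^2\,\rho(p_t)$ for an explicit $\rho$. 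The observational data and the pool do not depend on $\omega$ (the OBS law can be taken identical across hypotheses), so they contribute zero KL. This is exactly where ``possibly using $\mathcal D_{\mathrm{obs}}$'' is handled.

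\textbf{Assouad step.} Average over $\omega$ and use $\sum_j\mathbb{E}N_j=B$. This gives
\[
\mathbb{E}\,\mathcal{R}^2\gtrsim\varepsilon^2\Big(1-\sqrt{C\varepsilon^2\rho^\ast B/d}\Big),
\qquad \rho^\ast=\sup_{p\in[f_{\min},f_{\max}]}\rho(p).
\]
Choosing $\varepsilon\asymp\sqrt{d/(\rho^\ast B)}$ yields $\mathbb{E}\,\mathcal{R}\gtrsim\sqrt{d/B}\,/\sqrt{\rho^\ast}$. The theorem follows if the construction achieves $\rho^\ast\lesssim\big(f_{\min}(1-f_{\max})\big)^2$.

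\textbf{Main obstacle.} For Bernoulli outcomes and a fixed $p$, the natural per-trial information is of order $p$ or $1-p$. That yields only $\rho^\ast\asymp\max_p\min\{p,1-p\}$. The designer is free to pick $p$ anywhere in $[f_{\min},f_{\max}]$, so if $1/2$ lies in that interval it can use $p=1/2$, and the upper bound of Corollary~\ref{cor:pehe} then gives $O(\sqrt{d\log B/B})$ with no blow-up as $f_{\min}\to0$. Consequently the stated prefactor cannot hold with a universal $c$ without further restriction. I would therefore try, in order, the following.
\begin{itemize}
\item First, make the restriction explicit: in the hard instance the randomization probabilities are pinned near $f_{\min}$ on treated-scarce segments and near $f_{\max}$ on control-scarce ones, as an operational constraint via $f_k$. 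In that case one arm is observed with probability $\le f_{\min}$ or $\le 1-f_{\max}$. Combined with a signal embedded in a rare outcome (outcome probability of order $f_{\min}(1-f_{\max})$ near the boundary of $[0,1]$), this could push $\rho^\ast$ to the required order.
\item Second, if that cannot be achieved, fall back to proving the bound with $\big(f_{\min}(1-f_{\max})\big)^{-1/2}$, or with the universal-constant form $c\sqrt{d/B}$. The latter is at least consistent, since $f_{\min}(1-f_{\max})\le1/4$.
\end{itemize}
I would verify the $\chi^2$ computation for the rare-outcome construction before anything else, since the whole prefactor hinges on it.
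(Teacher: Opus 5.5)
Your diagnosis is the crux, and it is correct: with a universal $c$, the prefactor $1/(f_{\min}(1-f_{\max}))$ is false, so the statement as written cannot be proved. The failure is not limited to ranges containing $\tfrac{1}{2}$. For any admissible $p$, query each type $B/d$ times with $p_t\equiv p$ and average $\widetilde Y_t$ per type. Since $\mathbb{E}[\widetilde Y_t^2\mid X_t]\le \frac{1}{p}+\frac{1}{1-p}=\frac{1}{p(1-p)}$, this gives $\mathbb{E}\,\mathcal{R}\le(\mathbb{E}\,\mathcal{R}^2)^{1/2}\le\sqrt{d/(B\,p(1-p))}\le\sqrt{d/(f_{\min}(1-f_{\max})B)}$. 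That forces $c\le\sqrt{f_{\min}(1-f_{\max})}$. This direct computation also avoids the corollary's $\sigma=2L_p$, which is stated through the endpoints.

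The paper's appendix version of the theorem silently drops the prefactor and proves only $c_1\sqrt{d/B}$ for $B\ge c_0 d$. Its route is essentially your skeleton: a uniform prior on $\{\pm\Delta\}^d$, a coordinatewise two-point bound, Pinsker and Jensen on the averaged total variation, the KL chain rule along the adaptive filtration, and $\sum_j N_j=B$. Its instance moves both arms ($\mu_1=\tfrac{1}{2}+\tfrac{\theta_j}{2}$, $\mu_0=\tfrac{1}{2}-\tfrac{\theta_j}{2}$). The per-trial KL is therefore at most $C_{\mathrm{KL}}\Delta^2$ for every $p_t$, and $f_{\min},f_{\max}$ never enter.

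Consequently your repairs cannot work. The upper bound above survives pinning $p$ at $f_{\min}$ or $f_{\max}$ segment by segment, so no outcome law yields $\rho^\ast\lesssim(f_{\min}(1-f_{\max}))^2$. At best you get the square root of the claimed factor, and pinning $p$ restricts $\pi$, which changes the problem anyway. Rare outcomes point the wrong way. Pinsker gives $\mathrm{KL}(\mathrm{Bern}(\mu)\,\|\,\mathrm{Bern}(\mu'))\ge 2(\mu-\mu')^2$ for every baseline, and near the boundary the KL is of order $(\mu-\mu')^2/(\mu(1-\mu))$, i.e.\ larger. So $\mu\approx\tfrac{1}{2}$ is the least informative choice. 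Your $(f_{\min}(1-f_{\max}))^{-1/2}$ fallback is also false whenever $\tfrac{1}{2}\in[f_{\min},f_{\max}]$ (take $p=\tfrac{1}{2}$ above). Commit to $c\sqrt{d/B}$. Any range-dependent refinement must go through the best admissible design, $\max_{p\in[f_{\min},f_{\max}]}p(1-p)$, not the endpoints; one-arm versions of your instance match the upper bound above at that rate, up to constants.

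Two further points. First, your Jensen step is backwards. Jensen gives $\mathbb{E}\,\mathcal{R}\le(\mathbb{E}\,\mathcal{R}^2)^{1/2}$, so a lower bound on $\mathbb{E}\,\mathcal{R}^2$ does not transfer to $\mathbb{E}\,\mathcal{R}$. The paper's final step (``Jensen in reverse'') has the same slip. There are two easy fixes:
\begin{itemize}
\item Use $\mathcal{R}\ge\frac{1}{d}\sum_j|\widehat\tau(x^{(j)})-\varepsilon\omega_j|$ and run Assouad on absolute loss. This gives $\mathbb{E}\,\mathcal{R}\ge\frac{\varepsilon}{2}(1-\overline{\mathrm{TV}})$, where $\overline{\mathrm{TV}}$ is the averaged total variation.
\item Clip $\widehat\tau$ to $[-\varepsilon,\varepsilon]$, which only lowers $\mathcal{R}$. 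Then $\mathcal{R}\le 2\varepsilon$ and $\mathbb{E}\,\mathcal{R}\ge\mathbb{E}\,\mathcal{R}^2/(2\varepsilon)$.
\end{itemize}

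Second, your one-arm instance handles the ``possibly using $\mathcal{D}_{\mathrm{obs}}$'' clause better than the paper's. Take $m=\tfrac{1}{2}$ and an all-control historical policy at the types. Outcome invariance then makes the OBS law identical across hypotheses, while the per-trial KL stays $\lesssim p_t\varepsilon^2\le\varepsilon^2$, so $c\sqrt{d/B}$ survives. In the paper's instance, a deterministic OBS log with mass at $x^{(j)}$ would reveal an arm mean that depends on $\theta_j$. Its chain rule, which starts at $t=1$, ignores that term, so the paper implicitly needs the log to avoid the types.
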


Theorem~\ref{thm:minimax} formalizes a simple but important reality:
if there are $d$ independent degrees of freedom in $\tau(\cdot)$ (e.g., $d$ user segments with genuinely different causal responses),
then with budget $B$ you cannot estimate all of them faster than order $\sqrt{d/B}$ in PEHE.
Active learning can \emph{reallocate} samples to reduce constants,
but it cannot create information out of thin air. It does \emph{not} say that active sampling is useless.
It says that \emph{the best possible scaling} in $B$ is $1/\sqrt{B}$ for this class.
Active sampling is still valuable because it improves the information matrix (hence constants) and prevents degeneracy,
especially when OBS coverage is heavily imbalanced.


\begin{corollary}[Near-minimax optimality of the orthogonalized estimator]
\label{cor:optimality}
Under Assumptions~\ref{assump:predictable}--\ref{assump:linear} and a well-conditioned design $V_0\succeq \kappa B I_d$,
the upper bound in~\eqref{eq:rate_simple} matches the minimax lower bound up to logarithmic factors and the conditioning constant $\kappa$.
In this sense, the estimator in Algorithm~\ref{alg:theory_est} is minimax-rate optimal for $\mathcal{F}_{\mathrm{lin}}(S)$.
\end{corollary}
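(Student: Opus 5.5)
The corollary combines two earlier results: the upper bound \eqref{eq:rate_simple} from Corollary~\ref{cor:pehe} and the lower bound of Theorem~\ref{thm:minimax}. The argument has three steps.

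\emph{Step 1: fix the upper bound.} Invoke Corollary~\ref{cor:pehe} with $\lambda=0$ under the well-conditioned design $V_0\succeq\kappa B I_d$. The PEHE risk is then bounded by $\beta_B(\delta)\sqrt{\mathrm{Tr}(\Sigma_X)/(\kappa B)}$.

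To make the $\log$ factor explicit, we control $\beta_B(\delta)$ through the log-determinant ratio. Since $\|\phi(x)\|_2\le L$, the determinant–trace inequality gives $\log\det(V_\lambda)-\log\det(\lambda I_d)\le d\log(1+BL^2/(\lambda d))$. For $\lambda=0$ we read \eqref{eq:self_norm} with a small auxiliary ridge $\lambda\asymp L^2$, or simply take the already-stated bound \eqref{eq:rate_simple} as given. Using $\mathrm{Tr}(\Sigma_X)\le dL^2$ then yields $\mathcal{R}(\widehat\tau_0)\lesssim (L\sigma/\sqrt{\kappa})\sqrt{d\log(B/\delta)/B}$ with probability $1-\delta$.

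Passing to expectation takes one more step. Choose $\delta=1/B$. On the failure event the risk is at most $2\max(SL,L_p)$, because predictions can be clipped to $[-1,1]$ without increasing the risk, since $\tau\in[-1,1]$. The upper bound in expectation is therefore of order $\sqrt{d\log B/B}$ as well.

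\emph{Step 2: compare with the lower bound.} Theorem~\ref{thm:minimax} gives $\inf_\pi\inf_{\widehat\tau}\sup\mathbb{E}\mathcal{R}\ge c\,\sqrt{d/B}/(f_{\min}(1-f_{\max}))$. This bound applies to every policy $\pi$, and in particular to Algorithm~\ref{alg:active_fusion} combined with Algorithm~\ref{alg:theory_est}. The ratio of upper to lower bound is
\[
\frac{L\sigma\, f_{\min}(1-f_{\max})}{c\sqrt{\kappa}}\sqrt{\log B}.
\]
Since $\sigma\le 2L_p$ depends only on $f_{\min},f_{\max}$, this ratio consists of $\sqrt{\log B}$, $\kappa^{-1/2}$, and quantities that do not depend on $B$ or $d$. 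The two bounds therefore agree in their dependence on $d$ and $B$, which is the sense of ``rate optimal'' in the statement.

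\emph{Step 3: check that the classes match.} The optimality claim needs the hard instance to belong to the class on which the upper bound holds. On the hard instance, $\phi(x^{(j)})=e_j$ gives $L=1$, and the linear model is realizable with $\|\theta\|\le S$. A design that queries each type about $B/d$ times satisfies $V_0\succeq \lfloor B/d\rfloor I_d$, so $\kappa\asymp 1/d$. This is the main obstacle: with $\kappa\asymp 1/d$ the upper bound reads $\sqrt{d}\cdot\sqrt{d\log B/B}$, which overshoots the lower bound by a factor of $\sqrt d$ once $\kappa$ is treated as a constant.

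The fix is to use the sharper intermediate form $\beta_B\sqrt{\mathrm{Tr}(\Sigma_X)/(\kappa B)}$ and not the crude bound $\mathrm{Tr}(\Sigma_X)\le dL^2$. On this instance $\mathrm{Tr}(\Sigma_X)=1$, since $\Sigma_X=I_d/d$. This gives $\beta_B/\sqrt{\kappa B}=\beta_B\sqrt{d/B}$. Because $\beta_B\asymp\sqrt{d\log B}$ from the determinant bound, an extra $\sqrt d$ may remain. If so, the cleanest route on this instance is a direct argument. The estimator reduces to per-segment averages of bounded martingale differences, so by Azuma and a union bound over $d$ segments, $\mathcal{R}^2=\frac1d\sum_j(\widehat\theta_j-\theta_j)^2\lesssim L_p^2\log(d/\delta)\,d/B$. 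This matches the lower bound up to the $\log$ factor and the $f$-dependent constants. The ``up to $\kappa$'' qualifier in the statement is what absorbs any residual discrepancy in the general case.
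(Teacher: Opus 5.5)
Your Steps 1--2 are, in substance, the paper's proof. That proof is one sentence: place \eqref{eq:rate_simple} next to the lower bound \eqref{eq:minimax_clean}, note that both scale as $\sqrt{d/B}$, and attribute the remaining discrepancy to the logarithm in $\beta_B(\delta)$, to $\kappa$, and to boundedness constants. You should cite \eqref{eq:minimax_clean}, i.e.\ $c_1\sqrt{d/B}$ for $B\ge c_0 d$. That is what the appendix proves and what the paper uses here; the main-text factor $1/(f_{\min}(1-f_{\max}))$ is never established. Your passage from a high-probability bound to a bound on $\mathbb{E}[\mathcal{R}]$ is a step the paper skips. Instead of clipping, which changes the estimator of Algorithm~\ref{alg:theory_est}, you can use $|\widehat\tau_0(x)|\le L^2L_p/\kappa$, which follows immediately from $V_0\succeq\kappa BI_d$.

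The genuine departure is Step 3, and there you are more careful than the paper. Since $\kappa B\le\lambda_{\min}(V_0)\le\mathrm{Tr}(V_0)/d\le BL^2/d$, the factor $\kappa^{-1/2}$ is always at least $\sqrt d/L$. For a well-conditioned design the log-determinant also puts a further $\sqrt d$ into $\beta_B(\delta)$, which \eqref{eq:rate_simple} drops. So on the hard instance, even with the balanced design, Corollary~\ref{cor:pehe} gives only order $d\sqrt{\log B/B}$. The paper's ``match'' therefore holds only because ``up to $\kappa$'' absorbs a polynomial factor in $d$. Your per-type martingale argument is what actually exhibits a matching upper bound. It buys a substantive optimality statement, but only on that instance; the paper's route is general but matches only formally.

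Two points to tighten. First, the displayed conclusion of your Step 1 does not follow from the determinant--trace bound you cite; it rests on taking \eqref{eq:rate_simple} as given, exactly as the paper does. Second, your closing claim that ``up to $\kappa$'' absorbs the residual $\sqrt d$ is true only in the degenerate sense $\sqrt d\le L\kappa^{-1/2}$.

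A second-moment computation can replace both this claim and the Azuma step. For a non-adaptive design, martingale orthogonality gives $\mathbb{E}[M_BM_B^\top]\preceq 4L_p^2V_0$. Hence $\mathbb{E}[\mathcal{R}(\widehat\tau_0)^2]\le 4L_p^2\,\mathrm{Tr}(\Sigma_XV_0^{-1})\le 4L_p^2L^2/(\kappa B)$, using $\mathrm{Tr}(\Sigma_X)=\mathbb{E}\|\phi(X)\|_2^2\le L^2$ rather than the paper's looser $dL^2$. With the normalized constant $\bar\kappa\triangleq\kappa d/L^2\in(0,1]$, this reads $\mathbb{E}[\mathcal{R}]\le 2L_p\sqrt{d/(\bar\kappa B)}$. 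That matches \eqref{eq:minimax_clean} with no logarithm. For the balanced design on the hard instance ($\bar\kappa=1$) it needs neither Azuma nor a union bound.
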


Also, we justify using $f_k(x)\equiv 1/2$ as a default in Algorithm~\ref{alg:active_fusion} could minimize the conditional variance,
and using clipping $[f_{\min},f_{\max}]$ as a principled safeguard when $p$ must vary by covariates.

%% file: chapters/exp.tex
\section{Experiments}\label{sec:experiment}

\textbf{Datasets.} Our empirical evaluation utilizes a large-scale \emph{real-world dataset} derived from a ride-hailing platform. 
The dataset is collected at the order level, with each instance corresponding to an individual order characterized by $468$ features. Formally, let $T \in \{0, 1\}$ denote the treatment arm, indicating whether a ``free upgrade'' service was triggered for the passenger.
Correspondingly, the outcome $Y \in \{0, 1\}$ is defined as the order completion status. We implement an \textbf{Out-of-Time (OOT) protocol} spanning 44 days. During the initial 30-day phase, we establish the RCT Pool $\mathcal{D}_{\mathrm{pool}}$ and concurrently collect observational samples to form the training set (either $\mathcal{D}_{\mathrm{obs}}^{\mathrm{bias}}$ or $\mathcal{D}_{\mathrm{obs}}^{\mathrm{full}}$). Generalization is strictly evaluated on the Unbiased RCT Test Set $\mathcal{D}_{\mathrm{rct}}^{\mathrm{test}}$, which comprises RCT data collected during the remaining 14 days. Detailed descriptions are provided in Appendix~\ref{sec:data_details}.

\textbf{Baselines \& Metrics. } In our experiments, we evaluate three industrial deep uplift baselines—DragonNet \citep{shi2019adapting}, DESCN \citep{zhong2022descn}, and DRCFR \citep{cheng2022learning}—under active versus random sampling. Due to the unobservability of counterfactuals in real-world industrial RCTs, direct calculation of PEHE is infeasible. Instead, we employ the Normalized Area Under the Uplift Curve (AUUC) \citep{pmlr-v67-gutierrez17a}, a practically-oriented metric widely adopted in industry to assess ranking quality.

\textbf{Main Results.} We demonstrate that the proposed active sampling strategy significantly enhances sample efficiency in hybrid uplift modeling, maximizing the marginal utility of limited experimental data. As illustrated in Figure~\ref{fig:active_base_model}, our approach yields consistent performance gains across all evaluated architectures compared to random sampling. Specifically, the \textbf{DRCFR} model coupled with active sampling achieves an AUUC at $100\mathrm{k}$ samples comparable to that of random sampling at $500\mathrm{k}$, effectively reducing the RCT labeling cost by approximately $80\%$. For other backbones, \textbf{DESCN} performs best at $10\mathrm{k}$, while \textbf{DragonNet} peaks at $500\mathrm{k}$. Notably, active sampling stabilizes DragonNet by preventing the large fluctuations observed under random sampling. Supplementary results regarding data scalability, comprehensive ablation studies, and detailed experimental protocols are provided in Appendix~\ref{sec:additional_exp}.

\begin{figure}[t]
    \centering
    \includegraphics[width=0.98\linewidth]{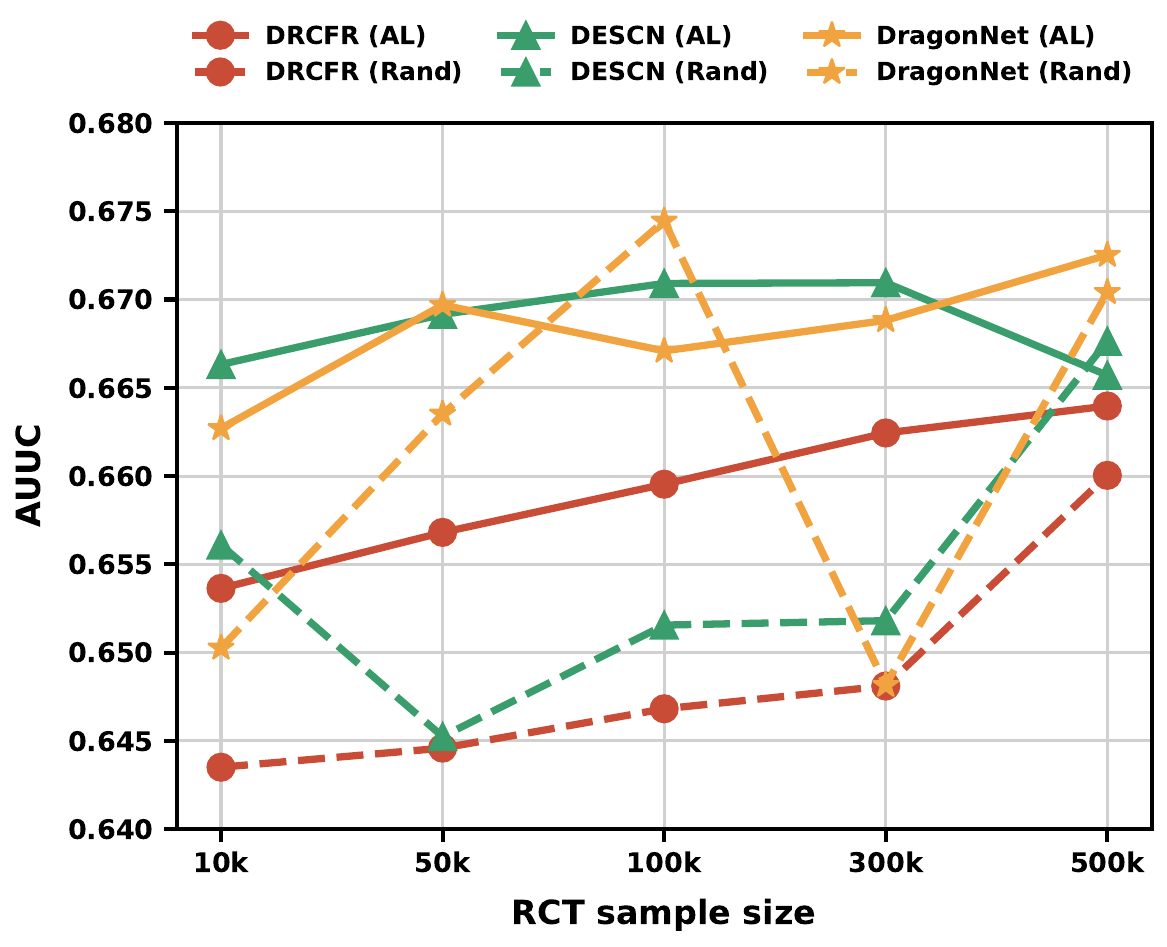}
    \caption{Performance comparison of Active Learning (AL) versus Random Sampling (Rand) strategies across varying RCT sample sizes (10k to 500k). The curves display the AUUC scores for DRCFR, DESCN, and DragonNet models. The experiments are conducted on the Biased Observational Set $\mathcal{D}_{\text{obs}}^{\text{bias}}$, demonstrating the superior data efficiency of active sampling (solid lines) compared to random sampling (dashed lines) under distribution shifts.}
    \label{fig:active_base_model}
\end{figure}

%% file: chapters/conclu.tex
\section{Conclusion}\label{sec:conclusion}


We study heterogeneous treatment effect estimation under a pervasive real-world constraint: randomized controlled trials are scarce and costly, while observational data are abundant but biased. 
In our framework, observational data inform \emph{where} randomized experimentation should occur, while causal identification is carried exclusively by randomized experiments. We formalize this perspective as a budgeted active experimentation problem
and propose a multi-criteria acquisition strategy that leverages observational data
to diagnose uncertainty, overlap deficits, and covariate shift,
thereby directing limited RCT budget to the most informative regions.
We show that randomization induces a martingale structure that guarantees unbiasedness,
admits finite-sample deviation bounds, and yields asymptotic normality.
Moreover, we establish minimax lower bounds demonstrating that the achieved $\sqrt{d/B}$ rate is information-theoretically optimal, clarifying that active learning improves efficiency through better experimental design rather than by circumventing fundamental limits.

Furthermore, extending the theoretical guarantees beyond linear realizability to richer nonparametric or deep function classes remains an important challenge, and jointly optimizing unit selection and treatment randomization probabilities may further improve efficiency under operational constraints.
We believe these directions will further strengthen active experimentation
as a foundational paradigm for cost-efficient causal learning.

%% file: chapters/appendix.tex
\clearpage
\appendix
\onecolumn

\section{Related Work}\label{sec:related}
Industrial CATE estimation often faces challenge: limited RCT budgets versus abundant but biased observational data.
This motivates us to develop an OBS \& RCT data fusion framework that casts experiment design as a budgeted active learning problem. In this section, we discuss related work on (i) \emph{active sampling for budget-limited experiments}, (ii) \emph{active sampling in observational studies}, and (iii) \emph{active learning with neural networks}.

\textbf{(i) Active sampling for budget-limited experiments.}
In sample-constrained RCT settings, the unlabeled pool is available but only a small subset of individuals can be enrolled, and the goal is to minimize estimation error via selective recruitment. \citet{addanki2022sample} utilize leverage-based selection and balanced assignment to optimize ATE deviation under fixed budgets. Similarly, \citet{zhangactive} propose adaptive sampling with reweighting (RWAS) to achieve stable, sample-efficient ATE estimation within a limited draw budget.

\textbf{(ii) Active sampling in observational studies.}
In observational settings, treatment is pre-assigned, and the budget restricts label queries. \citet{wen2025enhancingtreatmenteffectestimation} prioritize enhancing \emph{factual and counterfactual coverage} to improve overlap and reduce PEHE. Alternatively, Causal-EPIG \citep{gao2025causalepigpredictionorientedactivelearning} employs an information-theoretic approach, selecting samples that maximize expected predictive information gain to minimize CATE uncertainty under a limited labeling budget.

\textbf{(iii) Active learning with neural networks.}
In neural settings, acquisition is often driven by disagreement or uncertainty. \citet{zhu2022active} propose NeuralCAL, which queries points in the \emph{disagreement region} of plausible predictors to reduce label complexity. Its extension, NeuralCAL++, further integrates \emph{abstention} mechanisms and confidence intervals to reach target error rates with significantly fewer queries.

\begin{table}[htbp]
    \centering
    \small 
    \caption{Comparison of related works in Active Learning (AL) for Treatment Effect Estimation.}
    \label{tab:related_work}
    \renewcommand{\arraystretch}{1.3}
    
    \begin{tabularx}{\textwidth}{l l X c X l}
        \toprule
        \textbf{AL Methods} & \textbf{Data} & \textbf{\makecell{Objective \\ \& Budget}} & \textbf{\makecell{Guides Exp. \\ Design?}} & \textbf{\makecell{Acquisition \\ Strategy}} & \textbf{Evaluation Metric} \\
        \midrule

        \makecell[l]{Ours.\\ \textbf{(AL for RCT\&OBS TE)}}& 
        \makecell[l]{OBS + RCT} & 
        \textbf{Obj:} Min. CATE Error \newline \textbf{Budget:} RCT Experiment Budget & 
        \cmark & 
        \textbf{Ours:} Top-$k$ by uncertainty + discrepancy + overlap deficit. & 
        \textbf{AUUC} (For CATE) \\
        \midrule
        \midrule
        
        \makecell[l]{Sample-Constrained TE Est.\\ \cite{addanki2022sample}} & 
        RCT only& 
        \textbf{Obj:} Min. ATE/ITE Error \newline \textbf{Budget:} Limited Sample Size & 
        \cmark & 
        \textbf{SCTE:} Top-$k$ informative units with covariate-balanced assignment. & 
        \makecell[t]{\textbf{RMSE} (For ITE) \\ \textbf{Deviation} (For ATE)} \\
        \midrule
        
        \makecell[l]{Active TE Est.\\ \cite{zhangactive}} & 
        RCT only& 
        \textbf{Obj:} Min. ATE Error \newline \textbf{Budget:} Limited Sample Size & 
        \cmark & 
        \textbf{RWAS:} Probability sampling + inverse-propensity reweighting. & 
        \textbf{Deviation} (For ATE) \\
        \midrule

        \makecell[l]{Enhancing TE Est.\\ \cite{wen2025enhancingtreatmenteffectestimation}} & 
        OBS only& 
        \textbf{Obj:} Min. CATE Error \newline \textbf{Budget:} Labeling Budget & 
        \xmark & 
        \textbf{Coverage:} Top-$k$ maximizing factual/counterfactual coverage. & 
        \makecell[t]{\textbf{RMSE} (For ITE) \\ \textbf{PEHE} (For CATE)} \\
        \midrule

        \makecell[l]{Causal-EPIG\\ \cite{gao2025causalepigpredictionorientedactivelearning}} & 
        OBS only& 
        \textbf{Obj:} Min. CATE Error \newline \textbf{Budget:} Labeling Budget & 
        \xmark & 
        \textbf{Causal-EPIG:} Top-$k$ by expected predictive information gain. & 
        \textbf{PEHE} (For CATE) \\
        \midrule

        \makecell[l]{AL with Neural Networks\\\cite{zhu2022active}} & 
        Supervised & 
        \textbf{Obj:} Optim. Label Complexity and Excess Error \newline \textbf{Budget:} Query Count & 
        \xmark & 
        \textbf{NeuralCAL(++):} Top-$k$ by disagreement/ambiguity (with abstention). & 
        \makecell[t]{\textbf{Excess Error}\\ \textbf{Label Complexity}} \\
        \bottomrule
    \end{tabularx}
    
    \footnotesize
    \textbf{Note:} OBS = Observational Data, RCT = Randomized Controlled Trial. \textbf{Guides Exp. Design?}: Indicates if the method actively assigns treatments (\cmark) or only queries labels (\xmark).
\end{table}

\section{Theoretical Analysis}
\label{sec:theory_appendix}

This section provides theoretical guarantees for the \emph{estimation component} of Algorithm~\ref{alg:active_fusion}
under the adaptively collected RCT data.
Our goal is to make three claims precise:
(i) \textbf{Finite-sample validity}: an (near-) unbiased effect estimator and a non-asymptotic error bound that remains valid under \emph{adaptive, non-i.i.d.} RCT sampling.
(ii) \textbf{Asymptotic inference}: a central limit theorem (CLT) showing asymptotic normality enabling confidence intervals.
(iii) \textbf{Fundamental limits}: a minimax lower bound showing that the achieved rate is information-theoretically optimal (up to logs).
A key theme is that \emph{randomization protects unbiasedness, while active selection shapes the information matrix and hence the error rate.}

We first formalize an analyzable estimator that matches the RCT protocol in Algorithm~\ref{alg:active_fusion} and
is standard in modern causal ML: regress an \emph{orthogonalized pseudo-outcome} onto a representation.

\subsection{Setup: adaptive RCT stream and pseudo-outcome regression}
\label{sec:theory:setup}

Let $B$ denote the total RCT budget and index queried units in chronological order $t=1,\dots,B$.
Write the $t$-th queried sample as $(X_t,T_t,Y_t,p_t)$ where $p_t\in[f_{\min},f_{\max}]$ is the (known) assignment probability
used when unit $t$ was experimented.
Let $\{\mathcal{F}_{t}\}_{t\ge 0}$ be the filtration where $\mathcal{F}_{t}$ contains
$\mathcal{D}_{\mathrm{obs}}$, the full covariate pool $\mathcal{D}_{\mathrm{pool}}$, and all RCT data up to time $t$.
The adaptive querying policy (via $\mathcal{C}_k$, $f_k$) implies that $X_t$ and $p_t$ may depend on $\mathcal{F}_{t-1}$.

We define the RCT pseudo-outcome
\begin{equation}
    \widetilde{Y}_t
    \;\triangleq\;
    \frac{T_t\,Y_t}{p_t}
    \;-\;
    \frac{(1-T_t)\,Y_t}{1-p_t}.
\end{equation}
Intuitively, $\widetilde{Y}_t$ converts each single randomized trial into a noisy but unbiased ``label'' of the treatment effect at $X_t$.

Let $\phi(x)\in\mathbb{R}^d$ be a fixed feature map (e.g., the learned embedding $\phi$ in Algorithm~\ref{alg:active_fusion}).
For theory, we analyze a realizable linear CATE model in this representation:
\begin{equation}
    \tau(x) \;=\; \langle \theta_\star, \phi(x)\rangle, \qquad \theta_\star\in\mathbb{R}^d.
\end{equation}
Given $\{(X_t,T_t,Y_t,p_t)\}_{t=1}^B$, we estimate $\theta_\star$ by (optionally regularized) least squares on pseudo-outcomes:
\begin{equation}
    \widehat{\theta}_\lambda
    \;\triangleq\;
    \arg\min_{\theta\in\mathbb{R}^d}\;
    \sum_{t=1}^{B}\Big(\widetilde{Y}_t-\langle \theta,\phi(X_t)\rangle\Big)^2
    \;+\;\lambda\|\theta\|_2^2,
\end{equation}
and output $\widehat{\tau}_\lambda(x)\triangleq \langle \widehat{\theta}_\lambda,\phi(x)\rangle$.
Algorithm~\ref{alg:theory_est} summarizes this estimator.

Algorithm~\ref{alg:active_fusion} specifies \emph{how the $X_t$'s are chosen} (active sampling) and
\emph{how $p_t$ is assigned} (bounded randomization).
Algorithm~\ref{alg:theory_est} isolates a transparent estimator for analyzing the statistical consequences of that adaptive design.
In practice, one may replace the linear regressor by a neural CATE learner; the key objects in the proofs below are
(i) the unbiased pseudo-outcome, and
(ii) an information matrix that quantifies how well the queried points cover the representation space.

\subsection{Unbiasedness and finite-sample error bounds}

We now state the first main result: despite \emph{adaptive} (non-i.i.d.) selection of $X_t$,
the RCT pseudo-outcome remains unbiased, which implies unbiasedness of the linear estimator (for $\lambda=0$)
and a finite-sample deviation bound (for $\lambda>0$).

\begin{assumption}{(Predictable adaptive design).}
For each $t$, the queried covariate $X_t$ and assignment probability $p_t$ are $\mathcal{F}_{t-1}$-measurable.
\end{assumption}

\begin{assumption}{(RCT randomization and boundedness).}
Conditioned on $(X_t,p_t,\mathcal{F}_{t-1})$, treatment is randomized as
$T_t \sim \mathrm{Bern}(p_t)$ and $T_t \perp (Y_t(0),Y_t(1)) \mid (X_t,p_t,\mathcal{F}_{t-1})$.
Outcomes are bounded: $Y_t(0),Y_t(1)\in[0,1]$ almost surely, and $p_t\in[f_{\min},f_{\max}]$ with $0<f_{\min}\le f_{\max}<1$.
\end{assumption}

\begin{assumption}{(Linear realizability in representation space).}
There exists $\theta_\star\in\mathbb{R}^d$ such that $\tau(x)=\langle \theta_\star,\phi(x)\rangle$ for all $x$.
Moreover, $\|\theta_\star\|_2\le S$ and $\|\phi(x)\|_2\le L$ for all $x$.
\end{assumption}

\begin{lemma}[Unbiased pseudo-outcome under adaptive sampling]
Under Assumptions~\ref{assump:predictable}--\ref{assump:rct_bound},
\begin{equation}
    \mathbb{E}\!\left[\widetilde{Y}_t \mid X_t,p_t,\mathcal{F}_{t-1}\right] \;=\; \tau(X_t),
    \qquad t=1,\dots,B.
\end{equation}
Moreover, $\widetilde{Y}_t$ is uniformly bounded:
\begin{equation}
    \label{eq:pseudo_bound}
    |\widetilde{Y}_t|
    \;\le\;
    \max\!\left\{\frac{1}{f_{\min}},\,\frac{1}{1-f_{\max}}\right\}
    \;\triangleq\; L_p.
\end{equation}
\end{lemma}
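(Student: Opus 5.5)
The plan is to compute the conditional expectation directly, using consistency together with the randomization assumption, and then bound the pseudo-outcome by splitting on $T_t$.

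\emph{Step 1 (consistency).} Fix $t$ and let $\mathcal{G}_t \triangleq \sigma(X_t,p_t,\mathcal{F}_{t-1})$. By Assumption~\ref{assump:predictable}, $X_t$ and $p_t$ are $\mathcal{F}_{t-1}$-measurable, so $\mathcal{G}_t$ is essentially $\mathcal{F}_{t-1}$. This means $p_t$ can be treated as a constant inside conditional expectations. By consistency (Assumption~\ref{assump:sutva}),
\[
T_tY_t=T_tY_t(1), \qquad (1-T_t)Y_t=(1-T_t)Y_t(0).
\]
Hence
\[
\widetilde{Y}_t=\frac{T_tY_t(1)}{p_t}-\frac{(1-T_t)Y_t(0)}{1-p_t}.
\]

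\emph{Step 2 (randomization factorizes the expectation).} By Assumption~\ref{assump:rct_bound}, $T_t\perp (Y_t(0),Y_t(1))$ given $\mathcal{G}_t$, and $\mathbb{E}[T_t\mid\mathcal{G}_t]=p_t$. Therefore
\[
\mathbb{E}[T_tY_t(1)\mid\mathcal{G}_t]=p_t\,\mathbb{E}[Y_t(1)\mid\mathcal{G}_t],
\]
and similarly for the control term with weight $1-p_t$. The weights cancel, giving
\[
\mathbb{E}[\widetilde{Y}_t\mid\mathcal{G}_t]=\mathbb{E}[Y_t(1)-Y_t(0)\mid\mathcal{G}_t].
\]

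\emph{Step 3 (the main obstacle).} It remains to identify $\mathbb{E}[Y_t(1)-Y_t(0)\mid\mathcal{G}_t]$ with $\tau(X_t)$. This requires that the potential outcomes of the unit queried at time $t$ are, conditionally on its covariate, distributed as in the target population, and that they do not depend on the rest of the history. I would argue this from the sampling model. Pool units are i.i.d.\ draws from $\mathbb{P}_X$ with their potential outcomes, and the selection of $X_t$ depends only on covariates and on past units' outcomes. Those past units are independent of unit $t$ (no interference, i.i.d.\ pool). Hence, conditionally on $X_t$, the pair $(Y_t(0),Y_t(1))$ is independent of $\mathcal{F}_{t-1}$ and has the law $\mathbb{P}(\cdot\mid X=X_t)$. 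Assumption~\ref{assump:invariance} ensures this is the same conditional law that defines $\tau$. One subtlety is that $\mathcal{F}_{t-1}$ contains the whole pool of covariates. I would handle this by noting that, under i.i.d.\ sampling, a unit's potential outcomes depend only on its own covariate. This is the step where I would state the implicit modeling assumption explicitly if it is needed, and it yields $\mathbb{E}[\widetilde{Y}_t\mid\mathcal{G}_t]=\tau(X_t)$.

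\emph{Step 4 (boundedness).} Split on $T_t$. If $T_t=1$, then $\widetilde{Y}_t=Y_t/p_t\in[0,1/f_{\min}]$. If $T_t=0$, then $\widetilde{Y}_t=-Y_t/(1-p_t)\in[-1/(1-f_{\max}),0]$, because $Y_t\in[0,1]$ and $p_t\in[f_{\min},f_{\max}]$. Hence $|\widetilde{Y}_t|\le L_p$.
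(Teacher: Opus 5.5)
Your proof is correct and follows the paper's argument step for step: consistency rewrites $\widetilde{Y}_t$, conditional independence together with $\mathbb{E}[T_t\mid X_t,p_t,\mathcal{F}_{t-1}]=p_t$ cancels the inverse-propensity weights, and a case split on $T_t$ gives the bound $L_p$. The one difference is your Step 3, which is more careful than the paper: the paper asserts $\mathbb{E}[Y_t(1)-Y_t(0)\mid X_t,p_t,\mathcal{F}_{t-1}]=\mathbb{E}[Y_t(1)-Y_t(0)\mid X_t]$ by appeal to the randomization assumption, but that assumption only makes $T_t$ independent of the potential outcomes and does not deliver this equality, whereas you justify it from the i.i.d.\ pool and no-interference sampling model and so make explicit a modeling assumption the paper leaves implicit.
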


\begin{proof}
Fix $t\in\{1,\dots,B\}$.
By consistency and $Y_t = T_t Y_t(1) + (1-T_t) Y_t(0)$,
\[
T_t Y_t = T_t Y_t(1),\qquad (1-T_t)Y_t = (1-T_t)Y_t(0).
\]
Hence,
\[
\widetilde{Y}_t
=
\frac{T_t Y_t(1)}{p_t}
-
\frac{(1-T_t) Y_t(0)}{1-p_t}.
\]
Taking conditional expectation given $(X_t,p_t,\mathcal{F}_{t-1})$ and using Assumption~\ref{assump:rct_bound},
\begin{align*}
\mathbb{E}\!\left[\frac{T_t Y_t(1)}{p_t}\,\middle|\,X_t,p_t,\mathcal{F}_{t-1}\right]
&=
\mathbb{E}\!\left[\,
\mathbb{E}\!\left[\frac{T_t}{p_t}\,\middle|\,X_t,p_t,\mathcal{F}_{t-1},Y_t(1)\right]\,Y_t(1)
\,\middle|\,X_t,p_t,\mathcal{F}_{t-1}
\right] \\
&=
\mathbb{E}\!\left[\,
\frac{\mathbb{E}[T_t\mid X_t,p_t,\mathcal{F}_{t-1}]}{p_t}\,Y_t(1)
\,\middle|\,X_t,p_t,\mathcal{F}_{t-1}
\right] \\
&=
\mathbb{E}\!\left[\,Y_t(1)\mid X_t,p_t,\mathcal{F}_{t-1}\right],
\end{align*}
where the second line uses conditional independence $T_t \perp Y_t(1)\mid (X_t,p_t,\mathcal{F}_{t-1})$ and the last line uses
$\mathbb{E}[T_t\mid X_t,p_t,\mathcal{F}_{t-1}]=p_t$.
Similarly,
\[
\mathbb{E}\!\left[\frac{(1-T_t) Y_t(0)}{1-p_t}\,\middle|\,X_t,p_t,\mathcal{F}_{t-1}\right]
=
\mathbb{E}\!\left[\,Y_t(0)\mid X_t,p_t,\mathcal{F}_{t-1}\right].
\]
Therefore,
\[
\mathbb{E}\!\left[\widetilde{Y}_t\mid X_t,p_t,\mathcal{F}_{t-1}\right]
=
\mathbb{E}\!\left[Y_t(1)-Y_t(0)\mid X_t,p_t,\mathcal{F}_{t-1}\right]
=
\mathbb{E}\!\left[Y_t(1)-Y_t(0)\mid X_t\right]
=\tau(X_t),
\]
where the second equality uses the randomization/ignorability in Assumption~\ref{assump:rct_bound} and the last equality is the definition of CATE.

For boundedness, note that exactly one of the two terms in the pseudo outcome is nonzero:
if $T_t=1$, then $\widetilde{Y}_t = Y_t/p_t \in [0,1/p_t]$;
if $T_t=0$, then $\widetilde{Y}_t = -Y_t/(1-p_t)\in[-1/(1-p_t),0]$.
Using $p_t\ge f_{\min}$ and $1-p_t\ge 1-f_{\max}$ gives~\eqref{eq:pseudo_bound}.
\end{proof}

Equation~\eqref{eq:unbiased_pseudo} says: \emph{even if we cherry-pick who enters the experiment},
as long as we randomize treatment for the selected unit, each queried unit yields an unbiased, single-sample estimate of its own CATE.
A concrete example is coupon delivery:
even if we actively choose ``hard'' users (e.g., high-value users with extreme historical targeting),
a randomized coupon/no-coupon decision still produces an unbiased causal contrast at that user.
This lemma is \emph{not} claiming that the observational estimator is unbiased, nor that any neural training procedure is unbiased.
It only uses RCT randomization and boundedness.
The adaptive selection affects \emph{where} we learn (which $X_t$'s), not \emph{whether} a queried unit is causally valid.

A potential skeptic is that ``the learner chooses $X_t$ based on past outcomes; isn't this a form of selection bias that breaks unbiasedness?''
The resolution is that unbiasedness in~\eqref{eq:unbiased_pseudo} is \emph{conditional} on $(X_t,p_t,\mathcal{F}_{t-1})$:
the selection may change the distribution of $X_t$, but $T_t$ is randomized \emph{after} selection and is conditionally independent of potential outcomes.
Technically, $\widetilde{Y}_t-\tau(X_t)$ forms a martingale difference sequence.

\paragraph{A key martingale object.}
Define the centered pseudo-outcome noise
\begin{equation}
    \label{eq:eps_def}
    \varepsilon_t \;\triangleq\; \widetilde{Y}_t - \tau(X_t),
\end{equation}
and the (predictable) feature vector $\phi_t\triangleq \phi(X_t)$.
By Lemma~\ref{lem:unbiased_pseudo}, $\mathbb{E}[\varepsilon_t\mid \mathcal{F}_{t-1},X_t,p_t]=0$ and $|\varepsilon_t|\le 2L_p$ (since both $\widetilde{Y}_t$ and $\tau(X_t)\in[-1,1]$ are bounded).
Let
\begin{equation}
    \label{eq:V_def}
    V_\lambda \;\triangleq\; \lambda I_d + \sum_{t=1}^{B}\phi_t\phi_t^\top, \qquad
    M_B \;\triangleq\; \sum_{t=1}^{B}\phi_t\,\varepsilon_t.
\end{equation}
Then $\{M_t\}_{t\ge 0}$ with $M_t=\sum_{s=1}^t \phi_s\varepsilon_s$ is a vector-valued martingale with respect to $\{\mathcal{F}_t\}$.

\paragraph{A self-normalized concentration lemma (proved in full).}
The finite-sample bound relies on a vector-valued self-normalized martingale inequality.
Because this step is often cited as a black box, we provide a complete proof to make the non-i.i.d. aspect fully transparent.

\begin{lemma}[Conditional sub-Gaussianity of $\varepsilon_t$]
\label{lem:subg}
Under Assumptions~\ref{assump:predictable}--\ref{assump:rct_bound}, for each $t$ and any $\lambda\in\mathbb{R}$,
\begin{equation}
    \label{eq:subg}
\mathbb{E}\!\left[\exp\!\big(\lambda\,\varepsilon_t\big)\,\middle|\,\mathcal{F}_{t-1},X_t,p_t\right]
    \;\le\; \exp\!\left(\frac{\lambda^2\sigma^2}{2}\right),
    \qquad \text{with }\ \sigma \triangleq 2L_p.
\end{equation}
\end{lemma}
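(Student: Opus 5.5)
This is Hoeffding's lemma applied conditionally. Conditioning on $\mathcal{G}_t \triangleq \sigma(\mathcal{F}_{t-1},X_t,p_t)$ makes $\tau(X_t)$ a $\mathcal{G}_t$-measurable constant. By Lemma~\ref{lem:unbiased_pseudo}, $\mathbb{E}[\varepsilon_t\mid\mathcal{G}_t]=0$.

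The first step is to identify a $\mathcal{G}_t$-measurable interval that contains $\varepsilon_t$ almost surely. From the proof of Lemma~\ref{lem:unbiased_pseudo}, $\widetilde{Y}_t\in[-1/(1-p_t),\,1/p_t]$. Hence $\varepsilon_t\in[a_t,b_t]$ with
\begin{equation*}
a_t\triangleq-\frac{1}{1-p_t}-\tau(X_t),\qquad b_t\triangleq\frac{1}{p_t}-\tau(X_t).
\end{equation*}
Both endpoints are $\mathcal{G}_t$-measurable, and the interval has length
\begin{equation*}
b_t-a_t=\frac{1}{p_t}+\frac{1}{1-p_t}\le\frac{1}{f_{\min}}+\frac{1}{1-f_{\max}}\le 2L_p .
\end{equation*}
As a cruder alternative, the same constant follows from $|\varepsilon_t|\le 2L_p$, which gives an interval of length $4L_p$.

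The second step invokes Hoeffding's lemma in its conditional form. For a random variable that is conditionally mean-zero and confined to a conditionally fixed interval $[a,b]$,
\begin{equation*}
\mathbb{E}[e^{\lambda\varepsilon}\mid\mathcal{G}]\le\exp\!\big(\lambda^2(b-a)^2/8\big).
\end{equation*}
To justify this conditional version, one can either apply the unconditional lemma to the regular conditional distribution of $\varepsilon_t$ given $\mathcal{G}_t$, or redo the convexity argument pointwise. The convexity argument goes as follows. Write $e^{\lambda x}\le\frac{b-x}{b-a}e^{\lambda a}+\frac{x-a}{b-a}e^{\lambda b}$ and take conditional expectations, using the mean-zero property. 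Then bound the log of the result, $L(h)=-h\theta+\log(1-\theta+\theta e^{h})$, via $L(0)=L'(0)=0$ and $L''\le 1/4$.

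The third step substitutes the length bound. With $b_t-a_t\le 2L_p$ we get the exponent $\lambda^2(2L_p)^2/8=\lambda^2L_p^2/2\le\lambda^2\sigma^2/2$ for $\sigma=2L_p$. If we use the cruder length $4L_p$ instead, we get $\lambda^2(4L_p)^2/8=\lambda^2(2L_p)^2/2$, which is exactly the stated bound. Either way the claim holds for every real $\lambda$.

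The only delicate point is the measure-theoretic justification of the conditional Hoeffding step. In particular, the interval endpoints depend on $p_t$ and $X_t$, so they must be $\mathcal{G}_t$-measurable. This is ensured by Assumption~\ref{assump:predictable}.
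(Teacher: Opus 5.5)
Your proof is correct and follows the paper's route: condition on $(\mathcal{F}_{t-1},X_t,p_t)$, use Lemma~\ref{lem:unbiased_pseudo} for conditional centering, and apply the conditional Hoeffding lemma. Your ``cruder alternative'' with the interval $[-2L_p,2L_p]$ (justified by $|\widetilde{Y}_t|\le L_p$ and $|\tau(X_t)|\le 1\le L_p$) is exactly the paper's argument, while your primary $\mathcal{G}_t$-measurable interval of length $1/p_t+1/(1-p_t)\le 2L_p$ is a valid sharpening that gives the stronger constant $L_p$ instead of $2L_p$.
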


\begin{proof}
Condition on $(\mathcal{F}_{t-1},X_t,p_t)$.
By Lemma~\ref{lem:unbiased_pseudo}, $\mathbb{E}[\varepsilon_t\mid \mathcal{F}_{t-1},X_t,p_t]=0$.
Moreover, $Y_t\in[0,1]$ and $p_t\in[f_{\min},f_{\max}]$ imply $|\widetilde{Y}_t|\le L_p$ by~\eqref{eq:pseudo_bound}.
Also $\tau(X_t)=\mathbb{E}[Y(1)-Y(0)\mid X_t]\in[-1,1]$ because $Y(1),Y(0)\in[0,1]$.
Hence $|\varepsilon_t| \le |\widetilde{Y}_t|+|\tau(X_t)| \le L_p+1 \le 2L_p$ since $L_p\ge 1$.

Now apply Hoeffding's lemma in conditional form:
if $Z$ is zero-mean and almost surely lies in $[a,b]$ given a sigma-field, then
$\mathbb{E}[e^{\lambda Z}\mid \cdot] \le \exp(\lambda^2(b-a)^2/8)$.
Here $Z=\varepsilon_t$ and $[a,b]=[-2L_p,2L_p]$, so $(b-a)^2/8=(4L_p)^2/8=2L_p^2$.
Thus~\eqref{eq:subg} holds with $\sigma=2L_p$.
\end{proof}

\begin{lemma}[Self-normalized martingale inequality (vector form)]
\label{lem:selfnorm}
Assume $\lambda>0$ and let $V_\lambda$ and $M_B$ be defined in~\eqref{eq:V_def}.
Suppose $\varepsilon_t$ satisfies the conditional sub-Gaussian property~\eqref{eq:subg} with parameter $\sigma$ and $\phi_t$ is $\mathcal{F}_{t-1}$-measurable.
Then for any $\delta\in(0,1)$, with probability at least $1-\delta$,
\begin{equation}
    \label{eq:selfnorm_lemma}
    \|M_B\|_{V_\lambda^{-1}}
    \;\le\;
    \sigma\sqrt{
        2\log\!\left(
            \frac{\det(V_\lambda)^{1/2}}{\det(\lambda I_d)^{1/2}\,\delta}
        \right)
    }.
\end{equation}
Here $\|z\|_{A}\triangleq \sqrt{z^\top A z}$ for any positive semidefinite $A$.
\end{lemma}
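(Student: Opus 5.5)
The proof follows the method of mixtures (Abbasi-Yadkori, P\'al and Szepesv\'ari). There are three steps: build an exponential supermartingale for each fixed direction, average it over a Gaussian prior to obtain a self-normalized quantity, and finish with Ville's maximal inequality.

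\textbf{Step 1: fixed-direction supermartingale.} For each fixed $u\in\mathbb{R}^d$, set $D_t^u \triangleq \exp\!\big(\langle u,\phi_t\rangle\varepsilon_t/\sigma - \tfrac12\langle u,\phi_t\rangle^2\big)$ and $S_t^u\triangleq\prod_{s\le t}D_s^u$. Since $\phi_t$ is $\mathcal{F}_{t-1}$-measurable, Lemma~\ref{lem:subg} can be applied with the scalar $\langle u,\phi_t\rangle/\sigma$. Conditioning on $(\mathcal{F}_{t-1},X_t,p_t)$ and then using the tower property gives $\mathbb{E}[D_t^u\mid\mathcal{F}_{t-1}]\le 1$. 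Hence $S_t^u$ is a nonnegative supermartingale with $\mathbb{E}[S_B^u]\le 1$. Writing $\bar V_t\triangleq\sum_{s\le t}\phi_s\phi_s^\top$, we have $S_t^u=\exp(\langle u,M_t\rangle/\sigma-\tfrac12\|u\|_{\bar V_t}^2)$.

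\textbf{Step 2: Gaussian mixture.} Let $u\sim\mathcal{N}(0,\lambda^{-1}I_d)$ be independent of everything else, and define $\bar S_t\triangleq\mathbb{E}_u[S_t^u]$. By Fubini, $\bar S_t$ is again a nonnegative supermartingale with $\mathbb{E}[\bar S_B]\le1$. Completing the square in the Gaussian integral, with $V_\lambda=\lambda I+\bar V_t$, yields the closed form
\[
\bar S_t=\Big(\tfrac{\det(\lambda I_d)}{\det(V_\lambda)}\Big)^{1/2}\exp\!\Big(\tfrac{1}{2\sigma^2}\|M_t\|_{V_\lambda^{-1}}^2\Big).
\]
Here $V_\lambda$ is understood at time $t$. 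The computation uses $\int \exp(\langle u,a\rangle-\tfrac12 u^\top(\bar V+\lambda I)u)\,du=(2\pi)^{d/2}\det(V_\lambda)^{-1/2}\exp(\tfrac12 a^\top V_\lambda^{-1}a)$ with $a=M_t/\sigma$.

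\textbf{Step 3: tail bound.} Markov's inequality gives $\mathbb{P}(\bar S_B\ge1/\delta)\le\delta$. For a bound holding uniformly in $t$ we would instead use Ville's maximal inequality, or equivalently a stopping-time argument. Taking logarithms on the complement event $\bar S_B<1/\delta$ gives $\|M_B\|_{V_\lambda^{-1}}^2\le 2\sigma^2\log\!\big(\det(V_\lambda)^{1/2}\det(\lambda I_d)^{-1/2}/\delta\big)$, which is~\eqref{eq:selfnorm_lemma}.

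\textbf{Main obstacle.} The delicate point is Step 1 combined with the Fubini exchange in Step 2. In Step 1, $\langle u,\phi_t\rangle$ must be predictable so that the sub-Gaussian bound applies conditionally. This requires that Lemma~\ref{lem:subg} holds conditionally on a $\sigma$-field containing $\phi_t$, which it does because $X_t$ is $\mathcal{F}_{t-1}$-measurable. In Step 2, $u$ must be drawn independently of the filtration so that the mixture inherits the supermartingale property via Tonelli, which is valid because the integrand is nonnegative. The Gaussian integral is routine once these are in place.
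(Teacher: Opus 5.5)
Your proposal is correct and follows essentially the same route as the paper: a fixed-direction exponential supermartingale built from the conditional sub-Gaussian bound applied with the predictable scalar $\langle u,\phi_t\rangle/\sigma$, then a Gaussian mixture over $u\sim\mathcal{N}(0,\lambda^{-1}I_d)$ justified by Fubini/Tonelli, then the closed-form Gaussian integral, and finally Markov's inequality at the fixed horizon $B$. Your remark about Ville's maximal inequality is an optional strengthening to a time-uniform bound, which the stated lemma does not require.
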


\begin{proof}
The proof proceeds by constructing an exponential supermartingale and applying a mixture (Gaussian integration) argument.

\textbf{Step 1: exponential supermartingale for a fixed direction.}
Fix any $u\in\mathbb{R}^d$ and define for $t=0,1,\dots,B$,
\begin{equation}
    \label{eq:Ztu}
    Z_t(u)
    \;\triangleq\;
    \exp\!\left(
        \frac{1}{\sigma}u^\top M_t
        \;-\;
        \frac{1}{2}\,u^\top\Big(\sum_{s=1}^{t}\phi_s\phi_s^\top\Big)u
    \right),
\end{equation}
with the convention $M_0=0$ and the empty sum equals $0$.

We claim that $\{Z_t(u)\}_{t=0}^B$ is a nonnegative supermartingale w.r.t.\ $\{\mathcal{F}_t\}$.
Indeed, using $M_t=M_{t-1}+\phi_t\varepsilon_t$ and $\phi_t$ being $\mathcal{F}_{t-1}$-measurable,
\begin{align*}
Z_t(u)
&=
Z_{t-1}(u)\cdot
\exp\!\left(
        \frac{1}{\sigma}u^\top\phi_t\,\varepsilon_t
        -
        \frac{1}{2}u^\top\phi_t\phi_t^\top u
\right) \\
&=
Z_{t-1}(u)\cdot
\exp\!\left(
        \frac{1}{\sigma}(u^\top\phi_t)\,\varepsilon_t
        -
        \frac{1}{2}(u^\top\phi_t)^2
\right).
\end{align*}
Taking conditional expectation given $\mathcal{F}_{t-1}$ and applying~\eqref{eq:subg} with $\lambda=(u^\top\phi_t)/\sigma$ yields
\begin{align*}
\mathbb{E}\!\left[ Z_t(u)\mid \mathcal{F}_{t-1}\right]
&=
Z_{t-1}(u)\cdot
\exp\!\left(-\frac{1}{2}(u^\top\phi_t)^2\right)
\cdot
\mathbb{E}\!\left[\exp\!\left(\frac{1}{\sigma}(u^\top\phi_t)\varepsilon_t\right)\,\middle|\,\mathcal{F}_{t-1}\right] \\
&\le
Z_{t-1}(u)\cdot
\exp\!\left(-\frac{1}{2}(u^\top\phi_t)^2\right)
\cdot
\exp\!\left(\frac{1}{2}(u^\top\phi_t)^2\right)
=
Z_{t-1}(u).
\end{align*}
Thus $\{Z_t(u)\}$ is a supermartingale and in particular $\mathbb{E}[Z_B(u)]\le Z_0(u)=1$.

\textbf{Step 2: mixture over $u$ to obtain a uniform bound.}
Let $U\sim\mathcal{N}(0,\lambda^{-1}I_d)$ be independent of everything else.
Define the mixture random variable
\[
\overline Z_B \;\triangleq\; \mathbb{E}\!\left[ Z_B(U)\,\middle|\,\mathcal{F}_B\right].
\]
Since $Z_B(u)\ge 0$ for all $u$, by Fubini's theorem and the supermartingale property,
\[
\mathbb{E}[\overline Z_B] = \mathbb{E}\big[\mathbb{E}[Z_B(U)\mid \mathcal{F}_B]\big] = \mathbb{E}[Z_B(U)] = \mathbb{E}\big[\mathbb{E}[Z_B(U)]\big] \le 1.
\]

\textbf{Step 3: explicit evaluation of the Gaussian integral.}
Write $\Sigma_B\triangleq \sum_{t=1}^{B}\phi_t\phi_t^\top$ so that $V_\lambda=\lambda I_d+\Sigma_B$.
Using the definition~\eqref{eq:Ztu} at $t=B$,
\[
Z_B(u) = \exp\!\left(\frac{1}{\sigma}u^\top M_B - \frac{1}{2}u^\top \Sigma_B u\right).
\]
The density of $U\sim\mathcal{N}(0,\lambda^{-1}I_d)$ is
\[
(2\pi)^{-d/2}\det(\lambda^{-1}I_d)^{-1/2}\exp\!\left(-\frac{1}{2}u^\top(\lambda I_d)u\right)
=
(2\pi)^{-d/2}\det(\lambda I_d)^{1/2}\exp\!\left(-\frac{1}{2}u^\top(\lambda I_d)u\right).
\]
Therefore, conditioning on $\mathcal{F}_B$ (so $M_B$ and $\Sigma_B$ are fixed),
\begin{align*}
\overline Z_B
&=
\int_{\mathbb{R}^d}
\exp\!\left(\frac{1}{\sigma}u^\top M_B - \frac{1}{2}u^\top \Sigma_B u\right)
\cdot
(2\pi)^{-d/2}\det(\lambda I_d)^{1/2}\exp\!\left(-\frac{1}{2}u^\top(\lambda I_d)u\right)\,du \\
&=
(2\pi)^{-d/2}\det(\lambda I_d)^{1/2}
\int_{\mathbb{R}^d}
\exp\!\left(\frac{1}{\sigma}u^\top M_B - \frac{1}{2}u^\top (\Sigma_B+\lambda I_d)u\right)\,du \\
&=
(2\pi)^{-d/2}\det(\lambda I_d)^{1/2}
\int_{\mathbb{R}^d}
\exp\!\left(\frac{1}{\sigma}u^\top M_B - \frac{1}{2}u^\top V_\lambda u\right)\,du.
\end{align*}
Applying the standard Gaussian integral identity
\[
\int_{\mathbb{R}^d}\exp\!\left(b^\top u - \frac{1}{2}u^\top A u\right)\,du
=
(2\pi)^{d/2}\det(A)^{-1/2}\exp\!\left(\frac{1}{2}b^\top A^{-1}b\right),
\quad A\succ 0,
\]
with $A=V_\lambda$ and $b=M_B/\sigma$ gives
\begin{equation}
    \label{eq:mixture_eval}
    \overline Z_B
    =
    \frac{\det(\lambda I_d)^{1/2}}{\det(V_\lambda)^{1/2}}
    \exp\!\left(\frac{1}{2\sigma^2}M_B^\top V_\lambda^{-1}M_B\right)
    =
    \frac{\det(\lambda I_d)^{1/2}}{\det(V_\lambda)^{1/2}}
    \exp\!\left(\frac{1}{2\sigma^2}\|M_B\|_{V_\lambda^{-1}}^2\right).
\end{equation}

\textbf{Step 4: apply Markov's inequality.}
Since $\mathbb{E}[\overline Z_B]\le 1$, for any $\delta\in(0,1)$,
\[
\mathbb{P}\!\left(\overline Z_B \ge \frac{1}{\delta}\right) \le \delta.
\]
On the complement event (probability at least $1-\delta$), combine with~\eqref{eq:mixture_eval} to obtain
\[
\frac{\det(\lambda I_d)^{1/2}}{\det(V_\lambda)^{1/2}}
\exp\!\left(\frac{1}{2\sigma^2}\|M_B\|_{V_\lambda^{-1}}^2\right)
\le \frac{1}{\delta}.
\]
Taking logarithms and rearranging yields exactly~\eqref{eq:selfnorm_lemma}.
\end{proof}

\paragraph{Main finite-sample theorem.}
We can now prove the finite-sample guarantee in full detail.

\begin{theorem}[Finite-sample unbiasedness and deviation bound]
Assume~\ref{assump:predictable}--\ref{assump:linear} and define $V_\lambda$ as in~\eqref{eq:V_def}.
Let $\widehat{\theta}_\lambda$ be defined by~\eqref{eq:ridge}.
Then:
\begin{enumerate}
    \item \textbf{(Conditional unbiasedness for OLS).}
    If $\lambda=0$ and $V_0$ is invertible, then
    \begin{equation}
        \label{eq:ols_unbiased}
        \mathbb{E}\!\left[\widehat{\theta}_0 \mid \{X_t,p_t\}_{t=1}^B\right]=\theta_\star,
        \qquad
        \mathbb{E}\!\left[\widehat{\tau}_0(x)\mid \{X_t,p_t\}_{t=1}^B\right]=\tau(x),\ \forall x\in\mathcal{X}.
    \end{equation}
    \item \textbf{(High-probability self-normalized bound).}
    Fix $\delta\in(0,1)$ and any $\lambda>0$.
    With probability at least $1-\delta$,
    \begin{equation}
        \big\|\widehat{\theta}_\lambda-\theta_\star\big\|_{V_\lambda}
        \;\le\;
        \underbrace{\sigma\sqrt{2\log\!\left(\frac{\det(V_\lambda)^{1/2}}{\det(\lambda I_d)^{1/2}\,\delta}\right)}}_{\text{stochastic term}}
        \;+\;
        \underbrace{\sqrt{\lambda}\,S}_{\text{regularization bias}},
    \end{equation}
    where $\sigma$ can be taken as $2L_p$ (Lemma~\ref{lem:subg}).
    Consequently, for every $x\in\mathcal{X}$,
    \begin{equation}
        \big|\widehat{\tau}_\lambda(x)-\tau(x)\big|
        \;\le\;
        \beta_{B}(\delta)\cdot
        \sqrt{\phi(x)^\top V_\lambda^{-1}\phi(x)},
        \qquad
        \beta_{B}(\delta)\triangleq
        \sigma\sqrt{2\log\!\left(\frac{\det(V_\lambda)^{1/2}}{\det(\lambda I_d)^{1/2}\,\delta}\right)}
        +\sqrt{\lambda}S .
    \end{equation}
\end{enumerate}
\end{theorem}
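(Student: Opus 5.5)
The plan is to reduce everything to the martingale object $M_B=\sum_t\phi_t\varepsilon_t$ from \eqref{eq:V_def}, via the closed form of the ridge solution. Realizability (Assumption~\ref{assump:linear}) gives $\widetilde Y_t=\langle\theta_\star,\phi_t\rangle+\varepsilon_t$. Then $b=\sum_t\phi_t\widetilde Y_t=\Sigma_B\theta_\star+M_B$, where $\Sigma_B=V_0$, and $\widehat\theta_\lambda=V_\lambda^{-1}b$. This yields the decomposition
\[
\widehat\theta_\lambda-\theta_\star \;=\; V_\lambda^{-1}M_B \;-\; \lambda V_\lambda^{-1}\theta_\star .
\]
Both parts of the theorem follow from this identity.

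\textbf{Part (i).} With $\lambda=0$ the decomposition reads $\widehat\theta_0-\theta_\star=V_0^{-1}\sum_t\phi_t\varepsilon_t$. The task is to show $\mathbb{E}[\varepsilon_t\mid\{X_s,p_s\}_{s=1}^B]=0$. This is the delicate step, because later design points $X_s,p_s$ with $s>t$ may depend on $\varepsilon_t$ through the adaptive policy. Lemma~\ref{lem:unbiased_pseudo} only gives $\mathbb{E}[\varepsilon_t\mid\mathcal{F}_{t-1},X_t,p_t]=0$.

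\begin{itemize}
\item I would first prove the claim when the design is non-anticipating, i.e.\ when the sequence $\{X_s,p_s\}$ is determined by $\mathcal{D}_{\mathrm{obs}}$, $\mathcal{D}_{\mathrm{pool}}$ and exogenous randomness, or when we condition on the realized selection path. Then the tower property applied to Lemma~\ref{lem:unbiased_pseudo} gives the result.
\item For genuinely outcome-dependent designs, I would state the unbiasedness in the form the martingale structure actually supports. For each $t$, $\mathbb{E}[\phi_t\varepsilon_t\mid\mathcal{F}_{t-1}]=0$, because $\phi_t$ is $\mathcal{F}_{t-1}$-measurable (Assumption~\ref{assump:predictable}). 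Hence $\mathbb{E}[M_B]=0$.
\item I would flag that exact conditional unbiasedness given the full design requires the selection of $X_s$ not to depend on past pseudo-outcome noise, and add this as a remark.
\end{itemize}

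Since $\widehat\tau_0(x)=\langle\widehat\theta_0,\phi(x)\rangle$ is linear in $\widehat\theta_0$, the statement for $\widehat\tau_0(x)$ follows immediately.

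\textbf{Part (ii).} Take the $V_\lambda$-norm of the decomposition and apply the triangle inequality. This gives
\[
\|\widehat\theta_\lambda-\theta_\star\|_{V_\lambda}\le\|M_B\|_{V_\lambda^{-1}}+\lambda\|\theta_\star\|_{V_\lambda^{-1}}.
\]
The two terms are handled separately.

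\begin{itemize}
\item \emph{Stochastic term.} Lemma~\ref{lem:subg} supplies conditional sub-Gaussianity with $\sigma=2L_p$, and $\phi_t$ is predictable by Assumption~\ref{assump:predictable}. So Lemma~\ref{lem:selfnorm} applies and bounds $\|M_B\|_{V_\lambda^{-1}}$ by the stochastic term with probability at least $1-\delta$.
\item \emph{Bias term.} Since $V_\lambda\succeq\lambda I_d$, we have $\lambda\|\theta_\star\|_{V_\lambda^{-1}}\le\lambda\cdot\lambda^{-1/2}\|\theta_\star\|_2\le\sqrt\lambda S$.
\end{itemize}

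\textbf{Pointwise bound.} By Cauchy--Schwarz in the $V_\lambda$ geometry,
\[
|\widehat\tau_\lambda(x)-\tau(x)|=|\langle\widehat\theta_\lambda-\theta_\star,\phi(x)\rangle|\le\|\widehat\theta_\lambda-\theta_\star\|_{V_\lambda}\,\|\phi(x)\|_{V_\lambda^{-1}}.
\]
The high-probability event from Lemma~\ref{lem:selfnorm} does not depend on $x$, so the bound holds simultaneously for every $x\in\mathcal{X}$.

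The main obstacle is the conditioning issue in Part (i). Part (ii) is a routine combination of the lemmas already proved.
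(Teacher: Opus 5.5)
Your Part (ii) and the pointwise bound follow the paper's proof essentially line for line: the same identity $\widehat\theta_\lambda-\theta_\star=V_\lambda^{-1}M_B-\lambda V_\lambda^{-1}\theta_\star$, the same use of Lemmas~\ref{lem:subg} and~\ref{lem:selfnorm}, the same bias bound from $V_\lambda\succeq\lambda I_d$, and the same Cauchy--Schwarz step.

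For Part (i), your suspicion is justified, and it exposes a flaw in the paper's own argument. The paper asserts that $\mathbb{E}[\varepsilon_t\mid X_t,p_t,\mathcal{F}_{t-1}]=0$ ``in particular'' yields $\mathbb{E}[\varepsilon\mid\{X_t,p_t\}_{t=1}^B]=0$. That is exactly the step you decline to take, and it is false under Assumption~\ref{assump:predictable}. Take $d=1$, $\phi\equiv1$ (so the CATE is the constant $\theta_\star$), $B=2$, $p_1=\tfrac12$, and let the second queried unit (or $p_2$) be a function of $T_1$. Then $V_0=2$ is deterministic and the design reveals $T_1$, so on $\{T_1=1\}$
\[
\mathbb{E}\bigl[\widehat\theta_0\mid\{X_t,p_t\}_{t=1}^{2}\bigr]=\tfrac12\bigl(2\mu_1(X_1)+\theta_\star\bigr)\neq\theta_\star
\]
whenever $\mu_1(X_1)>0$. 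So (i) as stated needs an extra non-anticipation hypothesis, and restricting to that case is the correct repair.

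Two points in your plan still need fixing.

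First, drop ``or when we condition on the realized selection path''. Under an outcome-dependent policy, that is precisely the conditioning that breaks. In the non-anticipating case, the tower step also uses that unit $t$'s draw $(T_t,Y_t(0),Y_t(1))$ is conditionally independent of any exogenous randomness driving later selections.

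Second, $\mathbb{E}[M_B]=0$ is unbiasedness of the estimating equation at $\theta_\star$, not of the estimator. Since $\widehat\theta_0-\theta_\star=V_0^{-1}M_B$ and $V_0^{-1}$ depends on the whole outcome-dependent design, it cannot be pulled out of the expectation. That is only possible when $V_0$ is deterministic, as in the example above. Unconditional unbiasedness genuinely fails in general:
\begin{itemize}
\item Take $d=2$, $\phi(x^{(j)})=e_j$, $p_t\equiv\tfrac12$, and $\mu_1(x^{(1)})=\mu_0(x^{(1)})=\tfrac12$, so that $\tau(x^{(1)})=0$.
\item Query $x^{(1)}$, then $x^{(2)}$, then $x^{(1)}$ again if $\widetilde Y_1\le0$ and $x^{(2)}$ otherwise. $V_0$ is invertible on both branches.
\item Then $\widehat\theta_{0,1}$ is an adaptively stopped sample mean, with $\mathbb{E}[\widehat\theta_{0,1}]=2\,\mathbb{P}(\widetilde Y_1=2)+\tfrac12\,\mathbb{E}[\widetilde Y_1\mathbf{1}\{\widetilde Y_1\le0\}]=\tfrac12-\tfrac14=\tfrac14\neq0$.
\end{itemize}
Your planned remark should therefore say that, for genuinely adaptive designs, what survives is the martingale property of $M_B$ (which is all Part (ii) uses), not unbiasedness of $\widehat\theta_0$.
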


\begin{proof}
\textbf{Part (i): conditional unbiasedness for OLS ($\lambda=0$).}
Let $\Phi\in\mathbb{R}^{B\times d}$ be the design matrix with $t$-th row $\phi_t^\top$.
Let $\widetilde Y\in\mathbb{R}^{B}$ be the vector with entries $\widetilde Y_t$.
By linear realizability, $\tau(X_t)=\phi_t^\top\theta_\star$ for all $t$.
Define the noise vector $\varepsilon\in\mathbb{R}^{B}$ with entries $\varepsilon_t=\widetilde Y_t-\tau(X_t)$.
Then we have the exact decomposition
\begin{equation}
    \label{eq:linear_decomp}
    \widetilde Y = \Phi\theta_\star + \varepsilon.
\end{equation}
By Lemma~\ref{lem:unbiased_pseudo}, $\mathbb{E}[\varepsilon_t\mid X_t,p_t,\mathcal{F}_{t-1}]=0$; in particular,
conditioning on the full design $\{X_t,p_t\}_{t=1}^B$ implies $\mathbb{E}[\varepsilon\mid \{X_t,p_t\}_{t=1}^B]=0$.

When $\lambda=0$ and $V_0=\Phi^\top\Phi$ is invertible, the OLS solution is
\[
\widehat{\theta}_0 = (\Phi^\top\Phi)^{-1}\Phi^\top \widetilde Y
= (\Phi^\top\Phi)^{-1}\Phi^\top(\Phi\theta_\star+\varepsilon)
= \theta_\star + (\Phi^\top\Phi)^{-1}\Phi^\top\varepsilon.
\]
Taking conditional expectation given $\{X_t,p_t\}_{t=1}^B$ yields
$\mathbb{E}[\widehat\theta_0\mid \{X_t,p_t\}]=\theta_\star$, proving the first claim in~\eqref{eq:ols_unbiased}.
The second claim follows since $\widehat\tau_0(x)=\phi(x)^\top\widehat\theta_0$ and $\tau(x)=\phi(x)^\top\theta_\star$.

\textbf{Part (ii): high-probability self-normalized bound ($\lambda>0$).}
From the ridge normal equations, the closed form is
\[
\widehat{\theta}_\lambda
=
V_\lambda^{-1}\sum_{t=1}^B \phi_t \widetilde Y_t
=
V_\lambda^{-1}\sum_{t=1}^B \phi_t(\phi_t^\top\theta_\star+\varepsilon_t)
=
V_\lambda^{-1}\Big(\sum_{t=1}^B\phi_t\phi_t^\top\Big)\theta_\star
+
V_\lambda^{-1}\sum_{t=1}^B \phi_t\varepsilon_t.
\]
Since $\sum_{t=1}^B\phi_t\phi_t^\top = V_\lambda-\lambda I_d$, we get the exact identity
\begin{equation}
    \label{eq:theta_error_decomp}
    \widehat{\theta}_\lambda - \theta_\star
    =
    V_\lambda^{-1} M_B \;-\; \lambda V_\lambda^{-1}\theta_\star,
\end{equation}
where $M_B=\sum_{t=1}^B\phi_t\varepsilon_t$ as in~\eqref{eq:V_def}.
Take the $V_\lambda$-norm:
\[
\|\widehat{\theta}_\lambda-\theta_\star\|_{V_\lambda}
\le
\|V_\lambda^{-1}M_B\|_{V_\lambda} + \lambda\|V_\lambda^{-1}\theta_\star\|_{V_\lambda}.
\]
Using $\|V_\lambda^{-1}M_B\|_{V_\lambda} = \|M_B\|_{V_\lambda^{-1}}$ and
$\lambda\|V_\lambda^{-1}\theta_\star\|_{V_\lambda}
= \lambda\sqrt{\theta_\star^\top V_\lambda^{-1}\theta_\star}
\le \lambda\sqrt{\theta_\star^\top (\lambda I_d)^{-1}\theta_\star}
= \sqrt{\lambda}\,\|\theta_\star\|_2
\le \sqrt{\lambda}\,S$,
we obtain
\begin{equation}
    \label{eq:tri}
    \|\widehat{\theta}_\lambda-\theta_\star\|_{V_\lambda}
\le
\|M_B\|_{V_\lambda^{-1}} + \sqrt{\lambda}S.
\end{equation}
By Lemma~\ref{lem:subg} and Lemma~\ref{lem:selfnorm}, with probability at least $1-\delta$,
\[
\|M_B\|_{V_\lambda^{-1}}
\le
\sigma\sqrt{2\log\!\left(\frac{\det(V_\lambda)^{1/2}}{\det(\lambda I_d)^{1/2}\,\delta}\right)}.
\]
Plugging into~\eqref{eq:tri} proves~\eqref{eq:self_norm}.

Finally, for any $x\in\mathcal{X}$,
\[
|\widehat{\tau}_\lambda(x)-\tau(x)|
=
|\phi(x)^\top(\widehat{\theta}_\lambda-\theta_\star)|
\le
\|\widehat{\theta}_\lambda-\theta_\star\|_{V_\lambda}\cdot \|\phi(x)\|_{V_\lambda^{-1}}
=
\|\widehat{\theta}_\lambda-\theta_\star\|_{V_\lambda}\cdot \sqrt{\phi(x)^\top V_\lambda^{-1}\phi(x)},
\]
which combined with~\eqref{eq:self_norm} yields~\eqref{eq:pointwise}.
\end{proof}

Theorem~\ref{thm:finite} quantifies a clean ``budget $\Rightarrow$ error'' tradeoff under \emph{adaptive} experimentation:
the estimation error is controlled by an \emph{information matrix} $V_\lambda$ built from the queried covariates.
The crucial difference from classical i.i.d. regression is that the $X_t$'s may be chosen adaptively based on past outcomes;
nevertheless, the deviation bound remains valid because the centered pseudo-outcome noise forms a martingale difference sequence.

\begin{corollary}[Finite-sample PEHE bound]
Under the conditions of Theorem~\ref{thm:finite}, with probability at least $1-\delta$,
\begin{equation}
    \mathcal{R}(\widehat{\tau}_\lambda)
    \;\le\;
    \beta_{B}(\delta)\cdot
    \sqrt{\mathbb{E}_{X\sim \mathbb{P}_X}\!\left[\phi(X)^\top V_\lambda^{-1}\phi(X)\right]}.
\end{equation}
In particular, if the queried design is \emph{well-conditioned} in the sense that
$V_0 \succeq \kappa B\, I_d$ for some $\kappa>0$, and $\Sigma_X\triangleq \mathbb{E}_{\mathbb{P}_X}[\phi(X)\phi(X)^\top]\preceq L^2 I_d$,
then (taking $\lambda=0$ for simplicity and assuming $V_0$ is invertible),
\begin{equation}
    \mathcal{R}(\widehat{\tau}_0)
    \;\le\;
    \beta_{B}(\delta)\cdot \sqrt{\frac{\mathrm{Tr}(\Sigma_X)}{\kappa B}}
    \;\lesssim\;
    \frac{L\,\sigma}{\sqrt{\kappa}}\sqrt{\frac{d\log(B/\delta)}{B}}.
\end{equation}
\end{corollary}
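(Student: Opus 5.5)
The plan is to integrate the pointwise bound of Theorem~\ref{thm:finite} over the target covariate law $\mathbb{P}_X$, and then specialize under the well-conditioning assumption. The key observation is that the high-probability event in Theorem~\ref{thm:finite} is a single event, of probability at least $1-\delta$, on which the parameter-level inequality $\|\widehat\theta_\lambda-\theta_\star\|_{V_\lambda}\le\beta_B(\delta)$ holds. This event does not depend on $x$. Since $V_\lambda$ and $\widehat\theta_\lambda$ are functions of the RCT data only, and $X\sim\mathbb{P}_X$ in the definition of $\mathcal{R}$ is an independent fresh draw, the pointwise Cauchy--Schwarz bound
\[
|\widehat\tau_\lambda(x)-\tau(x)|\le\|\widehat\theta_\lambda-\theta_\star\|_{V_\lambda}\sqrt{\phi(x)^\top V_\lambda^{-1}\phi(x)}
\]
holds simultaneously for all $x$ on that event.

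On this event I would square the bound, take $\mathbb{E}_{X\sim\mathbb{P}_X}$ with the data held fixed, and take square roots. Since $\beta_B(\delta)$ is data-measurable but constant in $X$, it factors out, giving the first display of the corollary. No union bound over $x$ is needed, because the uniformity comes from the parameter-space bound.

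For the rate, I would write
\[
\mathbb{E}_{\mathbb{P}_X}[\phi(X)^\top V^{-1}\phi(X)]=\mathrm{Tr}(V^{-1}\Sigma_X).
\]
Under $V_0\succeq\kappa B I_d$ we have $V_0^{-1}\preceq(\kappa B)^{-1}I_d$, and since $\Sigma_X\succeq0$ this gives $\mathrm{Tr}(V_0^{-1}\Sigma_X)\le\mathrm{Tr}(\Sigma_X)/(\kappa B)$. Together with $\Sigma_X\preceq L^2I_d$, this yields $\mathrm{Tr}(\Sigma_X)\le dL^2$.

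The main obstacle is that the appendix version of Theorem~\ref{thm:finite}(ii) is proved only for $\lambda>0$, while $\beta_B(\delta)$ involves $\log\det(V_\lambda)/\det(\lambda I)$, which blows up as $\lambda=0$. My plan is to avoid $\lambda=0$ in the stochastic term. I would first run the whole argument with a ridge $\lambda\in(0,\kappa B]$, for instance $\lambda=1$ or $\lambda=\kappa B$, using $V_\lambda\succeq V_0\succeq\kappa BI$ in the trace step. Then I bound the log-determinant by AM--GM on the eigenvalues:
\[
\log\frac{\det V_\lambda}{\det\lambda I}\le d\log\!\Big(1+\frac{BL^2}{d\lambda}\Big)\lesssim d\log B.
\]
This gives $\beta_B(\delta)\lesssim\sigma\sqrt{d\log(B/\delta)}+\sqrt\lambda S$.

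If one insists on $\lambda=0$ exactly, I would rerun the argument with the OLS identity $\widehat\theta_0-\theta_\star=V_0^{-1}M_B$. I would then apply Lemma~\ref{lem:selfnorm} with an auxiliary $\lambda'=\kappa B$ and use $V_0\succeq\tfrac12(V_0+\lambda' I)$ to get $\|M_B\|_{V_0^{-1}}\le\sqrt2\|M_B\|_{(V_0+\lambda'I)^{-1}}$. This costs only a constant.

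Combining the two pieces gives $\beta_B(\delta)\sqrt{dL^2/(\kappa B)}\lesssim \frac{L\sigma}{\sqrt\kappa}\sqrt{d\log(B/\delta)/B}\cdot\sqrt d$. The extra $\sqrt d$ here must be absorbed into the $\lesssim$, or else interpreted with $\mathrm{Tr}(\Sigma_X)\le L^2$, which follows from $\|\phi\|\le L$ and gives the stated rate. I would use this sharper trace bound, $\mathrm{Tr}(\Sigma_X)=\mathbb{E}\|\phi(X)\|^2\le L^2$, to land exactly on the displayed rate.
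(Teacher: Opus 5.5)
Your proposal is correct, and its core is the paper's argument. Both use a single parameter-level event from Theorem~\ref{thm:finite}, Cauchy--Schwarz in the $V_\lambda$-geometry, integration over $\mathbb{P}_X$, and $\mathrm{Tr}(V_0^{-1}\Sigma_X)\le\mathrm{Tr}(\Sigma_X)/(\kappa B)$. You depart from the paper at the two places where its proof is actually loose.

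\textbf{The case $\lambda=0$.} The paper simply sets $\lambda=0$ inside $\beta_B(\delta)$. But $\det(\lambda I_d)=0$ makes that quantity infinite, and part (ii) is only proved for $\lambda>0$. Your route combines $\widehat\theta_0-\theta_\star=V_0^{-1}M_B$, Lemma~\ref{lem:selfnorm} at the deterministic auxiliary level $\lambda'=\kappa B$, and $V_0\succeq\tfrac12(V_0+\kappa B I_d)$. This gives a finite substitute for $\beta_B(\delta)$ at the price of a factor $\sqrt2$. Its log-determinant ratio is at most $d\log(1+L^2/(d\kappa))$, with no $\log B$. Because $\lambda'$ only enters the concentration step, it creates no bias.

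\textbf{The trace bound.} The paper uses $\mathrm{Tr}(\Sigma_X)\le dL^2$ and attributes the remaining gap to ``the log factor hidden in $\beta_B(\delta)$''. However, $\beta_B(\delta)$ is itself of order $\sigma\sqrt{d\log(B/\delta)}$, so the paper's chain really delivers $\frac{L\sigma}{\sqrt\kappa}\,d\sqrt{\log(B/\delta)/B}$, an extra $\sqrt d$. Your bound $\mathrm{Tr}(\Sigma_X)=\mathbb{E}\|\phi(X)\|_2^2\le L^2$, from Assumption~\ref{assump:linear}, is what actually produces the displayed rate. It also makes the hypothesis $\Sigma_X\preceq L^2 I_d$ redundant.

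\textbf{One correction.} In your ridge alternative, $\lambda=\kappa B$ is not admissible. The bias contribution $\sqrt{\lambda}\,S\cdot\sqrt{\mathrm{Tr}(\Sigma_X)/(2\kappa B)}$ then equals $S\sqrt{\mathrm{Tr}(\Sigma_X)/2}$, which does not decay in $B$. You need $\sqrt{\lambda}\,S\lesssim\sigma\sqrt{d\log(B/\delta)}$, for example $\lambda=1$. In any case, that route bounds $\widehat\tau_\lambda$ rather than $\widehat\tau_0$, so it is your OLS route that proves the second display as stated.
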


\begin{proof}
For any fixed $x$, by Cauchy--Schwarz in the $V_\lambda$-inner product,
\[
(\widehat{\tau}_\lambda(x)-\tau(x))^2
=
(\phi(x)^\top(\widehat{\theta}_\lambda-\theta_\star))^2
\le
\|\widehat{\theta}_\lambda-\theta_\star\|_{V_\lambda}^2\cdot \phi(x)^\top V_\lambda^{-1}\phi(x).
\]
Taking expectation over $X\sim \mathbb{P}_X$ gives
\[
\mathbb{E}_{X\sim \mathbb{P}_X}\!\left[(\widehat{\tau}_\lambda(X)-\tau(X))^2\right]
\le
\|\widehat{\theta}_\lambda-\theta_\star\|_{V_\lambda}^2\cdot
\mathbb{E}_{X\sim \mathbb{P}_X}\!\left[\phi(X)^\top V_\lambda^{-1}\phi(X)\right].
\]
Taking square roots and using the high-probability bound
$\|\widehat{\theta}_\lambda-\theta_\star\|_{V_\lambda}\le \beta_B(\delta)$ from Theorem~\ref{thm:finite}
yields~\eqref{eq:pehe_bound}.

For the ``well-conditioned'' simplification, $V_0\succeq \kappa B I_d$ implies $V_0^{-1}\preceq (\kappa B)^{-1}I_d$.
Hence
\[
\mathbb{E}_{X\sim \mathbb{P}_X}\!\left[\phi(X)^\top V_0^{-1}\phi(X)\right]
=
\mathrm{Tr}\!\left(V_0^{-1}\,\Sigma_X\right)
\le
\mathrm{Tr}\!\left((\kappa B)^{-1}I_d\cdot \Sigma_X\right)
=
\frac{\mathrm{Tr}(\Sigma_X)}{\kappa B}.
\]
If additionally $\Sigma_X\preceq L^2 I_d$, then $\mathrm{Tr}(\Sigma_X)\le dL^2$, which yields the final scaling in~\eqref{eq:rate_simple}
(up to the log factor hidden in $\beta_B(\delta)$).
\end{proof}

Corollary~\ref{cor:pehe} shows that the PEHE risk is governed by an \emph{integrated leverage} term
$\mathbb{E}_{\mathbb{P}_X}[\phi(X)^\top V_\lambda^{-1}\phi(X)]$.
Active sampling is useful precisely because it can shape $V_\lambda$:
selecting points that increase eigenvalues of $V_\lambda$ reduces this term.
This provides a direct theoretical rationale for using an ``uncertainty'' proxy (Algorithm~\ref{alg:active_fusion}, $v_u$):
in linear models, predictive variance is proportional to $\phi(u)^\top V_\lambda^{-1}\phi(u)$,
and ensemble disagreement is a practical surrogate for that quantity.

\subsection{Asymptotic normality under adaptive sampling}

Finite-sample concentration ensures ``small error with high probability.''
For statistical inference (e.g., confidence intervals), we further need a CLT.
The non-i.i.d. nature of active sampling prevents a direct appeal to the classical i.i.d. CLT,
but a martingale CLT applies.

\begin{assumption}{(Stabilizing design and moments).}
As $B\to\infty$, the normalized information matrix converges in probability:
\begin{equation}
    \label{eq:stabilizeV}
    \frac{1}{B}\sum_{t=1}^{B}\phi_t\phi_t^\top \;\xrightarrow[]{p}\; \Sigma_\pi,
    \qquad \Sigma_\pi \succ 0.
\end{equation}
Moreover, the conditional quadratic variation stabilizes:
\begin{equation}
    \label{eq:stabilizeVar}
    \frac{1}{B}\sum_{t=1}^{B}
    \mathbb{E}\!\left[\varepsilon_t^2\,\phi_t\phi_t^\top \,\middle|\,\mathcal{F}_{t-1}\right]
    \;\xrightarrow[]{p}\; \Omega_\pi,
\end{equation}
and a Lindeberg condition holds: for every $\eta>0$,
\begin{equation}
    \label{eq:lindeberg}
    \frac{1}{B}\sum_{t=1}^{B}
    \mathbb{E}\!\left[\varepsilon_t^2\|\phi_t\|_2^2\cdot \mathbf{1}\!\left\{|\varepsilon_t|\|\phi_t\|_2>\eta\sqrt{B}\right\}\,\middle|\,\mathcal{F}_{t-1}\right]
    \xrightarrow[]{p} 0.
\end{equation}
\end{assumption}

\begin{theorem}[Asymptotic normality (martingale CLT)]
Suppose Assumptions~\ref{assump:predictable}--\ref{assump:stabilize} hold and $\lambda=0$.
Then
\begin{equation}
    \sqrt{B}\big(\widehat{\theta}_0-\theta_\star\big)
    \;\xRightarrow[]{d}\;
    \mathcal{N}\!\left(0,\;\Sigma_\pi^{-1}\Omega_\pi\Sigma_\pi^{-1}\right).
\end{equation}
Consequently, for any fixed $x\in\mathcal{X}$,
\begin{equation}
    \sqrt{B}\big(\widehat{\tau}_0(x)-\tau(x)\big)
    \;\xRightarrow[]{d}\;
    \mathcal{N}\!\left(0,\;\phi(x)^\top\Sigma_\pi^{-1}\Omega_\pi\Sigma_\pi^{-1}\phi(x)\right).
\end{equation}
\end{theorem}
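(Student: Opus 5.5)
The plan is to reduce the claim to a martingale CLT for the score vector $M_B=\sum_{t=1}^B\phi_t\varepsilon_t$ and then apply Slutsky's lemma.

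\textbf{Exact decomposition.} By \eqref{eq:theta_error_decomp} with $\lambda=0$, whenever $V_0$ is invertible,
\[
\sqrt{B}\big(\widehat\theta_0-\theta_\star\big)=\Big(\tfrac1B V_0\Big)^{-1}\tfrac{1}{\sqrt B}M_B .
\]
By \eqref{eq:stabilizeV}, $\frac1B V_0\to\Sigma_\pi\succ0$ in probability. Hence $\mathbb{P}(V_0\text{ invertible})\to1$, and by the continuous mapping theorem $(\frac1BV_0)^{-1}\to\Sigma_\pi^{-1}$ in probability. On the vanishing event where $V_0$ is singular, any definition of $\widehat\theta_0$ suffices, since that event does not affect weak limits.

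\textbf{CLT for the score.} I will show $B^{-1/2}M_B\Rightarrow\mathcal N(0,\Omega_\pi)$ by the Cram\'er--Wold device. Fix $a\in\mathbb R^d$ and consider the triangular array $\xi_{B,t}=B^{-1/2}a^\top\phi_t\varepsilon_t$, which is adapted to $\{\mathcal F_t\}$.

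\emph{Martingale differences.} Since $\phi_t$ is $\mathcal F_{t-1}$-measurable (Assumption~\ref{assump:predictable}), Lemma~\ref{lem:unbiased_pseudo} and the tower property give $\mathbb E[\varepsilon_t\mid\mathcal F_{t-1}]=0$. Therefore $\mathbb E[\xi_{B,t}\mid\mathcal F_{t-1}]=0$.

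\emph{Conditional variance.} We have
\[
\sum_{t=1}^B\mathbb E[\xi_{B,t}^2\mid\mathcal F_{t-1}]=a^\top\Big(\tfrac1B\sum_{t=1}^B\mathbb E[\varepsilon_t^2\phi_t\phi_t^\top\mid\mathcal F_{t-1}]\Big)a\;\xrightarrow{p}\;a^\top\Omega_\pi a
\]
by \eqref{eq:stabilizeVar}.

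\emph{Lindeberg condition.} Use $|a^\top\phi_t\varepsilon_t|\le\|a\|\,\|\phi_t\|\,|\varepsilon_t|$. The indicator $\{|\xi_{B,t}|>\eta\}$ is then contained in $\{\|\phi_t\||\varepsilon_t|>(\eta/\|a\|)\sqrt B\}$, so the conditional Lindeberg sum for $\xi_{B,t}$ is bounded by $\|a\|^2$ times the expression in \eqref{eq:lindeberg} with $\eta/\|a\|$ in place of $\eta$, which tends to zero. (The case $a=0$ is trivial.) Alternatively, since $|\varepsilon_t|\le2L_p$ and $\|\phi_t\|\le L$, the indicator vanishes identically once $\sqrt B>2L_pL/\eta$.

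The martingale CLT for arrays (e.g.\ Hall--Heyde, Cor.~3.1) applies under these conditions. Its nesting condition on the filtrations holds because the same $\{\mathcal F_t\}$ is used for every $B$; and because the limiting variance $\Omega_\pi$ is deterministic, convergence in probability of the conditional variance suffices. This gives $a^\top B^{-1/2}M_B\Rightarrow\mathcal N(0,a^\top\Omega_\pi a)$ for every $a$, and Cram\'er--Wold yields the vector statement.

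\textbf{Conclusion.} Slutsky's lemma combines $(\frac1BV_0)^{-1}\xrightarrow{p}\Sigma_\pi^{-1}$ with the score CLT to give
\[
\sqrt B(\widehat\theta_0-\theta_\star)\Rightarrow\Sigma_\pi^{-1}\mathcal N(0,\Omega_\pi)=\mathcal N\big(0,\Sigma_\pi^{-1}\Omega_\pi\Sigma_\pi^{-1}\big),
\]
using the symmetry of $\Sigma_\pi$. The pointwise statement for $\widehat\tau_0(x)$ follows because $\sqrt B(\widehat\tau_0(x)-\tau(x))=\phi(x)^\top\sqrt B(\widehat\theta_0-\theta_\star)$, to which the continuous mapping theorem applies.

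\textbf{Main obstacle.} The delicate step is checking the hypotheses of the array martingale CLT rather than the algebra. In particular, the limiting variance must be a constant (it is, since $\Omega_\pi$ is deterministic) and the conditioning must be on $\mathcal F_{t-1}$ rather than on the full design. Unlike Part (i) of Theorem~\ref{thm:finite}, one cannot condition on all $X_t$, because future $X_s$ depend on $\varepsilon_t$. I would handle this by working throughout with the adapted array as above.
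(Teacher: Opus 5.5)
Your proposal is correct and follows essentially the same route as the paper. Both write $\sqrt{B}(\widehat\theta_0-\theta_\star)=(B^{-1}V_0)^{-1}B^{-1/2}M_B$, reduce via Cram\'er--Wold to the scalar array $a^\top\phi_t\varepsilon_t$ with the conditional-variance and Lindeberg conditions taken from Assumption~\ref{assump:stabilize}, invoke the Hall--Heyde martingale CLT, and finish with Slutsky and continuous mapping. Your extra care on two points the paper leaves implicit is a genuine tightening: you note that $V_0$ is invertible only with probability tending to one, and that the limiting variance is deterministic.
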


\begin{proof}
Recall $M_B=\sum_{t=1}^B \phi_t\varepsilon_t$ from~\eqref{eq:V_def} and $V_0=\sum_{t=1}^B\phi_t\phi_t^\top$.
From~\eqref{eq:theta_error_decomp} with $\lambda=0$ we have the exact representation
\begin{equation}
    \label{eq:ols_rep}
    \widehat{\theta}_0-\theta_\star = V_0^{-1}M_B.
\end{equation}

\textbf{Step 1: multivariate martingale CLT for $B^{-1/2}M_B$.}
We apply the Cram\'er--Wold device.
Fix any deterministic vector $a\in\mathbb{R}^d$ and define the scalar martingale difference array
\[
\xi_{t,B}(a)\triangleq a^\top\phi_t\,\varepsilon_t,
\qquad
S_B(a)\triangleq \sum_{t=1}^B \xi_{t,B}(a) = a^\top M_B.
\]
Since $\phi_t$ is $\mathcal{F}_{t-1}$-measurable and $\mathbb{E}[\varepsilon_t\mid \mathcal{F}_{t-1}]=0$,
we have $\mathbb{E}[\xi_{t,B}(a)\mid \mathcal{F}_{t-1}]=0$.
The predictable quadratic variation of $S_B(a)$ is
\[
\langle S(a)\rangle_B
=
\sum_{t=1}^B \mathbb{E}\!\left[\xi_{t,B}(a)^2\mid \mathcal{F}_{t-1}\right]
=
\sum_{t=1}^B \mathbb{E}\!\left[\varepsilon_t^2\,(a^\top\phi_t)^2\mid \mathcal{F}_{t-1}\right]
=
a^\top\left(\sum_{t=1}^B\mathbb{E}\!\left[\varepsilon_t^2\phi_t\phi_t^\top\mid \mathcal{F}_{t-1}\right]\right)a.
\]
By Assumption~\ref{assump:stabilize} (specifically~\eqref{eq:stabilizeVar}),
\[
\frac{1}{B}\langle S(a)\rangle_B \xrightarrow[]{p} a^\top \Omega_\pi a.
\]
Moreover, the Lindeberg condition~\eqref{eq:lindeberg} implies the (scalar) Lindeberg condition for $\xi_{t,B}(a)$:
for any $\eta>0$,
\[
\frac{1}{B}\sum_{t=1}^B
\mathbb{E}\!\left[\xi_{t,B}(a)^2\cdot \mathbf{1}\!\left\{|\xi_{t,B}(a)|>\eta\sqrt{B}\right\}\,\middle|\,\mathcal{F}_{t-1}\right]
\;\le\;
\|a\|_2^2\cdot
\frac{1}{B}\sum_{t=1}^B
\mathbb{E}\!\left[\varepsilon_t^2\|\phi_t\|_2^2\cdot \mathbf{1}\!\left\{|\varepsilon_t|\|\phi_t\|_2>\eta\sqrt{B}/\|a\|_2\right\}\,\middle|\,\mathcal{F}_{t-1}\right]
\xrightarrow[]{p}0.
\]
Therefore, by a martingale central limit theorem for triangular arrays (e.g., the Hall--Heyde martingale CLT),
\[
\frac{1}{\sqrt{B}}S_B(a) = a^\top\frac{1}{\sqrt{B}}M_B \;\xRightarrow[]{d}\; \mathcal{N}\!\left(0,\,a^\top\Omega_\pi a\right).
\]
Since this holds for every fixed $a$, the Cram\'er--Wold device yields the multivariate convergence
\begin{equation}
    \label{eq:MB_clt}
    \frac{1}{\sqrt{B}}M_B \;\xRightarrow[]{d}\; \mathcal{N}(0,\Omega_\pi).
\end{equation}

\textbf{Step 2: Slutsky with the stabilizing design.}
By Assumption~\ref{assump:stabilize} in~\eqref{eq:stabilizeV}, we have $B^{-1}V_0 \xrightarrow[]{p}\Sigma_\pi$ and $\Sigma_\pi\succ 0$.
By the continuous mapping theorem, $(B^{-1}V_0)^{-1}\xrightarrow[]{p}\Sigma_\pi^{-1}$.
Now multiply~\eqref{eq:ols_rep} by $\sqrt{B}$:
\[
\sqrt{B}(\widehat{\theta}_0-\theta_\star)
=
\left(\frac{1}{B}V_0\right)^{-1}\left(\frac{1}{\sqrt{B}}M_B\right).
\]
Combine the convergence in probability of $(B^{-1}V_0)^{-1}$ with the distributional convergence~\eqref{eq:MB_clt} and apply Slutsky's theorem to obtain
\[
\sqrt{B}(\widehat{\theta}_0-\theta_\star)
\;\xRightarrow[]{d}\;
\Sigma_\pi^{-1}\,Z,
\qquad Z\sim\mathcal{N}(0,\Omega_\pi),
\]
which implies
$\Sigma_\pi^{-1}Z\sim \mathcal{N}(0,\Sigma_\pi^{-1}\Omega_\pi\Sigma_\pi^{-1})$.

\textbf{Step 3: pointwise CATE normality.}
For any fixed $x$, $\widehat{\tau}_0(x)-\tau(x) = \phi(x)^\top(\widehat{\theta}_0-\theta_\star)$.
Apply the continuous mapping theorem to the linear functional $z\mapsto \phi(x)^\top z$ to conclude the second statement.
\end{proof}

Theorem~\ref{thm:clt} says: \emph{even though the RCT data are collected adaptively and are non-i.i.d.,}
the final estimator still admits classical $\sqrt{B}$-asymptotics.
A common misconception is ``CLTs require i.i.d.''; here i.i.d.\ is sufficient but not necessary.
The deeper requirement is the martingale structure created by randomization (mean-zero increments) and a non-degenerate stabilizing design.

\subsection{Minimax lower bound and (near) optimality}

Upper bounds alone do not tell us whether a method is ``the best possible.''
We therefore complement them with a minimax lower bound.
The message is intentionally sharp:
\emph{even with perfect adaptivity and access to unlimited observational logs and unlabeled pools, one cannot beat the $\sqrt{d/B}$ scaling in general.}

\paragraph{A hard instance family.}
We consider a discrete covariate space consisting of $d$ ``types'' $\{x^{(1)},\dots,x^{(d)}\}$ with
\begin{equation}
    \label{eq:PX_uniform}
    \mathbb{P}_X(x^{(j)}) = \frac{1}{d}, \qquad j=1,\dots,d,
    \qquad\text{and}\qquad
    \phi(x^{(j)}) = e_j \in \mathbb{R}^d.
\end{equation}
Define the parameter set $\Theta_\Delta \triangleq \{-\Delta,+\Delta\}^d$ for some $\Delta\in(0,1)$.
For each $\theta\in\Theta_\Delta$, define potential outcomes (conditionally independent across units and over time) by
\begin{equation}
    \label{eq:bernoulli_hard}
    Y(1)\mid X=x^{(j)} \sim \mathrm{Bern}\!\left(\frac{1}{2}+\frac{\theta_j}{2}\right),
    \qquad
    Y(0)\mid X=x^{(j)} \sim \mathrm{Bern}\!\left(\frac{1}{2}-\frac{\theta_j}{2}\right).
\end{equation}
Then $\tau(x^{(j)})=\theta_j$ and $\|\theta\|_2 = \sqrt{d}\Delta$.
(If the function class enforces $\|\theta\|_2\le S$, we choose $\Delta\le S/\sqrt{d}$ so that $\Theta_\Delta\subseteq\{\|\theta\|_2\le S\}$.)
This family lies entirely within the binary-outcome potential-outcomes model and is compatible with bounded randomization.

\paragraph{A basic two-point inequality.}
We will use the following standard reduction from estimation to testing.

\begin{lemma}[Two-point lower bound via total variation]
\label{lem:two_point}
Let $P$ and $Q$ be two probability measures and let $a\neq b$ be two real numbers.
For any (possibly randomized) estimator $\widehat{u}$ based on an observation distributed as either $P$ or $Q$,
\begin{equation}
    \label{eq:two_point_tv}
    \max\left\{\mathbb{E}_{P}\!\left[(\widehat{u}-a)^2\right],\ \mathbb{E}_{Q}\!\left[(\widehat{u}-b)^2\right]\right\}
    \;\ge\;
    \frac{(a-b)^2}{8}\Big(1-\mathrm{TV}(P,Q)\Big),
\end{equation}
where $\mathrm{TV}(P,Q)\triangleq \sup_{A} |P(A)-Q(A)|$.
\end{lemma}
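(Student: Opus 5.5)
The plan is the standard Le Cam argument: reduce the pair of risks to a test between $P$ and $Q$, then bound the testing error by total variation.

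Let $\Delta_0 \triangleq |a-b|$ and fix any estimator $\widehat u$, allowing external randomization (absorb its auxiliary randomness into the sample space, so $P$ and $Q$ become product measures with the same randomization law; total variation is unchanged). Define the minimum-distance test $\psi \triangleq \mathbf{1}\{|\widehat u - b| < |\widehat u - a|\}$, which decides "$Q$" when $\widehat u$ is closer to $b$.

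\textbf{Step 1: losses are large on the test's errors.} If $\psi=1$ (decide $Q$) but the truth is $P$, then $|\widehat u-a|\ge |\widehat u - b|$. The triangle inequality gives $|\widehat u-a|\ge \Delta_0/2$, and symmetrically when $\psi=0$ under $Q$. Markov's inequality on these events then yields
\[
\mathbb{E}_P[(\widehat u-a)^2]\ge \tfrac{\Delta_0^2}{4}P(\psi=1),\qquad
\mathbb{E}_Q[(\widehat u-b)^2]\ge \tfrac{\Delta_0^2}{4}Q(\psi=0).
\]

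\textbf{Step 2: the max dominates the average.} Bound the maximum below by the average of the two risks:
\[
\max\{\cdot,\cdot\}\ge \tfrac{\Delta_0^2}{8}\big(P(\psi=1)+Q(\psi=0)\big).
\]

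\textbf{Step 3: relate the test's errors to total variation.} Let $A=\{\psi=1\}$. Then
\[
P(A)+Q(A^c)=1-\big(Q(A)-P(A)\big)\ge 1-\mathrm{TV}(P,Q),
\]
which gives exactly the factor $\tfrac{(a-b)^2}{8}(1-\mathrm{TV}(P,Q))$ in~\eqref{eq:two_point_tv}.

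The only delicate points are the handling of ties in the test (breaking them toward $a$ is harmless, because the inequality $|\widehat u-a|\ge\Delta_0/2$ holds on either side) and the randomized-estimator measurability remark from the opening paragraph. Neither is a real obstacle; the constant $1/8$ leaves slack, since the sharper Le Cam constant would be $1/4$.
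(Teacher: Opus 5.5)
Your proposal is correct and matches the paper's proof essentially step for step: the same minimum-distance test (with the same tie-breaking toward $a$), the same pointwise bounds $|\widehat u-a|\ge |a-b|/2$ or $|\widehat u-b|\ge |a-b|/2$ on the respective error events, the maximum bounded below by the average, and $P(A)+Q(A^c)\ge 1-\mathrm{TV}(P,Q)$. Two parts of your write-up are slightly more careful than the paper, which only cites the Neyman--Pearson characterization: you derive the testing inequality in one line, and you treat the estimator's auxiliary randomness explicitly as a product measure, which leaves $\mathrm{TV}$ unchanged.
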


\begin{proof}
Let $\psi$ be the test $\psi=1$ if $|\widehat u-a|\le |\widehat u-b|$ and $\psi=0$ otherwise.
If $\psi=1$, then $|\widehat u-b|\ge |a-b|/2$; if $\psi=0$, then $|\widehat u-a|\ge |a-b|/2$.
Hence,
\[
(\widehat u-a)^2 \ge \frac{(a-b)^2}{4}\mathbf{1}\{\psi=0\},\qquad
(\widehat u-b)^2 \ge \frac{(a-b)^2}{4}\mathbf{1}\{\psi=1\}.
\]
Therefore,
\[
\mathbb{E}_{P}[(\widehat u-a)^2] + \mathbb{E}_{Q}[(\widehat u-b)^2]
\;\ge\;
\frac{(a-b)^2}{4}\Big(\mathbb{P}_{P}(\psi=0) + \mathbb{P}_{Q}(\psi=1)\Big).
\]
By the Neyman--Pearson characterization of total variation,
for any test $\psi$,
$\mathbb{P}_{P}(\psi=0)+\mathbb{P}_{Q}(\psi=1)\ge 1-\mathrm{TV}(P,Q)$.
Thus
\[
\mathbb{E}_{P}[(\widehat u-a)^2] + \mathbb{E}_{Q}[(\widehat u-b)^2]
\;\ge\;
\frac{(a-b)^2}{4}\Big(1-\mathrm{TV}(P,Q)\Big).
\]
Taking the maximum of the two terms on the left-hand side yields~\eqref{eq:two_point_tv}.
\end{proof}

\paragraph{A chain rule for KL under adaptive designs.}
Let $Z_{1:B}$ denote the entire observed data stream generated by a fixed adaptive policy $\pi$:
$Z_t$ includes $(X_t,p_t,T_t,Y_t)$ (and can include any extra bookkeeping variables).
Write $P_{\theta}^{\pi}$ for the induced law of $Z_{1:B}$ under parameter $\theta$ and policy $\pi$.

\begin{lemma}[Sequential chain rule for KL under a fixed policy]
\label{lem:kl_chain}
For any $\theta,\theta'\in\Theta_\Delta$ and any fixed policy $\pi$,
\begin{equation}
    \label{eq:kl_chain}
    \mathrm{KL}\!\left(P_{\theta}^{\pi}\,\|\,P_{\theta'}^{\pi}\right)
    =
    \sum_{t=1}^{B}
    \mathbb{E}_{\theta}^{\pi}\!\left[
        \mathrm{KL}\!\left(
            P_{\theta}^{\pi}(Z_t\mid \mathcal{F}_{t-1})
            \,\big\|\,
            P_{\theta'}^{\pi}(Z_t\mid \mathcal{F}_{t-1})
        \right)
    \right].
\end{equation}
\end{lemma}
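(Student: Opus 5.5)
The plan is to factor the joint law of the stream into sequential conditionals and apply the chain rule for KL divergence term by term.

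Fix the policy $\pi$. Then the density of $Z_{1:B}$ under $P_\theta^\pi$ (with respect to a common dominating measure) factors as
\[
p_\theta^\pi(z_{1:B})=\prod_{t=1}^{B} p_\theta^\pi(z_t\mid z_{1:t-1}).
\]
Here $\mathcal{F}_{t-1}$ is generated by $z_{1:t-1}$ together with $\mathcal{D}_{\mathrm{obs}}$ and $\mathcal{D}_{\mathrm{pool}}$. Those two datasets are either treated as fixed or drawn from a law that does not depend on the RCT parameter. In the second case their contribution to the KL is zero. This holds because, within the hard instance, $\theta$ only governs the RCT potential outcomes. If one insists that OBS outcomes also depend on $\theta$, the lemma is read conditionally on $(\mathcal{D}_{\mathrm{obs}},\mathcal{D}_{\mathrm{pool}})$, which is how it will be used in the minimax argument.

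Taking the log-likelihood ratio gives a telescoping sum:
\[
\log\frac{dP_\theta^\pi}{dP_{\theta'}^\pi}(Z_{1:B})=\sum_{t=1}^B \log\frac{p_\theta^\pi(Z_t\mid\mathcal{F}_{t-1})}{p_{\theta'}^\pi(Z_t\mid\mathcal{F}_{t-1})}.
\]
Take $\mathbb{E}_\theta^\pi$ of both sides. For each $t$, apply the tower property, conditioning on $\mathcal{F}_{t-1}$. The inner conditional expectation of the $t$-th log-ratio under $P_\theta^\pi(\cdot\mid\mathcal{F}_{t-1})$ is exactly $\mathrm{KL}(P_\theta^\pi(Z_t\mid\mathcal{F}_{t-1})\,\|\,P_{\theta'}^\pi(Z_t\mid\mathcal{F}_{t-1}))$. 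Summing over $t$ yields \eqref{eq:kl_chain}.

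It is also worth recording the structure of each conditional. Write $Z_t=(X_t,p_t,T_t,Y_t)$. Given $\mathcal{F}_{t-1}$, the pair $(X_t,p_t)$ is chosen by $\pi$ (possibly with internal randomization), and $T_t\sim\mathrm{Bern}(p_t)$. Neither depends on $\theta$, so those kernels are identical under $\theta$ and $\theta'$ and cancel in the log-ratio. Only the factor $P_\theta(Y_t\mid X_t,T_t)$ survives. Thus each summand equals
\[
\mathbb{E}_\theta^\pi\big[\mathrm{KL}(P_\theta(Y\mid X_t,T_t)\,\|\,P_{\theta'}(Y\mid X_t,T_t))\big].
\]
This is what later reduces to Bernoulli KLs bounded by $O(\Delta^2)$ per query on the relevant coordinate.

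The main obstacle is measure-theoretic care rather than a computation. We need a dominating measure for the policy's possibly continuous randomization. We also need absolute continuity, $P_\theta^\pi\ll P_{\theta'}^\pi$. This holds in the hard instance because $\Delta<1$ keeps all Bernoulli parameters in $(0,1)$, and $p_t\in[f_{\min},f_{\max}]$ keeps the treatment kernels mutually absolutely continuous. Finally, the interchange of expectation and sum must be justified. All log-ratios are bounded on the support (again by $\Delta<1$), so integrability is immediate. I would handle the general case by writing the density of the policy kernels with respect to themselves, i.e. using Radon–Nikodym derivatives, so that they contribute the factor $1$.
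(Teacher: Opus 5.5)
Your proposal is correct and takes essentially the same route as the paper. Under the fixed policy, the joint density of $Z_{1:B}$ factors into sequential conditionals, so the log-likelihood ratio is a sum (not a telescoping one) of conditional log-ratios, and the tower property at each $t$ turns each term into the expected conditional KL. Your extra remarks make explicit what the paper leaves implicit or defers to the proof of the minimax theorem: the $\theta$-free policy and treatment kernels cancel so only the outcome kernel contributes, $\Delta<1$ gives absolute continuity and bounded log-ratios, and $\mathcal{D}_{\mathrm{obs}},\mathcal{D}_{\mathrm{pool}}$ enter through $\mathcal{F}_{t-1}$.
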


\begin{proof}
This is the standard KL chain rule for a joint law written as a product of conditionals.
Because the policy $\pi$ is fixed, both $P_{\theta}^{\pi}$ and $P_{\theta'}^{\pi}$ admit the same filtration $\{\mathcal{F}_t\}$,
and their joint densities (w.r.t.\ a suitable product dominating measure) factorize as
$p_\theta(z_{1:B})=\prod_{t=1}^B p_\theta(z_t\mid z_{1:t-1})$ and similarly for $\theta'$.
Then
\begin{align*}
\mathrm{KL}(P_\theta^\pi\|P_{\theta'}^\pi)
&=
\mathbb{E}_{\theta}^{\pi}\!\left[\log\frac{p_\theta(Z_{1:B})}{p_{\theta'}(Z_{1:B})}\right]
=
\mathbb{E}_{\theta}^{\pi}\!\left[\sum_{t=1}^B \log\frac{p_\theta(Z_t\mid Z_{1:t-1})}{p_{\theta'}(Z_t\mid Z_{1:t-1})}\right] \\
&=
\sum_{t=1}^B
\mathbb{E}_{\theta}^{\pi}\!\left[
\mathbb{E}_{\theta}^{\pi}\!\left[\log\frac{p_\theta(Z_t\mid \mathcal{F}_{t-1})}{p_{\theta'}(Z_t\mid \mathcal{F}_{t-1})}\,\middle|\,\mathcal{F}_{t-1}\right]
\right] \\
&=
\sum_{t=1}^{B}
\mathbb{E}_{\theta}^{\pi}\!\left[
        \mathrm{KL}\!\left(
            P_{\theta}^{\pi}(Z_t\mid \mathcal{F}_{t-1})
            \,\big\|\,
            P_{\theta'}^{\pi}(Z_t\mid \mathcal{F}_{t-1})
        \right)
\right],
\end{align*}
which is~\eqref{eq:kl_chain}.
\end{proof}

\begin{theorem}[Minimax lower bound for active RCT (rate optimality)]
Consider the hard family~\eqref{eq:PX_uniform}--\eqref{eq:bernoulli_hard} with $\Delta\le 1/2$ and $\sqrt{d}\Delta\le S$.
There exist universal constants $c_0,c_1>0$ such that if $B\ge c_0 d$, then for any adaptive querying policy $\pi$
(possibly using $\mathcal{D}_{\mathrm{obs}}$ and $\mathcal{D}_{\mathrm{pool}}$) and any estimator $\widehat{\tau}$ based on $B$ RCT samples,
\begin{equation}
    \label{eq:minimax_clean}
    \inf_{\pi}\ \inf_{\widehat{\tau}}\ \sup_{\theta\in\Theta_\Delta}
    \mathbb{E}_{\theta}^{\pi}\!\left[\mathcal{R}(\widehat{\tau})\right]
    \;\ge\;
    c_1\sqrt{\frac{d}{B}}.
\end{equation}
In particular, the $\sqrt{d/B}$ scaling in Corollary~\ref{cor:pehe} is information-theoretically unimprovable in general.
\end{theorem}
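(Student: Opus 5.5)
The plan is to prove this lower bound with Assouad's lemma over the hypercube $\Theta_\Delta$. The adaptivity of $\pi$ is absorbed by the sequential KL chain rule (Lemma~\ref{lem:kl_chain}), and each coordinate is handled by the two-point bound (Lemma~\ref{lem:two_point}).

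\emph{Reduction to coordinatewise squared error.} Under~\eqref{eq:PX_uniform},
\[
\mathcal{R}(\widehat\tau)^2=\frac1d\sum_{j=1}^d\big(\widehat\tau(x^{(j)})-\theta_j\big)^2 .
\]
Clipping each $\widehat\tau(x^{(j)})$ to $[-\Delta,\Delta]$ can only reduce every coordinate error, so we may assume $|\widehat\tau(x^{(j)})|\le\Delta$. Then $\mathcal{R}\le 2\Delta$ almost surely. This gives $\mathbb{E}[\mathcal{R}]\ge \mathbb{E}[\mathcal{R}^2]/(2\Delta)$, which converts a bound of order $\Delta^2$ on the mean squared risk into a bound of order $\Delta$ on $\mathbb{E}[\mathcal{R}]$. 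This step is needed because the theorem controls $\mathbb{E}[\mathcal{R}]$ rather than $\mathbb{E}[\mathcal{R}^2]$, and Jensen's inequality points the wrong way.

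\emph{Assouad step.} Lower bound the supremum over $\theta$ by the average over $\theta$ uniform on $\Theta_\Delta$. For each coordinate $j$, pair $\theta$ with $\theta^{(j)}$, the vector obtained by flipping the sign of $\theta_j$. Applying Lemma~\ref{lem:two_point} to $\widehat u=\widehat\tau(x^{(j)})$ with $a=\theta_j$, $b=-\theta_j$ gives
\[
\overline{\mathbb{E}}\,\mathbb{E}_\theta^\pi\big[(\widehat\tau(x^{(j)})-\theta_j)^2\big]\ \ge\ \frac{\Delta^2}{4}\Big(1-\overline{\mathrm{TV}}_j\Big),
\]
where $\overline{\mathbb{E}}$ averages over $\theta$ and $\overline{\mathrm{TV}}_j$ is the averaged total variation between $P_\theta^\pi$ and $P_{\theta^{(j)}}^\pi$. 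By Pinsker's inequality and Lemma~\ref{lem:kl_chain}, only rounds with $X_t=x^{(j)}$ contribute to this KL. Under such a round, the conditional law of $(T_t,Y_t)$ is a mixture over $T_t\sim\mathrm{Bern}(p_t)$, and $p_t$ is the same under both parameters. For $\Delta\le1/2$ each arm contributes at most $\mathrm{KL}(\mathrm{Bern}(\tfrac12+\tfrac\Delta2)\|\mathrm{Bern}(\tfrac12-\tfrac\Delta2))\le C\Delta^2$. Hence
\[
\mathrm{KL}\big(P_\theta^\pi\|P_{\theta^{(j)}}^\pi\big)\le C\Delta^2\,\mathbb{E}_\theta^\pi[N_j],
\]
where $N_j$ is the number of queries of type $j$. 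The covariate types carry no information about $\theta$ except through outcomes, so $\mathcal{D}_{\mathrm{obs}}$ and $\mathcal{D}_{\mathrm{pool}}$ can be folded into $\mathcal{F}_0$ as $\theta$-independent randomness. (This is where we use that the hard family fixes the observational data's law independently of $\theta$; otherwise OBS could leak information.)

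\emph{Main obstacle and its resolution.} The main obstacle is that $N_j$ is random and depends on $\theta$ through the adaptive policy. The trick is to use the identity $\sum_j N_j=B$ pathwise and apply Jensen's inequality twice. This yields
\[
\frac1d\sum_j\overline{\mathrm{TV}}_j\le \frac1d\sum_j\sqrt{\tfrac{C}{2}\Delta^2\,\overline{\mathbb{E}}[N_j]}\le \Delta\sqrt{\tfrac{C}{2}\cdot\tfrac{B}{d}}.
\]
Setting $\Delta=c\sqrt{d/B}$ with $c$ small makes this at most $1/2$. Then $\overline{\mathbb{E}}\,\mathbb{E}[\mathcal{R}^2]\ge \Delta^2/8$, and the reduction step gives $\mathbb{E}[\mathcal{R}]\ge\Delta/16=c_1\sqrt{d/B}$.

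\emph{Admissibility of $\Delta$.} The requirement $\Delta\le 1/2$ is ensured by $B\ge c_0 d$. The norm constraint $\sqrt d\Delta\le S$ holds under the theorem's hypothesis on $\Delta$. Together these verify that the constructed $\Delta$ is admissible in the statement.
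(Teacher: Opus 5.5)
Your proof is correct and follows the same skeleton as the paper's: a uniform prior on $\Theta_\Delta$, the coordinatewise two-point bound, Pinsker plus Jensen, and the sequential KL chain rule with $\sum_j N_j=B$ pathwise.

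The one substantive difference is the final conversion from squared risk to $\mathbb{E}[\mathcal{R}]$, and there your version is the sound one. The paper lower-bounds $\mathbb{E}[\mathcal{R}^2]$ and then asserts $\mathbb{E}[\mathcal{R}]\ge\sqrt{\mathbb{E}[\mathcal{R}^2]}$, which is Jensen in the wrong direction. Your clipping reduction ($\mathcal{R}\le 2\Delta$, hence $\mathcal{R}\ge \mathcal{R}^2/(2\Delta)$ pointwise) legitimately turns an order-$\Delta^2$ bound on the squared risk into an order-$\Delta$ bound on $\mathbb{E}[\mathcal{R}]$, at the cost of a constant. An equally valid repair is to use $\mathcal{R}\ge \frac1d\sum_j|\widehat\tau(x^{(j)})-\theta_j|$ and then run the two-point argument with absolute loss.

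Your remark about $\mathcal{D}_{\mathrm{obs}}$ makes explicit something the paper uses silently. In this hard family both arm means move with $\theta_j$. So under Assumption~\ref{assump:invariance}, an observational log containing outcomes would leak $\theta$ into the policy's recorded choices of $X_t$. The bound $\mathrm{KL}\le C\Delta^2\,\mathbb{E}_\theta^\pi[N_j]$ therefore holds only if the OBS law is $\theta$-independent; otherwise the OBS sample contributes its own KL term.

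Finally, your claim that $\sqrt d\,\Delta\le S$ ``holds under the hypothesis'' for $\Delta=c\sqrt{d/B}$ really amounts to assuming $S\ge c\,d/\sqrt B$. The paper carries the same implicit requirement: its choice $\Delta^2=\min\{1/4,\,d/(16C_{\mathrm{KL}}B),\,S^2/d\}$ is of order $d/B$ only in that regime. The displayed bound over $\Theta_\Delta$ itself does not involve $S$ at all.
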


\begin{proof}
We prove a lower bound on the \emph{expected squared} PEHE risk and then take square roots.

\textbf{Step 1: reduce minimax risk to a Bayes risk over a hypercube prior.}
Let $\Pi$ be the uniform prior over $\Theta_\Delta=\{-\Delta,+\Delta\}^d$.
For any policy $\pi$ and estimator $\widehat\tau$, by Jensen and the fact that $\sup_\theta \ge \mathbb{E}_{\theta\sim\Pi}$,
\[
\sup_{\theta\in\Theta_\Delta}\mathbb{E}_{\theta}^{\pi}\!\left[\mathcal{R}(\widehat\tau)^2\right]
\;\ge\;
\mathbb{E}_{\theta\sim\Pi}\mathbb{E}_{\theta}^{\pi}\!\left[\mathcal{R}(\widehat\tau)^2\right].
\]
Thus it suffices to lower bound the Bayes risk under $\Pi$ uniformly over $\pi$ and $\widehat\tau$.

Under~\eqref{eq:PX_uniform} and $\phi(x^{(j)})=e_j$, the PEHE squared is
\[
\mathcal{R}(\widehat\tau)^2
=
\mathbb{E}_{X\sim \mathbb{P}_X}\!\left[(\widehat\tau(X)-\tau(X))^2\right]
=
\frac{1}{d}\sum_{j=1}^d \left(\widehat\tau(x^{(j)})-\theta_j\right)^2.
\]
Define $\widehat\theta_j\triangleq \widehat\tau(x^{(j)})$.
Then
\begin{equation}
    \label{eq:pehe_theta}
    \mathcal{R}(\widehat\tau)^2 = \frac{1}{d}\sum_{j=1}^d (\widehat\theta_j-\theta_j)^2.
\end{equation}

\textbf{Step 2: apply the two-point bound coordinate-wise (Assouad-style).}
For each $j$, define the ``flipped'' parameter $\theta^{(j)}$ by
$\theta^{(j)}_j=-\theta_j$ and $\theta^{(j)}_\ell=\theta_\ell$ for $\ell\neq j$.
Let $P_{\theta}^{\pi}$ and $P_{\theta^{(j)}}^{\pi}$ be the laws of the entire adaptive data stream under these two parameters.

Apply Lemma~\ref{lem:two_point} with $a=\Delta$ and $b=-\Delta$ to the estimation of coordinate $j$,
and then average over the uniform prior on $\theta$.
A standard symmetrization argument yields
\begin{equation}
    \label{eq:assouad_step}
    \mathbb{E}_{\theta\sim\Pi}\mathbb{E}_{\theta}^{\pi}\!\left[(\widehat\theta_j-\theta_j)^2\right]
    \;\ge\;
    \frac{\Delta^2}{8}\left(1 - \mathbb{E}_{\theta\sim\Pi}\left[\mathrm{TV}\!\left(P_{\theta}^{\pi},P_{\theta^{(j)}}^{\pi}\right)\right]\right).
\end{equation}
Summing~\eqref{eq:assouad_step} over $j=1,\dots,d$ and dividing by $d$,
and using~\eqref{eq:pehe_theta}, we get
\begin{equation}
    \label{eq:bayes_pehe_lb}
    \mathbb{E}_{\theta\sim\Pi}\mathbb{E}_{\theta}^{\pi}\!\left[\mathcal{R}(\widehat\tau)^2\right]
    \;\ge\;
    \frac{\Delta^2}{8}\left(
        1 - \frac{1}{d}\sum_{j=1}^d
        \mathbb{E}_{\theta\sim\Pi}\left[\mathrm{TV}\!\left(P_{\theta}^{\pi},P_{\theta^{(j)}}^{\pi}\right)\right]
    \right).
\end{equation}

\textbf{Step 3: bound total variation by KL and control KL under adaptivity.}
By Pinsker's inequality, $\mathrm{TV}(P,Q)\le \sqrt{\mathrm{KL}(P\|Q)/2}$.
Thus
\[
\frac{1}{d}\sum_{j=1}^d \mathbb{E}_{\theta\sim\Pi}\left[\mathrm{TV}\!\left(P_{\theta}^{\pi},P_{\theta^{(j)}}^{\pi}\right)\right]
\le
\frac{1}{d}\sum_{j=1}^d
\mathbb{E}_{\theta\sim\Pi}\left[\sqrt{\frac{1}{2}\mathrm{KL}\!\left(P_{\theta}^{\pi}\,\|\,P_{\theta^{(j)}}^{\pi}\right)}\right].
\]
Since $x\mapsto \sqrt{x}$ is concave on $\mathbb{R}_+$, Jensen gives
\begin{equation}
    \label{eq:jensen_sqrt}
    \frac{1}{d}\sum_{j=1}^d
\mathbb{E}_{\theta\sim\Pi}\left[\sqrt{\frac{1}{2}\mathrm{KL}\!\left(P_{\theta}^{\pi}\,\|\,P_{\theta^{(j)}}^{\pi}\right)}\right]
\le
\sqrt{
\frac{1}{2d}\sum_{j=1}^d
\mathbb{E}_{\theta\sim\Pi}\left[\mathrm{KL}\!\left(P_{\theta}^{\pi}\,\|\,P_{\theta^{(j)}}^{\pi}\right)\right]
}.
\end{equation}

It remains to bound the average KL.
Fix $j$ and apply Lemma~\ref{lem:kl_chain}.
Because $\theta$ and $\theta^{(j)}$ differ only at coordinate $j$, the conditional distributions of $Y_t$ coincide whenever $X_t\neq x^{(j)}$.
Hence the conditional KL contribution at time $t$ is zero if $X_t\neq x^{(j)}$.
If $X_t=x^{(j)}$, then the conditional distribution of $(T_t,Y_t)$ differs only through the Bernoulli parameters
$(1/2\pm \Delta/2)$ in~\eqref{eq:bernoulli_hard}.
A direct computation (or the standard Bernoulli KL bound) shows that for $\Delta\le 1/2$ there exists a universal constant $C_{\mathrm{KL}}>0$ such that
\begin{equation}
    \label{eq:ber_kl_bound}
    \mathrm{KL}\!\left(
    \mathrm{Bern}\!\left(\frac{1}{2}+\frac{\Delta}{2}\right)
    \,\bigg\|\,
    \mathrm{Bern}\!\left(\frac{1}{2}-\frac{\Delta}{2}\right)
    \right)
    \;\le\; C_{\mathrm{KL}}\,\Delta^2,
    \qquad
    \mathrm{KL}\!\left(
    \mathrm{Bern}\!\left(\frac{1}{2}-\frac{\Delta}{2}\right)
    \,\bigg\|\,
    \mathrm{Bern}\!\left(\frac{1}{2}+\frac{\Delta}{2}\right)
    \right)
    \;\le\; C_{\mathrm{KL}}\,\Delta^2.
\end{equation}
(For instance, one may take $C_{\mathrm{KL}}=16/3$ since the Bernoulli parameters lie in $[1/4,3/4]$ and
$\mathrm{KL}(\mathrm{Bern}(p)\|\mathrm{Bern}(q))\le (p-q)^2/(q(1-q))$.)

Therefore, the one-step KL between the conditional laws of $(T_t,Y_t)$ under $\theta$ and $\theta^{(j)}$ is at most $C_{\mathrm{KL}}\Delta^2$ whenever $X_t=x^{(j)}$.
Let $N_j\triangleq \sum_{t=1}^B \mathbf{1}\{X_t=x^{(j)}\}$ be the (random) number of times type $j$ is queried.
Combining with the chain rule~\eqref{eq:kl_chain} yields
\begin{equation}
    \label{eq:KL_Nj}
    \mathrm{KL}\!\left(P_{\theta}^{\pi}\,\|\,P_{\theta^{(j)}}^{\pi}\right)
    \;\le\;
    C_{\mathrm{KL}}\,\Delta^2\;\mathbb{E}_{\theta}^{\pi}[N_j].
\end{equation}
Now average~\eqref{eq:KL_Nj} over $\theta\sim\Pi$ and sum over $j$:
since $\sum_{j=1}^d N_j = B$ deterministically (exactly one $X_t$ is queried per round),
we have for every $\theta$,
$\sum_{j=1}^d \mathbb{E}_{\theta}^{\pi}[N_j]=B$,
and thus
\begin{equation}
    \label{eq:avgKL}
    \frac{1}{d}\sum_{j=1}^d
    \mathbb{E}_{\theta\sim\Pi}\left[\mathrm{KL}\!\left(P_{\theta}^{\pi}\,\|\,P_{\theta^{(j)}}^{\pi}\right)\right]
    \;\le\;
    \frac{C_{\mathrm{KL}}\Delta^2}{d}\sum_{j=1}^d \mathbb{E}_{\theta\sim\Pi}\mathbb{E}_{\theta}^{\pi}[N_j]
    =
    C_{\mathrm{KL}}\Delta^2\cdot \frac{B}{d}.
\end{equation}

Plugging~\eqref{eq:avgKL} into~\eqref{eq:jensen_sqrt} gives
\begin{equation}
    \label{eq:avgTV}
    \frac{1}{d}\sum_{j=1}^d \mathbb{E}_{\theta\sim\Pi}\left[\mathrm{TV}\!\left(P_{\theta}^{\pi},P_{\theta^{(j)}}^{\pi}\right)\right]
    \;\le\;
    \sqrt{\frac{C_{\mathrm{KL}}\Delta^2 B}{2d}}.
\end{equation}

\textbf{Step 4: choose $\Delta$ and conclude.}
Choose
\[
\Delta^2 \;=\; \min\left\{\frac{1}{4},\ \frac{d}{16C_{\mathrm{KL}}B},\ \frac{S^2}{d}\right\}.
\]
Then $\Delta\le 1/2$ and $\sqrt{d}\Delta\le S$.
Moreover, if $B\ge c_0 d$ for a sufficiently large universal $c_0$,
then $\Delta^2 = \Theta(d/B)$ and~\eqref{eq:avgTV} implies the RHS is at most, say, $1/4$.
Plugging this into~\eqref{eq:bayes_pehe_lb} yields
\[
\mathbb{E}_{\theta\sim\Pi}\mathbb{E}_{\theta}^{\pi}\!\left[\mathcal{R}(\widehat\tau)^2\right]
\;\ge\;
\frac{\Delta^2}{8}\left(1-\frac{1}{4}\right)
\;\ge\;
c\cdot \frac{d}{B}
\]
for a universal constant $c>0$.
Therefore,
\[
\mathbb{E}_{\theta}^{\pi}\!\left[\mathcal{R}(\widehat\tau)\right]
\;\ge\;
\sqrt{\mathbb{E}_{\theta}^{\pi}\!\left[\mathcal{R}(\widehat\tau)^2\right]}
\;\ge\;
c_1\sqrt{\frac{d}{B}},
\]
where we used Jensen ($\mathbb{E}[\sqrt{Z}]\le \sqrt{\mathbb{E}[Z]}$) in reverse by lower-bounding $\mathbb{E}[\mathcal{R}^2]$ and then taking square roots.
Since the above bound holds for the Bayes risk under $\Pi$, it also lower bounds the minimax risk,
and since $\pi$ and $\widehat\tau$ were arbitrary, the infima over them preserve the bound.
This proves~\eqref{eq:minimax_clean}.
\end{proof}

Theorem~\ref{thm:minimax} formalizes a simple but important reality:
if there are $d$ independent degrees of freedom in $\tau(\cdot)$ (e.g., $d$ user segments with genuinely different causal responses),
then with budget $B$ one cannot estimate all of them faster than order $\sqrt{d/B}$ in PEHE.
Active learning can \emph{reallocate} samples to reduce constants and improve conditioning,
but it cannot create information out of thin air.

\begin{corollary}[Near-minimax optimality of the orthogonalized estimator]
Under Assumptions~\ref{assump:predictable}--\ref{assump:linear} and a well-conditioned design $V_0\succeq \kappa B I_d$,
the upper bound in~\eqref{eq:rate_simple} matches the minimax lower bound~\eqref{eq:minimax_clean}
up to logarithmic factors and the conditioning constant $\kappa$.
In this sense, the estimator in Algorithm~\ref{alg:theory_est} is minimax-rate optimal (up to logs) for the linear class.
\end{corollary}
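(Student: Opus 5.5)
The statement to prove is the final corollary: under a well-conditioned design, the upper bound in~\eqref{eq:rate_simple} matches the minimax lower bound~\eqref{eq:minimax_clean}. The plan is to place the two bounds side by side on a common instance and check that their ratio is bounded by logarithmic factors and $\kappa$ alone.

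\textbf{Step 1: the upper bound.} From Corollary~\ref{cor:pehe} with $\lambda=0$ and $V_0\succeq\kappa B I_d$, with probability at least $1-\delta$,
\[
\mathcal{R}(\widehat\tau_0)\le \beta_B(\delta)\sqrt{\mathrm{Tr}(\Sigma_X)/(\kappa B)}.
\]
One bookkeeping point needs attention first. Theorem~\ref{thm:finite}(ii) is stated for $\lambda>0$, and its log-determinant term degenerates at $\lambda=0$. I would therefore run the bound with a small ridge, say $\lambda=\kappa B$ or $\lambda=1$. Under $\|\phi_t\|_2\le L$, the ratio $\det(V_\lambda)/\det(\lambda I_d)$ is at most $(1+BL^2/(d\lambda))^d$. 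This gives $\beta_B(\delta)\lesssim \sigma\sqrt{d\log(1+BL^2/(d\lambda))+\log(1/\delta)}+\sqrt\lambda S$. If $\lambda$ is chosen so that $\sqrt\lambda S$ is dominated, then $\mathcal{R}\lesssim (L\sigma/\sqrt\kappa)\sqrt{d\log(B/\delta)/B}$, where $\sigma=2L_p$ depends only on $f_{\min},f_{\max}$.

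\textbf{Step 2: check the upper bound on the hard instance.} On the family~\eqref{eq:PX_uniform}, we have $L=1$ and $\Sigma_X=I_d/d$, so $\mathrm{Tr}(\Sigma_X)=1$. A balanced allocation with $N_j\approx B/d$ gives $V_0=\mathrm{diag}(N_j)\succeq (B/(2d))I_d$, so $\kappa\asymp 1/d$. The bound becomes $\beta_B(\delta)\sqrt{d/B}$. This has the same $\sqrt{d/B}$ order as the lower bound $c_1\sqrt{d/B}$ of Theorem~\ref{thm:minimax}, which holds for every $\pi$ and $\widehat\tau$ and in particular for Algorithm~\ref{alg:theory_est} with any design.

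\textbf{Step 3: pass from high probability to expectation.} The lower bound is stated in expectation, so I would convert the upper bound. Take $\delta=1/B$. Then use $\mathcal{R}\le 2$ on the failure event, since $|\widehat\tau_0|$ can be bounded after clipping to $[-1,1]$ and clipping only reduces the error. This gives $\mathbb{E}\mathcal{R}\lesssim\sqrt{d\log B/B}+2/B$.

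\textbf{Step 4: compare.} The ratio of upper to lower bound is then $O(\sqrt{\log B}\cdot L\sigma/\sqrt{\kappa})$ after normalization, which is the claimed equivalence up to logs and conditioning.

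\textbf{Main obstacle.} The delicate part is the normalization of $\kappa$. In~\eqref{eq:rate_simple} the factor $\mathrm{Tr}(\Sigma_X)\le dL^2$ combines with $\kappa$, and on the hard instance $\kappa$ scales like $1/d$ while $\mathrm{Tr}(\Sigma_X)=1$. A naive comparison would therefore report a spurious extra $\sqrt d$. I would state the matching with $\kappa$ measured relative to $\Sigma_X$, i.e.\ assume $V_0\succeq\kappa B\,\Sigma_X\cdot d/\mathrm{Tr}(\Sigma_X)$, or equivalently track $\mathrm{Tr}(\Sigma_X)/\kappa$ as the constant. Under this normalization the $d$-dependence agrees exactly and only logs and $1/\sqrt\kappa$ remain.
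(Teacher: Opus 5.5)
The gap is in Step~2 and in your resolution of the ``main obstacle,'' which is what you rely on to claim the $d$-dependence genuinely matches. Your Steps~1 and~4 reproduce the paper's one-line argument: put Corollary~\ref{cor:pehe} next to Theorem~\ref{thm:minimax} and read off a ratio $O(\sqrt{\log B}\,L\sigma/\sqrt{\kappa})$. That establishes the corollary only in its literal sense.

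In Step~1 you correctly found that $\beta_B(\delta)$ carries a factor $\sqrt{d}$: it is of order $\sigma\sqrt{d\log B+\log(1/\delta)}$. In Step~2 you then treat it as purely logarithmic. On the hard instance with balanced allocation, $\mathbb{E}_{\mathbb{P}_X}[\phi(X)^\top V_0^{-1}\phi(X)]=d/B$ exactly. So the bound of Corollary~\ref{cor:pehe} is of order $\sigma d\sqrt{\log B/B}$, a factor $\sqrt{d}$ (beyond logs) above $c_1\sqrt{d/B}$.

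This $\sqrt d$ is not an artifact of how $\kappa$ is normalized. The quantity $\beta_B(\delta)\sqrt{\mathbb{E}[\phi^\top V_0^{-1}\phi]}$ does not involve $\kappa$ at all. Moreover, your condition $V_0\succeq\kappa B\,\Sigma_X\,d/\mathrm{Tr}(\Sigma_X)$ reduces to $V_0\succeq\kappa BI_d$ on that instance (where $\Sigma_X=I_d/d$), so $\kappa\asymp 1/d$ still.

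The loss sits in the Cauchy--Schwarz step $(\phi^\top\Delta)^2\le\|\Delta\|_{V_0}^2\,\phi^\top V_0^{-1}\phi$, which charges every direction at the worst-direction rate. On the hard instance with $N_j=B/d$:
\begin{itemize}
\item $\|\widehat\theta_0-\theta_\star\|_{V_0}^2=\sum_j\bigl(\sum_{t:X_t=x^{(j)}}\varepsilon_t\bigr)^2/N_j$ really is of order $d$;
\item but $\mathcal{R}(\widehat\tau_0)^2=\|\widehat\theta_0-\theta_\star\|_{V_0}^2/B$, whereas the Cauchy--Schwarz bound multiplies by $d/B$ instead of $1/B$.
\end{itemize}
More generally, $\mathrm{Tr}(V_0)\le BL^2$ forces $\kappa\le L^2/d$, so the factor $L\sigma/\sqrt{\kappa}$ in Step~4 is always at least $\sigma\sqrt d$. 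The corollary (and the paper's ``both scale as $\sqrt{d/B}$'') therefore holds only in the sense that the ``conditioning constant'' absorbs a $\sqrt d$.

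Two of your repairs are real improvements over the paper's proof. $\beta_B(\delta)$ is indeed infinite at $\lambda=0$. And the paper never converts its high-probability bound to the expectation scale of~\eqref{eq:minimax_clean}.

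Two small corrections are needed. First, Step~1's last display needs $\mathrm{Tr}(\Sigma_X)=\mathbb{E}\|\phi(X)\|_2^2\le L^2$, not the paper's $dL^2$. Second, $\lambda=\kappa B$ cannot be the estimator's ridge: its bias term then contributes $S\sqrt{\mathrm{Tr}(\Sigma_X)/2}$ to the PEHE bound, which does not decay in $B$. Use $\kappa B$ only inside the mixture argument, where $V_0\succeq\kappa BI_d$ gives $\|M_B\|_{V_0^{-1}}\le\sqrt{2}\,\|M_B\|_{V_{\kappa B}^{-1}}$ and lets you keep $\lambda=0$.

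If you want the rates to match honestly, bypass the integrated leverage. Assume $V_0\succeq\kappa B\,\Sigma_X$ and use
\[
\mathcal{R}(\widehat\tau_0)^2=\|\widehat\theta_0-\theta_\star\|_{\Sigma_X}^2\le\frac{\|M_B\|_{V_0^{-1}}^2}{\kappa B},
\]
then bound $\|M_B\|_{V_0^{-1}}^2$ via Lemma~\ref{lem:selfnorm} with a deterministic $\lambda_0\le\lambda_{\min}(V_0)$, so that $V_0\succeq\tfrac12 V_{\lambda_0}$. On the hard instance, a design with $N_j\ge B/(2d)$ has $\kappa=1/2$ under this normalization. This yields $\mathcal{R}(\widehat\tau_0)\lesssim\sigma\sqrt{(d+\log(1/\delta))/B}$, which matches $c_1\sqrt{d/B}$ up to constants and the confidence term.
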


\begin{proof}
Combine Corollary~\ref{cor:pehe} (upper bound) and Theorem~\ref{thm:minimax} (lower bound).
Both scale as $\sqrt{d/B}$, and the only discrepancy is the logarithmic term in $\beta_B(\delta)$ (and constants depending on $\kappa$ and boundedness).
\end{proof}

\subsection{How the main results connect}

We close the theory section by clarifying the logical flow:
\begin{itemize}
    \item Lemma~\ref{lem:unbiased_pseudo} is the \textbf{foundation}: it isolates the causal ``truth'' created by randomization, even under adaptive selection.
    \item Lemmas~\ref{lem:subg}--\ref{lem:selfnorm} provide the \textbf{technical engine}: self-normalized martingale concentration that remains valid for non-i.i.d.\ adaptive designs.
    \item Theorem~\ref{thm:finite} is the \textbf{strongest finite-sample guarantee}: a deviation bound controlled by the information matrix $V_\lambda$.
    \item Corollary~\ref{cor:pehe} translates this into the paper's \textbf{target metric} (PEHE risk), revealing the central role of integrated leverage.
    \item Theorem~\ref{thm:clt} upgrades consistency to \textbf{inference}: asymptotic normality under a stabilizing adaptive design.
    \item Theorem~\ref{thm:minimax} provides the \textbf{impossibility result}: no method can beat the $\sqrt{d/B}$ scaling in general.
    \item Corollary~\ref{cor:optimality} combines the upper and lower bounds to conclude \textbf{(near) minimax optimality}.
\end{itemize}

\paragraph{(Optional) Choosing the randomization probability.}
If operational constraints permit, one may ask how to choose $p$ to minimize the pseudo-outcome variance.
The exact variance-optimal choice depends on second moments of potential outcomes:

\begin{proposition}[Variance-optimal randomization for the pseudo-outcome]
\label{prop:optimal_p}
Fix $x$ and denote $A(x)\triangleq \mathbb{E}[Y(1)^2\mid X=x]$ and $B(x)\triangleq \mathbb{E}[Y(0)^2\mid X=x]$.
Under Assumption~\ref{assump:rct_bound}, the conditional variance $\mathrm{Var}(\widetilde{Y}\mid X=x,p)$ is minimized at
\begin{equation}
    \label{eq:p_opt}
    p^\star(x) \;=\; \frac{\sqrt{A(x)}}{\sqrt{A(x)}+\sqrt{B(x)}}.
\end{equation}
In particular, if $A(x)=B(x)$ (e.g., comparable outcome scales across arms), then $p^\star(x)=1/2$.
\end{proposition}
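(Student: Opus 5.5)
The plan is to compute the conditional second moment of $\widetilde{Y}$ in closed form as a function of $p$, subtract the squared mean (which does not depend on $p$), and minimize the resulting one-dimensional function of $p$ over $(0,1)$.

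\textbf{Step 1: second moment.} By consistency, $\widetilde{Y}=T\,Y(1)/p-(1-T)\,Y(0)/(1-p)$. The cross term vanishes because $T(1-T)=0$. Hence
\[
\widetilde{Y}^2=\frac{T\,Y(1)^2}{p^2}+\frac{(1-T)\,Y(0)^2}{(1-p)^2}.
\]
Condition on $X=x$ and $p$, and use Assumption~\ref{assump:rct_bound} ($T\sim\mathrm{Bern}(p)$, independent of the potential outcomes), exactly as in the proof of Lemma~\ref{lem:unbiased_pseudo}, with squares in place of first moments. This gives
\[
\mathbb{E}[\widetilde{Y}^2\mid X=x,p]=\frac{A(x)}{p}+\frac{B(x)}{1-p}.
\]
By Lemma~\ref{lem:unbiased_pseudo}, $\mathbb{E}[\widetilde{Y}\mid X=x,p]=\tau(x)$, which does not depend on $p$. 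Therefore
\[
\mathrm{Var}(\widetilde{Y}\mid X=x,p)=\frac{A(x)}{p}+\frac{B(x)}{1-p}-\tau(x)^2,
\]
and minimizing the variance is equivalent to minimizing $g(p)\triangleq A/p+B/(1-p)$ on $(0,1)$.

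\textbf{Step 2: minimization.} Assume $A,B>0$. Then $g$ is strictly convex on $(0,1)$, since $g''(p)=2A/p^3+2B/(1-p)^3>0$, and $g\to\infty$ at both endpoints. So the unique critical point is the global minimizer. Setting
\[
g'(p)=-\frac{A}{p^2}+\frac{B}{(1-p)^2}=0
\]
gives $(1-p)/p=\sqrt{B/A}$, i.e.\ $p^\star=\sqrt{A}/(\sqrt{A}+\sqrt{B})$. This is the Neyman-allocation form in \eqref{eq:p_opt}. Substituting back gives the minimal value $g(p^\star)=(\sqrt{A}+\sqrt{B})^2$. This also follows directly from Cauchy--Schwarz: $(A/p+B/(1-p))(p+1-p)\ge(\sqrt A+\sqrt B)^2$, with equality iff $p\propto\sqrt A$. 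The special case $A=B$ immediately yields $p^\star=1/2$.

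\textbf{Main obstacle: degenerate cases and the clipping constraint.} If $A(x)=0$ or $B(x)=0$, the minimizer lies at an endpoint of $(0,1)$ and is not attained in the open interval; this case needs a separate remark. Likewise, when $p$ is restricted to $[f_{\min},f_{\max}]$, convexity of $g$ implies that the constrained minimizer is $\mathrm{Clip}(p^\star,f_{\min},f_{\max})$. I would handle both points briefly in a remark. This also justifies the clipping rule of Section~\ref{sec:method:rct}.
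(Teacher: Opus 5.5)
Your proposal is correct and follows the paper's proof essentially step for step: compute $\mathbb{E}[\widetilde{Y}^2\mid X=x,p]=A(x)/p+B(x)/(1-p)$, note that the conditional mean $\tau(x)$ does not depend on $p$, and minimize this strictly convex function through its first-order condition. Your Cauchy--Schwarz check, and your remarks on the degenerate cases $A(x)=0$ or $B(x)=0$ and on the clipped minimizer over $[f_{\min},f_{\max}]$, go slightly beyond the paper, which tacitly assumes $A(x),B(x)>0$ and optimizes over the open interval $(0,1)$.
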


\begin{proof}
Condition on $X=x$ and $p\in(0,1)$.
Using the definition of pseudo outcome and the randomization $T\sim\mathrm{Bern}(p)$,
\[
\widetilde{Y}
=
\begin{cases}
Y(1)/p, & T=1,\\
-\,Y(0)/(1-p), & T=0.
\end{cases}
\]
Hence
\[
\mathbb{E}[\widetilde{Y}^2\mid X=x,p]
=
p\cdot \mathbb{E}\!\left[\frac{Y(1)^2}{p^2}\,\middle|\,X=x\right]
+
(1-p)\cdot \mathbb{E}\!\left[\frac{Y(0)^2}{(1-p)^2}\,\middle|\,X=x\right]
=
\frac{A(x)}{p}+\frac{B(x)}{1-p}.
\]
Since $\mathbb{E}[\widetilde{Y}\mid X=x,p]=\tau(x)$ does not depend on $p$ (Lemma~\ref{lem:unbiased_pseudo}),
minimizing $\mathrm{Var}(\widetilde{Y}\mid X=x,p)$ over $p$ is equivalent to minimizing
$f(p)=A(x)/p + B(x)/(1-p)$ over $p\in(0,1)$.
This is a strictly convex function on $(0,1)$ with derivative
$f'(p)=-A(x)/p^2 + B(x)/(1-p)^2$.
Setting $f'(p)=0$ yields $(1-p)/p = \sqrt{B(x)/A(x)}$, which rearranges to~\eqref{eq:p_opt}.
The special case $A(x)=B(x)$ gives $p^\star(x)=1/2$.
\end{proof}

Proposition~\ref{prop:optimal_p} justifies $p\approx 1/2$ as a robust default when outcome scales are comparable across arms,
and motivates clipping $[f_{\min},f_{\max}]$ as a principled safeguard when $p$ must vary by covariates.

\section{Detailed Design of the Acquisition Strategy}
\label{app:acquisition_details}

In this section, we provide a rigorous formulation of our proposed active learning framework, specifically designed to enhance the Disentangled Representations for
CounterFactual Regression (DRCFR) model. We adopt DRCFR as the foundational backbone not merely for its theoretical properties, but due to its proven efficacy as our \textit{deployed production engine}, where it demonstrates superior stability in uplift estimation across diverse industrial scenarios.
Building on this robust baseline, our methodology integrates two core components: a multi-faceted acquisition function for strategic sample selection and a reverse alignment weighting mechanism for model optimization.

\subsection{Multi-Faceted Acquisition Function}

To select the most informative samples from the unlabeled candidate pool $\mathcal{D}_{\mathrm{pool}}$, we design a composite acquisition function that addresses three critical challenges in causal inference: \emph{epistemic uncertainty}, \emph{domain discrepancy}, and \emph{overlap deficit}. For any candidate sample $u \in \mathcal{D}_{\mathrm{pool}}$, the total acquisition score $S(u)$ is defined as:

\begin{equation}
    S(u) = \alpha \cdot \eta(v_u) + \beta \cdot \eta(d_u) + \gamma \cdot \eta(o_u),
\end{equation}

where $\alpha, \beta, \gamma$ are hyperparameters balancing the trade-off between uncertainty reduction, domain discrepancy, and overlap deficit. We detail each component below.

\paragraph{1. Epistemic Uncertainty Score ($v_u$).} 
To measure the model's lack of knowledge regarding the Conditional Average Treatment Effect (CATE), we employ Monte Carlo (MC) Dropout as a Bayesian approximation. Unlike standard output variance, we focus specifically on the variance of the predicted \textit{uplift}. 
Let $f_{\theta}$ denote the inference network with dropout enabled. We perform $E$ stochastic forward passes for each sample $u$, obtaining a set of CATE predictions $\{\hat{\tau}_j(u)\}_{j=1}^E$. The uncertainty score is quantified as the variance of these predictions:

\begin{equation}
    v_u = \mathrm{Var}\left( \{ \hat{\tau}_j(u) \}_{j=1}^E \right) = \frac{1}{E} \sum_{j=1}^{E} \left( \hat{\tau}_j(u) - \bar{\tau}(u) \right)^2,
\end{equation}

where $\bar{\tau}(u)$ is the empirical mean. A higher variance implies high epistemic uncertainty, suggesting that obtaining the ground truth label for $u$ would yield significant information gain.

\paragraph{2. Domain Discrepancy Score ($d_u$).}
To ensure the selected samples effectively cover the feature space of the target pool (exploration) and mitigate domain shift, we introduce a Domain Classifier. This module, parameterized by $\xi$, discriminates between the \emph{current} training distribution ($\mathcal{D}_{\mathrm{current}} = \mathcal{D}_{\mathrm{obs}} \cup \mathcal{D}_{\mathrm{rct}}$) and the unselected pool ($\mathcal{D}_{\mathrm{target}} = \mathcal{D}_{\mathrm{pool}}$).

Crucially, the classifier operates on the disentangled representation $\phi(u) \in \mathbb{R}^{d_{rep}}$ extracted from the DRCFR backbone, rather than raw features. Structurally, $g_\xi$ is designed as a two-layer Multilayer Perceptron (MLP). It consists of a linear transformation to a hidden dimension $d_{hidden}$, followed by an Exponential Linear Unit (ELU) activation and a Dropout layer. The final layer maps the hidden features to a scalar logit. Formally, the forward pass for a sample $u$ is defined as:

\begin{equation}
    \mathbf{h} = \mathrm{Dropout}(\mathrm{ELU}(\mathbf{W}_1 \phi(u) + \mathbf{b}_1)).
\end{equation}
\begin{equation}
    g_\xi(\phi(u)) = \mathbf{W}_2 \mathbf{h} + b_2.
\end{equation}

where $\mathbf{W}_1$ and $\mathbf{W}_2$ are learnable weights. The discrepancy score is then defined as the predicted probability of belonging to the target pool (label 1):

\begin{equation}
    d_u = \mathbb{P}(\mathrm{domain}=1 \mid \phi(u); \xi) = \sigma(g_\xi(\phi(u)))
\end{equation}

where $\sigma(\cdot)$ is the sigmoid function. A higher $d_u$ indicates that $u$ lies in a region under-represented by the current training set, necessitating active sampling for distribution alignment.

\paragraph{3. Overlap Deficit Score ($o_u$).}
To explicitly target samples heavily affected by selection bias in the observational data, we define a metric based on propensity extremity. We utilize a propensity head $\hat{e}_{\mathrm{obs}}(\phi(u))$, pre-trained strictly on $\mathcal{D}_{\mathrm{obs}}$, to capture the historical assignment mechanism. The score targets samples where the observational policy is deterministic (i.e., propensity close to 0 or 1):

\begin{equation}
    o_u = \left| \hat{e}_{\mathrm{obs}}(\phi(u)) - 0.5 \right| \times 2.
\end{equation}

Samples with $o_u \approx 1$ represent individuals who, historically, almost exclusively received one specific treatment (either control or treated). Querying these samples in an RCT context (where $e(\phi(u))=0.5$) provides critical counterfactual evidence to correct the learned bias.

\subsection{Optimization via Counterfactual Alignment and Sample Reweighting}

Once a batch of samples is selected and labeled (forming $\mathcal{D}_{\mathrm{rct}}$), we combine them with $\mathcal{D}_{\mathrm{obs}}$ to update the DRCFR model. Merely pooling the data is suboptimal; to maximize the utility of the RCT samples, we implement a \textbf{Counterfactual Alignment} strategy. This approach operationalizes the active sampling by reweighting instances that contradict observational bias, thereby prioritizing the acquisition of "missing" counterfactual information.

We define the \textit{Propensity-Assignment Gap}, $\Delta(u, t)$, as the absolute divergence between the realized randomized treatment assignment $t_{\mathrm{rct}}$ and the historical observational propensity $\hat{e}_{\mathrm{obs}}(\phi(u))$:

\begin{equation}
    \Delta(u, t_{\mathrm{rct}}) = |t_{\mathrm{rct}} - \hat{e}_{\mathrm{obs}}(\phi(u))|.
\end{equation}

A large gap (e.g., $\Delta > 0.5$) signifies that the RCT assignment acts as a counter-balance to the observational tendency. For instance, assigning control ($t=0$) to a patient who historically had a high probability of being treated ($\hat{e}_{\mathrm{obs}} \approx 1$) exposes outcomes in a region of the joint distribution that was previously a "blind spot."

To focus model training on these information-rich samples and debias the uplift estimator under a finite RCT budget, we apply the following weights $w_i$ to the loss function:

\begin{equation}
    w_i = 
    \begin{cases} 
    1.0, & \text{if } \Delta(u_i, t_i) > 0.5 \quad (\text{Counterfactual Alignment / Gold Sample}). \\
    0.2, & \text{otherwise} \quad (\text{Redundant / Silver Sample}).
    \end{cases}
\end{equation}

\textbf{Rationale:} Gold Samples (where the gap is large) directly fill the counterfactual void in $\mathcal{D}_{\mathrm{obs}}$, providing high information gain. Conversely, Silver Samples align with historical bias and merely reinforce existing knowledge; thus, they are down-weighted to prevent redundancy and ensure gradient updates are dominated by the counterfactual alignment signal.

Finally, the total training objective for the DRCFR backbone incorporates these alignment weights:

\begin{equation}
    \mathcal{L}_{DRCFR} = \sum_{i \in \mathcal{D}_{\mathrm{current}}} w_i \cdot \ell_{\mathrm{pred}}(y_i, \hat{y}_i) + \lambda_{\mathrm{disc}} \cdot \mathrm{MMD}(\phi(u)_{|t=0}, \phi(u)_{|t=1}) + \lambda_{\pi} \cdot \mathcal{L}_{\mathrm{prop}}.
\end{equation}

where the first term is the weighted factual prediction loss, the second term enforces representation balance via Maximum Mean Discrepancy (MMD), and the third term is the propensity regularization. In our experiments, we set the regularization coefficients $\lambda_{\mathrm{disc}} = \lambda_{\pi} = 1.0$.

\section{Experimental Setup}
\subsection{Detailed Description of the Dataset} \label{sec:data_details}

To comprehensively assess the active learning framework under extreme distribution shifts, realistic deployment scenarios, and temporal variations, we organize the data into distinct components following a strict \textbf{Out-of-Time (OOT) evaluation protocol} spanning from Nov.~25 to Jan.~07:

\begin{itemize}
    \item \textbf{Biased Observational Set ($\mathcal{D}_{\mathrm{obs}}^{\mathrm{bias}}$):} 
    Collected from the training phase (\textbf{Nov.~25 -- Dec.~25}), we curate a subset of approximately $4.5$ million historical OBS logs. Generated by deterministic production policies, this dataset exhibits severe selection bias and violations of the \textit{overlap assumption}. We utilize this dataset for: (i) comparative analysis between active sampling strategies and random baselines, and (ii) ablation studies on our proposed acquisition functions to evaluate robustness against sample imbalance.
    
    \item \textbf{Full-Scale Observational Set ($\mathcal{D}_{\mathrm{obs}}^{\mathrm{full}}$):} 
    To validate the framework's efficacy in a real business environment, we employ a larger dataset $\mathcal{D}_{\mathrm{obs}}^{\mathrm{full}}$ of approximately $6.5$ million logs from the same training period (\textbf{Nov.~25 -- Dec.~25}). Compared to $\mathcal{D}_{\mathrm{obs}}^{\mathrm{bias}}$, it shows much lower distributional shift and sample imbalance. Experiments on this set focus on evaluating performance sensitivity across varying ratios of OBS data.

    \item \textbf{RCT Candidate Pool ($\mathcal{D}_{\mathrm{pool}}$):} 
    Coinciding with the training phase (\textbf{Nov.~25 -- Dec.~25}), this dataset comprises approximately $1.2$ million samples representing the candidate pool for RCTs. Under our experimental setting, only covariate information $X$ is available beforehand, while treatment $T$ and outcome $Y$ remain \textit{masked} until explicitly queried. The active learner selectively queries labels subject to budgets $B \in \{10\text{k}, 50\text{k}, \dots, 500\text{k}\}$ to rectify covariate distribution shifts, yielding the labeled dataset $\mathcal{D}_{\mathrm{rct}}$.

    \item \textbf{RCT Test Set ($\mathcal{D}_{\mathrm{rct}}^{\mathrm{test}}$):}
    To evaluate generalization capability beyond historical biases, we employ a dedicated test set of approximately $1.3$ million samples collected from the future period (\textbf{Dec.~26 -- Jan.~07}). Unlike the training data, this set consists exclusively of randomized experimental data, representing an unbiased sample of the full population's covariate distribution.
\end{itemize}

\subsection{Training Settings}

\paragraph{Hyperparameter Settings. } In our real-world setting, excessive training epochs often lead to overfitting or result in the model learning to distinguish between the treatment and control groups rather than capturing the true uplift signal. Thus, we train all models for \textbf{5 epochs} to ensure robustness and use the \textbf{Adam} optimizer with \textbf{learning rate 0.001} and  \textbf{batch size 512}. We detail the other hyperparameter settings used in our experiments in Table~\ref{tab:hyperparameters}.

\begin{table}[h]
    \centering
    \caption{Hyperparameter specifications for the proposed active learning framework.}
    \label{tab:hyperparameters}
    \begin{tabular}{l c l}
        \toprule
        \textbf{Parameter} & \textbf{Value} & \textbf{Description} \\
        \midrule
        \multicolumn{3}{l}{\textit{\textbf{Acquisition Function}}} \\
        $\alpha$ (Uncertainty weight) & $0.5$ & Weight for epistemic uncertainty ($v_u$) \\
        $\beta$ (Discrepancy weight) & $1.0$ & Weight for distributional discrepancy ($d_u$) \\
        $\gamma$ (Bias weight) & $0.7$ & Weight for propensity-based bias ($o_u$) \\
        $K$ (MC Iterations) & $15$ & Number of stochastic forward passes for MC-Dropout \\
        $M$ (Sample Batch Size) & $\{2\text{k},10\text{k},20\text{k},60\text{k},100\text{k}\}$ & Number of candidates selected per active learning round \\
        \midrule
        \multicolumn{3}{l}{\textit{\textbf{Reverse Alignment Weighting}}} \\
        $w_{\text{gold}}$ & $1.0$ & Weight for samples with high counterfactual gap ($>0.5$) \\
        $w_{\text{silver}}$ & $0.2$ & Weight for samples with low counterfactual gap ($\le 0.5$) \\
        $w_{\text{obs}}$ & $0.3$ & Weight for observational samples in the training batch \\
        \midrule
        \multicolumn{3}{l}{\textit{\textbf{Domain Classifier \& Optimization}}} \\
        Representation Dim ($\mathbb{R}^{d_{rep}}$) & $128$ & Dimension of the shared representation $\phi(u)$ \\
        Hidden Units & $64$ & Hidden layer size of the Domain Classifier \\
        Dropout Rate & $0.2$ & Dropout rate for both main model and Domain Classifier \\
        Domain LR & $1e-3$ & Learning rate for the Domain Classifier \\
        Domain Steps & $100$ & Max update steps for Domain Classifier per round \\
        \midrule
        \multicolumn{3}{l}{\textit{\textbf{Active Learning Loop}}} \\
        RCT Budget ($B$) & $\{10\text{k},50\text{k},100\text{k},300\text{k},500\text{k}\}$ & Maximum total number of RCT samples allowed \\
        Max Rounds & $5$ & Number of active learning interactions \\
        Epochs per Round & $1$ & Number of training epochs within each active round \\
        \bottomrule
    \end{tabular}
\end{table}

\paragraph{Active Sampling and Training Protocol.} 
During the iterative learning phase, the model selects a subset of informative samples from $\mathcal{D}_{\mathrm{pool}}$ based on the acquisition function. We denote this actively queried subset as $\mathcal{D}_{\mathrm{rct}}$, representing the Randomized Controlled Trial data where unbiased treatment effects are observed. Consequently, the final model is trained on the union of the biased historical data and the unbiased queried data:
\begin{equation}
    \mathcal{D}_{\mathrm{train}} = \mathcal{D}_{\mathrm{obs}} \cup \mathcal{D}_{\mathrm{rct}}.
\end{equation}
This hybrid construction allows the model to leverage the scale of $\mathcal{D}_{\mathrm{obs}}$ while correcting for bias using the high-quality signals from $\mathcal{D}_{\mathrm{rct}}$.

\paragraph{Computational Infrastructure.}
All experiments are conducted on NVIDIA GPUs with mixed-precision support. We deploy computationally intensive active sampling strategies on high-performance units (L20 and RTX A6000), while assigning random sampling baselines to P40s.

\section{Evaluation Metrics}

\subsection{Normalized Area Under the Uplift Curve (AUUC)}

To evaluate the performance of our uplift model, we assess the ranking quality across the \textbf{entire test set} $\mathcal{D}_{test}$ of size $N$. Let $T_i \in \{0, 1\}$ represent the treatment assignment and $Y_i^{obs}$ denote the observed outcome for individual $i$.

The evaluation process begins by ranking the test instances based on their predicted uplift scores, $\hat{\tau}(x)$, from highest to lowest. Let the sequence of indices $i_1, i_2, \dots, i_N$ correspond to this sorted order, satisfying $\hat{\tau}(x_{i_1}) \geq \hat{\tau}(x_{i_2}) \geq \dots \geq \hat{\tau}(x_{i_N})$.

The \textbf{Uplift Curve} value at rank $k$ ($1 \leq k \leq N$), denoted as $f(k)$, quantifies the cumulative incremental gain obtained by targeting the top $k$ individuals. It is defined as \cite{pmlr-v67-gutierrez17a}:
\begin{equation}
    f(k) = \left( \frac{Y_k^T}{N_k^T} - \frac{Y_k^C}{N_k^C} \right) (N_k^T + N_k^C),
\end{equation}
where $Y_k^T = \sum_{j=1}^k Y_{i_j}^{obs} T_{i_j}$ and $N_k^T = \sum_{j=1}^k T_{i_j}$ are the cumulative outcome sum and the count of treated individuals among the top $k$ samples, respectively (with $Y_k^C$ and $N_k^C$ defined analogously for the control group using $1-T_{i_j}$).

To facilitate model comparison, we calculate the \textbf{Normalized AUUC}. This metric approximates the integral of the normalized uplift curve (over the population percentile from 0 to 1) by averaging the scaled gains across all ranks:
\begin{equation}
    \text{AUUC} \approx \int_{0}^{1} \frac{f(x)}{|f(N)|} \, dx \approx \frac{1}{N} \sum_{k=1}^N \frac{f(k)}{|f(N)|}.
\end{equation}
Here, $f(N)$ corresponds to the global lift observed on the entire test set. A higher normalized AUUC indicates that the model effectively prioritizes individuals with the highest treatment effects.

\section{Experimental Results and Analysis}
\label{sec:additional_exp}

In this appendix, we detail the experimental protocols for the three settings introduced in the main text. Furthermore, we report supplementary findings regarding data scalability and conduct a comprehensive ablation analysis.

\subsection{Exp 1: Performance under Extreme Selection Bias}

In this scenario, we focus on evaluating the model's ability to rectify learned policies under severe selection bias using a limited budget of active feedback. 
We initialize all base models using the \textbf{Biased Observational Set} ($\mathcal{D}_{\mathrm{obs}}^{\mathrm{bias}}$). 

To simulate the active learning process, we iteratively query labels from the \textbf{RCT Candidate Pool} ($\mathcal{D}_{\mathrm{pool}}$). The acquisition budget $B$ is varied across the range $\{10\mathrm{k}, 50\mathrm{k},100\mathrm{k},300\mathrm{k}, 500\mathrm{k}\}$ to analyze the performance trajectory as the ratio of experimental data increases. 
\textbf{Upon completion of retraining on the augmented dataset in each round}, the updated model is evaluated on the out-of-sample \textbf{RCT Test Set} ($\mathcal{D}_{\mathrm{rct}}^{\mathrm{test}}$). We report the performance of various base learners across these budget levels. The comparative results are visualized in Figure~\ref{fig:active_base_model} and detailed numerical metrics are provided in Table~\ref{tab:scenario1}.

\begin{table*}[t]
  \centering
  \caption{\textbf{Comparison of acquisition strategies initialized with $\mathcal{D}_{\mathrm{obs}}^{\mathrm{bias}}$.} AUUC trajectories of \emph{Active Learning} vs. \emph{Random Sampling} across increasing RCT budgets. Note that models are pre-trained on the biased observational set and evaluated on the held-out RCT test set.}
  \label{tab:scenario1}
  
  \fontsize{9pt}{11pt}\selectfont
  
  \setlength{\tabcolsep}{20pt}
  
  \begin{tabular}{lccccc}
    \toprule
    \multirow{2}{*}{\textbf{Model Setting}} & \multicolumn{5}{c}{\textbf{RCT Sample Size}} \\
    \cmidrule(lr){2-6}
     & \textbf{10k} & \textbf{50k} & \textbf{100k} & \textbf{300k} & \textbf{500k} \\
    \midrule
    
    \multicolumn{6}{l}{\textit{Backbone: DRCFR}} \\
    \hspace{1em} Random Sampling & 0.644 & 0.645 & 0.647 & 0.648 & 0.660 \\
    \hspace{1em} Active Learning (Ours) & \textbf{0.654} & \textbf{0.657} & \textbf{0.660} & \textbf{0.662} & \textbf{0.664} \\
    \addlinespace[0.5em]
    
    \multicolumn{6}{l}{\textit{Backbone: DESCN}} \\
    \hspace{1em} Random Sampling & 0.656 & 0.645 & 0.652 & 0.652 & 0.668 \\
    \hspace{1em} Active Learning (Ours) & \textbf{0.666} & \textbf{0.669} & \textbf{0.671} & \textbf{0.671} & 0.666 \\
    \addlinespace[0.5em]
    
    \multicolumn{6}{l}{\textit{Backbone: DragonNet}} \\
    \hspace{1em} Random Sampling & 0.650 & 0.664 & 0.674 & 0.648 & 0.670 \\
    \hspace{1em} Active Learning (Ours) & \textbf{0.663} & \textbf{0.670} & 0.667 & \textbf{0.669} & \textbf{0.673} \\
    
    \bottomrule
  \end{tabular}
\end{table*}

\subsection{Exp 2: Robustness to Observational Data Scale}

To validate the criticality of active sampling within our real-world production environment, we conduct a sensitivity analysis using the \textbf{Full-Scale Set} ($\mathcal{D}_{\mathrm{obs}}^{\mathrm{full}}$). 
We simulate varying data availability scenarios by down-sampling the historical business logs at different ratios $\rho$ and mixing them with the actively acquired samples:
\begin{equation}
    \mathcal{D}_{\mathrm{train}} = \text{Sample}(\mathcal{D}_{\mathrm{obs}}^{\mathrm{full}}, \rho) \cup \mathcal{D}_{\mathrm{rct}}
\end{equation}

Table~\ref{tab:scenario2} reports the AUUC performance across varying volumes and distributions of observational data. We observe that the performance advantage of our Active Sampling strategy over Random Sampling is most pronounced in \textbf{data-scarce regimes} (e.g., $0.2 \mathcal{D}_{\mathrm{obs}}^{\mathrm{full}}$) and the \textbf{highly biased} ``Group Sampling'' scenario. For instance, in the $0.2$ setting with $50$k RCT samples, active sampling surpasses the baseline by a significant margin ($0.649$ vs. $0.635$). This indicates that when observational data fails to cover the support adequately, our discrepancy-aware acquisition function effectively targets "blind spots" to repair coverage. Conversely, as the observational set approaches completeness ($0.8 \mathcal{D}_{\mathrm{obs}}^{\mathrm{full}}$), the gap naturally narrows, confirming that the proposed method yields the highest marginal utility when pre-existing knowledge is limited.

\begin{table*}[h]
  \centering
  \caption{AUUC results of the DRCFR model: Comparison between \emph{Active Sampling} and \emph{Random Sampling} under different proportions of Observational (OBS) Data.}
  \label{tab:scenario2}
  
  \fontsize{9pt}{11pt}\selectfont
  
  \setlength{\tabcolsep}{14pt}
  
  \begin{tabular}{llccccc}
    \toprule
    \multirow{2}{*}{\textbf{Data Setting}} & \multirow{2}{*}{\textbf{Sampling}} & \multicolumn{5}{c}{\textbf{RCT Sample Size}} \\
    \cmidrule(lr){3-7}
     & & \textbf{10k} & \textbf{50k} & \textbf{100k} & \textbf{300k} & \textbf{500k} \\
    \midrule
    
    \multirow{2}{*}{$0.2 \mathcal{D}_{\mathrm{obs}}^{\mathrm{full}} + \mathcal{D}_{\mathrm{rct}}$} 
      & Active & 0.639 & \textbf{0.649} & \textbf{0.654} & \textbf{0.648} & \textbf{0.656} \\
      & Random & \textbf{0.640} & 0.635 & 0.620 & 0.642 & 0.648 \\
    \midrule
    
    \multirow{2}{*}{$0.3 \mathcal{D}_{\mathrm{obs}}^{\mathrm{full}} + \mathcal{D}_{\mathrm{rct}}$} 
      & Active & \textbf{0.650} & \textbf{0.652} & \textbf{0.658} & \textbf{0.661} & \textbf{0.656} \\
      & Random & 0.633 & 0.638 & 0.635 & 0.650 & 0.653 \\
    \midrule
    
    \multirow{2}{*}{$0.5 \mathcal{D}_{\mathrm{obs}}^{\mathrm{full}} + \mathcal{D}_{\mathrm{rct}}$} 
      & Active & \textbf{0.636} & \textbf{0.648} & \textbf{0.650} & \textbf{0.649} & \textbf{0.654} \\
      & Random & 0.635 & 0.645 & 0.648 & 0.634 & 0.652 \\
    \midrule
    
    \multirow{2}{*}{$0.8 \mathcal{D}_{\mathrm{obs}}^{\mathrm{full}} + \mathcal{D}_{\mathrm{rct}}$} 
      & Active & \textbf{0.648} & 0.648 & \textbf{0.661} & \textbf{0.663} & \textbf{0.660} \\
      & Random & 0.633 & \textbf{0.653} & 0.658 & 0.656 & 0.659 \\
    \midrule
    
    \multirow{2}{*}{\shortstack[l]{Group Sampling \\ $\mathcal{D}_{\mathrm{obs}}^{\mathrm{full}} + \mathcal{D}_{\mathrm{rct}}$}} 
      & Active & \textbf{0.643} & 0.649 & \textbf{0.650} & \textbf{0.659} & \textbf{0.662} \\
      & Random & 0.640 & 0.649 & 0.646 & 0.653 & 0.654 \\
      
    \bottomrule
  \end{tabular}
  
  \begin{minipage}{0.95\linewidth}
    \footnotesize
    \textit{Note:} "Group Sampling" implies a biased observational setting where we retain 100\% of the Treated group and sample only 25\% of the Control group from $\mathcal{D}_{\mathrm{obs}}^{\mathrm{full}}$, simulating strong selection bias. Bold values indicate the higher performance in each pair.
  \end{minipage}
\end{table*}

\subsection{Exp 3: Ablation Study on Acquisition Components}

To validate the necessity of each component in our composite acquisition function $S(u)$ (defined in Eq.~\ref{eq:acq_score}), we conduct an ablation study by evaluating the performance of DRCFR under different configurations: single-component scores, pairwise combinations, and the full strategy. The results, reported in Table~\ref{tab:ablation_study}, offer several key insights:

\begin{itemize}
    \item \textbf{Uncertainty Limitations.} Relying solely on \textit{Uncertainty} ($v_u$) fails to beat the random baseline at low budgets ($10$k samples: $0.638$ vs. $0.644$). This suggests that without distributional guidance, pure uncertainty sampling risks wasting budget on outliers during the early cold-start phase.
    
    \item \textbf{Value of Overlap Deficit.} Among single components, \textit{Overlap Deficit} ($o_u$) proves most effective (e.g., $0.657$ at $100$k). This confirms that targeting samples with extreme historical bias is critical for repairing positivity violations in causal inference.

    \item \textbf{Pairwise Complementarity.} Combining any two components consistently outperforms single metrics. Notably, the combination of \textit{Discrepancy} and \textit{Overlap Deficit} ($d_u + o_u$) emerges as a strong contender (e.g., matching the full score at $500$k), indicating that simultaneously addressing domain shift and selection bias captures the majority of the information gain.
    
    \item \textbf{Synergy.} The \textbf{Full Strategy} ($v_u + d_u + o_u$) achieves the most robust performance across all sample sizes. This demonstrates that adding \textit{Uncertainty} to the mix provides the final edge, ensuring the active learner seeks not just representative and unbiased samples, but also those with high epistemic value.
\end{itemize}

\begin{table*}[h]
  \centering
  \caption{Ablation Study: Comparison of Active Sampling (different acquisition score components) vs. Random Sampling baseline. The metric reported is AUUC on the DRCFR model.}
  \label{tab:ablation_study}
  
  \fontsize{9pt}{11pt}\selectfont
  
  \setlength{\tabcolsep}{12pt}
  
  \begin{tabular}{lccccc}
    \toprule
    \multirow{2}{*}{\textbf{Acquisition Strategy}} & \multicolumn{5}{c}{\textbf{RCT Sample Size}} \\
    \cmidrule(lr){2-6}
     & \textbf{10k} & \textbf{50k} & \textbf{100k} & \textbf{300k} & \textbf{500k} \\
    \midrule
    
    \textbf{Baseline (Random)} & 0.644 & 0.645 & 0.647 & 0.648 & 0.660 \\
    \midrule
    
    \multicolumn{6}{l}{\textit{Single-Component}} \\
    \hspace{1em} Uncertainty ($v_u$) & 0.638 & 0.644 & 0.641 & 0.648 & 0.660 \\
    \hspace{1em} Discrepancy ($d_u$) & 0.641 & 0.641 & 0.649 & 0.655 & 0.652 \\
    \hspace{1em} Overlap Deficit ($o_u$)       & 0.640 & 0.647 & 0.657 & 0.658 & 0.660 \\
    \midrule
    
    \multicolumn{6}{l}{\textit{Pairwise Combinations}} \\
    \hspace{1em} $v_u + d_u$ & 0.648 & 0.653 & 0.653 & 0.651 & 0.660 \\
    \hspace{1em} $v_u + o_u$ & 0.650 & 0.654 & 0.655 & 0.659 & 0.659 \\
    \hspace{1em} $d_u + o_u$ & 0.651 & 0.656 & 0.654 & 0.652 & 0.664 \\
    \midrule
    
    \multicolumn{6}{l}{\textit{Full Strategy}} \\
    \hspace{1em} \textbf{Full Score ($v_u + d_u + o_u$)} & \textbf{0.654} & \textbf{0.657} & \textbf{0.660} & \textbf{0.662} & \textbf{0.664} \\
    
    \bottomrule
  \end{tabular}
\end{table*}

%% file: example_paper.bib
@article{cheng2021adaptive,
  title={Adaptive combination of randomized and observational data},
  author={Cheng, David and Cai, Tianxi},
  journal={arXiv preprint arXiv:2111.15012},
  year={2021}
}

@article{hatt2022combining,
  title={Combining observational and randomized data for estimating heterogeneous treatment effects},
  author={Hatt, Tobias and Berrevoets, Jeroen and Curth, Alicia and Feuerriegel, Stefan and van der Schaar, Mihaela},
  journal={arXiv preprint arXiv:2202.12891},
  year={2022}
}

@article{colnet2024causal,
  title={Causal inference methods for combining randomized trials and observational studies: a review},
  author={Colnet, B{\'e}n{\'e}dicte and Mayer, Imke and Chen, Guanhua and Dieng, Awa and Li, Ruohong and Varoquaux, Ga{\"e}l and Vert, Jean-Philippe and Josse, Julie and Yang, Shu},
  journal={Statistical science},
  volume={39},
  number={1},
  pages={165--191},
  year={2024},
  publisher={Institute of Mathematical Statistics}
}

@article{kato2024active,
  title={Active adaptive experimental design for treatment effect estimation with covariate choices},
  author={Kato, Masahiro and Oga, Akihiro and Komatsubara, Wataru and Inokuchi, Ryo},
  journal={arXiv preprint arXiv:2403.03589},
  year={2024}
}

@inproceedings{zhangactive,
  title={Active Treatment Effect Estimation via Limited Samples},
  author={Zhang, Zhiheng and Wang, Haoxiang and Li, Haoxuan and Lin, Zhouchen},
  year     = {2025},
  booktitle={Forty-second International Conference on Machine Learning}
}

@misc{wen2025enhancingtreatmenteffectestimation,
      title={Enhancing Treatment Effect Estimation via Active Learning: A Counterfactual Covering Perspective}, 
      author={Hechuan Wen and Tong Chen and Mingming Gong and Li Kheng Chai and Shazia Sadiq and Hongzhi Yin},
      year={2025},
      eprint={2505.05242},
      archivePrefix={arXiv},
      primaryClass={cs.LG},
      url={https://arxiv.org/abs/2505.05242}, 
}

@misc{gao2025causalepigpredictionorientedactivelearning,
      title={Causal-EPIG: A Prediction-Oriented Active Learning Framework for CATE Estimation}, 
      author={Erdun Gao and Jake Fawkes and Dino Sejdinovic},
      year={2025},
      eprint={2509.21866},
      archivePrefix={arXiv},
      primaryClass={stat.ML},
      url={https://arxiv.org/abs/2509.21866}, 
}

@article{addanki2022sample,
  title={Sample constrained treatment effect estimation},
  author={Addanki, Raghavendra and Arbour, David and Mai, Tung and Musco, Cameron and Rao, Anup},
  journal={Advances in Neural Information Processing Systems},
  volume={35},
  pages={5417--5430},
  year={2022}
}

@article{zhu2022active,
  title={Active learning with neural networks: Insights from nonparametric statistics},
  author={Zhu, Yinglun and Nowak, Robert},
  journal={Advances in Neural Information Processing Systems},
  volume={35},
  pages={142--155},
  year={2022}
}

@article{rubin2005causal,
  title={Causal inference using potential outcomes: Design, modeling, decisions},
  author={Rubin, Donald B},
  journal={Journal of the American statistical Association},
  volume={100},
  number={469},
  pages={322--331},
  year={2005},
  publisher={Taylor \& Francis}
}

@inproceedings{shalit2017estimating,
  title={Estimating individual treatment effect: generalization bounds and algorithms},
  author={Shalit, Uri and Johansson, Fredrik D and Sontag, David},
  booktitle={International conference on machine learning},
  pages={3076--3085},
  year={2017},
  organization={PMLR}
}

@article{Hill01012011,
    author = {Jennifer L. Hill},
    title = {Bayesian Nonparametric Modeling for Causal Inference},
    journal = {Journal of Computational and Graphical Statistics},
    volume = {20},
    number = {1},
    pages = {217--240},
    year = {2011},
    publisher = {Taylor \& Francis},
    URL = {https://doi.org/10.1198/jcgs.2010.08162},
}

@inproceedings{NEURIPS2022_675e371e,
    author = {Toth, Christian and Lorch, Lars and Knoll, Christian and Krause, Andreas and Pernkopf, Franz and Peharz, Robert and von K\"{u}gelgen, Julius},
    booktitle = {Advances in Neural Information Processing Systems},
    pages = {16261--16275},
    publisher = {Curran Associates, Inc.},
    title = {Active Bayesian Causal Inference},
    volume = {35},
    year = {2022}
}

@inproceedings{gal2016dropout,
    title={Dropout as a bayesian approximation: Representing model uncertainty in deep learning},
    author={Gal, Yarin and Ghahramani, Zoubin},
    booktitle={international conference on machine learning},
    pages={1050--1059},
    year={2016},
    organization={PMLR}
}

@article{lakshminarayanan2017simple,
    title={Simple and scalable predictive uncertainty estimation using deep ensembles},
    author={Lakshminarayanan, Balaji and Pritzel, Alexander and Blundell, Charles},
    journal={Advances in neural information processing systems},
    volume={30},
    year={2017}
}

@article{ghadiri2023finite,
  title={Finite population regression adjustment and non-asymptotic guarantees for treatment effect estimation},
  author={Ghadiri, Mehrdad and Arbour, David and Mai, Tung and Musco, Cameron and Rao, Anup B},
  journal={Advances in Neural Information Processing Systems},
  volume={36},
  pages={74180--74212},
  year={2023}
}

@inproceedings{shi2019adapting,
    title={Adapting neural networks for the estimation of treatment effects},
    author={Shi, Claudia and Blei, David M. and Veitch, Victor},
    booktitle={Advances in Neural Information Processing Systems},
    volume={32},
    year={2019}
}

@inproceedings{zhong2022descn,
    title={Descn: Deep entire space cross networks for individual treatment effect estimation},
    author={Zhong, Kailiang and Xiao, Fengtong and Ren, Yan and Liang, Yaorong and Yao, Wenqing and Yang, Xiaofeng and Cen, Ling},
    booktitle={Proceedings of the 28th ACM SIGKDD conference on knowledge discovery and data mining},
    pages={4612--4620},
    year={2022}
}

@inproceedings{cheng2022learning,
    title={Learning disentangled representations for counterfactual regression via mutual information minimization},
    author={Cheng, Mingyuan and Liao, Xinru and Liu, Quan and Ma, Bin and Xu, Jian and Zheng, Bo},
    booktitle={Proceedings of the 45th International ACM SIGIR Conference on Research and Development in Information Retrieval},
    pages={1802--1806},
    year={2022}
}

@InProceedings{pmlr-v67-gutierrez17a,
    title = 	 {Causal Inference and Uplift Modelling: A Review of the Literature},
    author = 	 {Gutierrez, Pierre and Gérardy, Jean-Yves},
    booktitle = 	 {Proceedings of The 3rd International Conference on Predictive Applications and APIs},
    pages = 	 {1--13},
    year = 	 {2017},
    editor = 	 {Hardgrove, Claire and Dorard, Louis and Thompson, Keiran and Douetteau, Florian},
    volume = 	 {67},
    series = 	 {Proceedings of Machine Learning Research},
    month = 	 {11--12 Oct},
    publisher =    {PMLR},
}
